\pdfoutput=1

\RequirePackage[l2tabu, orthodox]{nag}
\documentclass[12pt]{article}
\usepackage[T1]{fontenc}
\usepackage{microtype}

\usepackage{amsmath, amssymb, dsfont, amsthm, graphicx}
\usepackage[nothm]{yk}
\usepackage{tikz}
\usepackage{float}
\usepackage{enumitem}
\usepackage{algorithm,setspace}
\usepackage[noend]{algpseudocode}
\algrenewcommand\algorithmicrequire{\textbf{Input:}}
\algrenewcommand\algorithmicensure{\textbf{Output:}}

\usepackage{amsmath}
\usepackage{amssymb}
\usepackage{amsthm}
\usepackage{mathtools}
\usepackage{thmtools}
\usepackage{mathrsfs}
\usepackage{tikz}

\usepackage{libertine}      

\usepackage{titlesec}

\usepackage{hyperref}

\titleformat*{\section}{\Large\bfseries}



\usepackage[english]{babel}
\usepackage[parfill]{parskip}
\usepackage{afterpage}
\usepackage{framed}
\usepackage{nicefrac}

%
{\endMakeFramed}



\newcounter{parcount}



\usepackage{graphicx}
\usepackage[labelfont=bf]{caption}
\usepackage[format=hang]{subcaption}
\usepackage{wrapfig}

\usepackage{booktabs}
\usepackage{multirow}
\usepackage{longtable}

\usepackage[sort]{natbib}
\bibliographystyle{abbrvnat}
\usepackage{bibunits}


\usepackage{hyperref}

\usepackage[nameinlink]{cleveref}

\usepackage[acronym,nowarn]{glossaries}



\usepackage{listings}
\usepackage{lstbayes}
\lstset{language=C++,
  keywordstyle=\color{MidnightBlue}\bfseries,
  keywordstyle=[2]\color{BrickRed}\bfseries,
  keywordstyle=[3]\color{Violet}
}
\lstdefinestyle{mystyle}{
    commentstyle=\color{OliveGreen},
    numberstyle=\tiny\color{black!60},
    stringstyle=\color{BrickRed},
    basicstyle=\ttfamily\scriptsize,
    breakatwhitespace=false,
    breaklines=true,
    captionpos=b,
    keepspaces=true,
    numbers=none,
    numbersep=5pt,
    showspaces=false,
    showstringspaces=false,
    showtabs=false,
    tabsize=2
}
\lstset{style=mystyle}

\usepackage{enumitem}

\crefname{lemma}{lemma}{lemmas}
\Crefname{lemma}{Lemma}{Lemmas}
\crefname{theorem}{theorem}{theorems}
\Crefname{theorem}{Theorem}{Theorems}
\crefname{prop}{proposition}{propositions}
\Crefname{prop}{Proposition}{Propositions}

\newtheorem{theorem}{Theorem}
\newtheorem{lemma}{Lemma}
\newtheorem{assumption}{Assumption}

\newtheorem{proposition}{Proposition}
\newtheorem{remark}{Remark}

\usepackage[margin=1in]{geometry}
\usepackage{tcolorbox}
\usepackage{adjustbox}

\usepackage[defaultlines=3,all]{nowidow}

\usepackage{libertine}  
\usepackage{dsfont}

\usepackage{caption}
\usepackage{subcaption}
\usepackage{wrapfig}

\DeclareCaptionFormat{empty}{}

\everypar{\looseness=-1}

\title{Estimation and Inference for the Average \\Treatment Effect in a Score-Explained \\Heterogeneous Treatment Effect Model}

\author{
  Kevin Christian Wibisono\\
        University of Michigan, Statistics\\
  kwib@umich.edu\\
  \and
  Debarghya Mukherjee\\
          Boston University, Statistics\\
  mdeb@bu.edu\\
  \and
  Moulinath Banerjee\\
          University of Michigan, Statistics\\
  moulib@umich.edu\\
  \and
  Ya'acov Ritov\\
          University of Michigan, Statistics\\
  yritov@umich.edu\\
  }

\date{\today}

\begin{document}
\maketitle


\begin{abstract}
\noindent
In many practical situations, randomly assigning treatments to subjects is uncommon due to feasibility constraints. For example, economic aid programs and merit-based scholarships are often restricted to those meeting specific income or exam score thresholds. In these scenarios, traditional approaches to estimating treatment effects typically focus solely on observations near the cutoff point, thereby excluding a significant portion of the sample and potentially leading to information loss. Moreover, these methods generally achieve a non-parametric convergence rate. While some approaches, e.g., \citet{mukherjee2021estimation}, attempt to tackle these issues, they commonly assume that treatment effects are constant across individuals, an assumption that is often unrealistic in practice. 
In this study, we propose a differencing and matching-based estimator of the average treatment effect on the treated (ATT) in the presence of heterogeneous treatment effects, utilizing all available observations. We establish the asymptotic normality of our estimator and illustrate its effectiveness through various synthetic and real data analyses. Additionally, we demonstrate that our method yields non-parametric estimates of the conditional average treatment effect (CATE) and individual treatment effect (ITE) as a byproduct.
\end{abstract}

\textit{Keywords:} Average treatment effect on the treated, first-order differencing, heterogeneous \\treatment effect, non-random treatment allocation, residual matching.

\section{Introduction}
\label{sec:intro}
In numerous practical scenarios, the random assignment of treatments to subjects is impractical. For instance, economic initiatives targeting impoverished individuals may only be extended to those with incomes below a specified threshold. Within the medical realm, treatments are often administered to patients facing urgent needs. Academic scholarships, too, are commonly awarded based on merit exams, where applicants surpassing a predetermined cutoff score receive the scholarship. These instances, among others, emphasize the significance of investigating non-random treatment allocations. 

In many of these examples, a variable, referred to as the ``score variable" and henceforth denoted as $Q$, along with a predetermined cutoff $\tau_0$, determines the treatment allocation. In the economic initiative case, an individual's income serves as the score variable $Q$, while in the scholarship case, $Q$ corresponds to a student's merit test score. Our focus is on estimating the impact of the treatment on some response variable $Y$. In the economic initiative case, we may seek to determine whether a specific initiative benefits the have-nots, as measured by an individual's happiness level. In the scholarship case, we may be interested in assessing the effect of the scholarship on a student's future prospects, as measured by their future income. In both these cases, happiness level and future income serve as response variables, respectively.

Additionally, we often have background information $X$ on the individuals (e.g., socio-economic background, education, race, gender, and age) that may affect both the response variable $Y$ and the score variable $Q$. One way to capture the effect of $(X, Q)$ on $Y$ is via the following model:  
$$
Y_i = \alpha(X_i, Q_i) \mathds{1}_{Q_i \geq \tau_0} + X_i^\top \beta_0 + \nu_i \,,
$$
where $\nu_i$ is the unobserved error. 
Here, $\alpha(\cdot)$ denotes the individual treatment effect (ITE) which can potentially depend on the background information and score variable.

However, as is true for most real world applications, the score variable $Q$ and the unobserved error $\nu$ can be correlated through some \textit{unobserved confounders}. In the exam-based scholarship example, $\nu$ may encode students' innate abilities or intelligence that affects both the score of the merit test and their future income. In other words, $(Q, X)$ may fail to capture all factors that are essential for explaining $Y$. This is exactly what differentiates our setting with the standard treatment effect models, where either (i) the treatment is allocated randomly (also known as the \textit{randomized controlled trial} or RCT); or (ii) the observed covariates $(Q, X)$ are assumed to explain all the effects between the treatment and response variables, also known as the \emph{ignorability} or \emph{unconfoundedness} assumption (\citet{rosenbaum1983central, robins1994estimation,imbens2004nonparametric}). 

Traditionally, researchers address this endogeneity issue, i.e., a non-zero correlation between the score variable and the unobserved error, using \emph{regression discontinuity design} (RDD) \citep{thistlethwaite1960regression}. The key idea of RDD is to localize the problem: students whose scores belong to a small vicinity around the cutoff---say within the neighborhood $[\tau_0 - h, \tau_0 + h]$ for some small $h$---are similar in terms of their abilities \citep{calonico2019regression,cattaneo2019practical}. Consequently, it is enough to compare the future income of students who barely missed the scholarship (i.e., $Q \in [\tau_0 -h, \tau_0)$ and students who barely cleared the merit exam (i.e., $Q \in [\tau_0, \tau_0 + h])$. RDD has a rich literature and has found various applications in numerous fields, such as education (\citet{jacob2004remedial,moss2006shaping,banks2012effect}), health (\citet{venkataramani2016regression,chen2018effect,christelis2020impact}), and epidemiology (\citet{mody2018estimating,basta2019evaluating,anderson2020effect}).

While being general and often fully non-parametric, the local approach described above suffers from several drawbacks. First, it \textit{only takes into account the observations within a certain neighborhood (bandwidth) around the cutoff, consequently losing information by rejecting other observations}. In situations where the treatment effect depends on the distance from the cutoff, such methods only assess the impact on individuals near the threshold, who may not be the primary focus of our analysis. A common example is the effect of a medicine on patients who are in dire need of it---these patients might be far from the eligibility threshold, but are the most relevant for analysis. Second, it yields an estimator of the treatment effect with a \textit{slow (non-parametric) rate of convergence}: when the bandwidth $h$ decreases to zero with the number of observations $n$---essential for consistent estimation of local effects---the number of effective samples, i.e., those whose scores are within the neighborhood $[\tau_0 - h, \tau_0 + h]$, is of order $nh$, which is smaller than $n$. As a partial solution, \citet{angrist2015wanna} introduced a covariate-based method that uses all available information and constructed a $\sqrt{n}$-consistent estimator of the treatment effect. 
However, their method relies on the conditional independence assumption, i.e., $\mathbb{E}(Y \mid Q, X) = \mathbb{E}(Y \mid X)$, a variant of the exogeneity assumption which typically does not hold as argued earlier. 



Motivated by these observations, \cite{mukherjee2021estimation} proposed an efficient $\sqrt{n}$-rate estimator of the treatment effect in the presence of endogeneity that uses all observations. Their approach assumes a homogeneous treatment effect model, where the response variable $Y$ is modeled as 
$$
Y = \alpha_0 \mathds{1}_{Q \ge \tau_0} + X_i^\top \beta_0 + \nu_i,
$$
with $\alpha_0$ being the parameter of interest. 
The key step in their method is to model the score variable $Q$ using the background information $Z$---which may or may not be the same as $X$---via the equation $Q_i = Z_i^\top \gamma_0 + \eta_i$. In this model, $(\eta, \nu)$ encodes all unobserved factors (i.e., innate abilities of students taking the merit test) and can be arbitrarily correlated. These insights lead to the following model: 
\begin{align}
\label{eq:mukherjee}
    Y_i & = \alpha_0 \mathds{1}_{Q_i \ge \tau_0} + X_i^\top \beta_0 + \ell(\eta_i) + \eps_i \\
\label{eq:mukherjee2}
    Q_i & = Z_i^\top \gamma_0 + \eta_i\,,
\end{align}
where $\ell(\eta) = \bbE[\nu \mid \eta]$ and $\eps$ is an error orthogonal to $(X, Z, \eta)$. Under the model given by Equations \eqref{eq:mukherjee} and \eqref{eq:mukherjee2}, \citet{mukherjee2021estimation} constructed an estimator of $\alpha_0$ that is $\sqrt{n}$-consistent, asymptotically normal, and semi-parametrically efficient.

The main shortcoming of the above model is that the treatment effect, i.e., $\alpha_0$, is assumed to be constant. This oversimplification restricts the applicability of this model to real-world problems. 
For example, the effect of scholarship on a student's future income may be larger for older students or students with higher innate abilities. This crucial observation motivates a natural extension of this model that incorporates the effect of both the background information and unobserved confounders. 

\subsection{Our contributions }
In this paper, we analyze a generalized version of the endogenous treatment effect model of \citet{mukherjee2021estimation} (Equations \eqref{eq:mukherjee} and \eqref{eq:mukherjee2})
that incorporates a heterogeneous treatment effect. Specifically, we analyze the following model:  
\begin{align}
\label{eq:model_1}
    Y_i & = \alpha_0(X_i, \eta_i) \mathds{1}_{Q_i \ge \tau_0} + X_i^\top \beta_0 + \ell(\eta_i) + \eps_i \\
\label{eq:model_2} Q_i & = Z_i^\top \gamma_0 + \eta_i \,.
\end{align}
Here, $\alpha_0(\cdot)$, which we call the individual treatment effect (ITE), is a non-parametric function of both the observed background information $X$ and unobserved covariates $\eta$ (e.g., innate abilities).
It is important to note that $\alpha_0(\cdot)$ relies on the unobserved variable $\eta$, in contrast to a common assumption in the \emph{conditional average treatment effect} (CATE) literature that the dependence is solely on the observed $X$. In the scholarship example, $\alpha_0$ not only depends on the observed background information $X$, but also on the unobserved innate merit $\eta$. This allows for a situation where, for instance, a brighter student can benefit from an advanced curriculum and become more successful in the future.

One may argue that $\eta$ is not exactly the innate ability, but rather its noisy version. While this is true, it is not possible to encode innate abilities exactly as they are not observed. Nevertheless, if the score is obtained by \textit{aggregating multiple scores efficiently} (e.g., the average of multiple test scores of a student), then $\eta$ is expected to be less noisy and consequently a good proxy for the unobserved innate abilities. Furthermore, when $X = Z$ (which may be true for many practical scenarios), $\alpha_0(\cdot)$ can then be viewed as a function of $(X, Q)$ and the map $(X, Q) \leftrightarrow (X, \eta)$ is bijective. We also note that all results to be established in later sections continue to hold when $\alpha_0(\cdot)$ depends solely on $X$, which corresponds to the standard CATE.
The relationship among the variables in our model (with $X = Z$) is pictorially presented in Figure \ref{fig:relationship}. 
\begin{figure}[h]
\centering
\begin{tikzpicture}
    \node[circle, fill = blue, draw, label={$X$}] (X) at (-1,1) {};
    \node[circle, fill = blue, draw, label=right:{$Y$}] (Y) at (0,0) {};
    \node[circle, fill = blue, draw, label=left:{$Q$}] (Q) at (-2,0) {};
    \node[circle, fill = blue, draw, label={$\eta$}] (eta) at (-3,2) {};
    \node[circle, fill = blue, draw, label={$\epsilon$}] (eps) at (1,2) {};

    \draw[->, thick] (X) -- (Q);
    \draw[->, thick] (X) -- (Y);;
    \draw[->, thick] (eps) -- (Y);
    \draw[->, thick] (Q) -- (Y);
    \draw[->, thick] (eta) to[bend left=60] (Y);
    \draw[->, thick] (eta) -- (Q);

\end{tikzpicture}
\caption{A graphical representation of the variables of Equations \eqref{eq:model_1} and \eqref{eq:model_2}, with $X=Z$.}
\label{fig:relationship}
\end{figure}
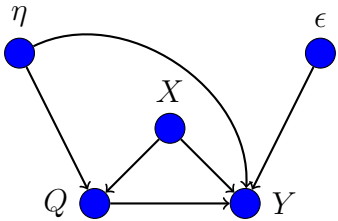

In this paper, our primary goal is to estimate the \emph{average treatment effect on the treated} (ATT), defined as
\begin{equation}
    \label{eq:param_interest}
    \theta_0 = \bbE[\alpha_0(X, \eta) \mid Q \ge \tau_0] \,.
\end{equation}

As elaborated previously, the key difficulty in estimating $\theta_0$ is precisely the unobserved covariates $\eta$. If $\eta$ were known, one could construct a consistent estimator of $\theta_0$ using the following steps: 
\begin{enumerate}
    \item On the control observations (i.e., observations for which $Q_i < \tau_0$), we have
    $$
    Y_i = X_i^\top \beta_0 + \ell(\eta_i) + \eps_i,
    $$
    which is a standard \emph{partial linear model} (PLM). Therefore, we can use any standard technique available in the literature of PLM to estimate $(\beta_0, \ell)$. 

    \item Let $\hat \beta$ and $\hat f$ respectively be the estimators of $\beta_0$ and $f$ obtained in the previous step. Consider the residuals $R_i = Y_i - X_i^\top \hat \beta - \hat \ell(\eta_i)$ for all the treatment observations (i.e., those with $Q_i \ge \tau_0$) and set
    $$
    \hat \theta = \frac{\sum_{i: Q_i \ge \tau_0} R_i}{\sum_i \mathds{1}(Q_i \ge \tau_0)} \,.
    $$
\end{enumerate}
However, we do not observe $\eta$ in practice and need to approximate it via regression residuals from Equation \eqref{eq:model_2}. This fairly complicates the analysis as the approximation error now depends on $(Z, Q)$, and hence on $X$. Therefore, an appropriate modification of the above procedure is necessary, and this is outlined below.

We first obtain an approximation of $\eta$ from Equation \eqref{eq:model_2} by taking the residuals upon regressing $Q$ on $Z$. From this approximation, say $\hat \eta_i$, we broadly follow Steps 1 and 2 as mentioned above, albeit with suitable modifications. First, we use the control observations (along with these estimated $\hat \eta_i$) to estimate $\beta_0$ using a \textit{difference-based technique} (see, e.g., \citet{yatchew1997an} and \citet{wang2011difference}) for estimating the linear parameter in a PLM, which precludes the need to estimate $\ell$ explicitly. The fundamental idea of this technique is to estimate $\beta_0$ in a PLM of the form $Y = X^\top \beta_0 + \ell(\eta) + \epsilon$ by ordering the $\eta_i$ values and calculating the first-order differences of the corresponding $Y_i$ and $X_i$ values. In our scenario, a careful adaptation of this method is necessary as the $\eta_i$'s are unobserved and the ordering is made on their estimates $\hat{\eta}_i$'s.

To obtain a $\sqrt{n}$-consistent estimator of $\theta_0$, we employ a \emph{residual matching} technique inspired by the matching-based estimator studied in \citet{abadie2016matching}. Specifically, for each treatment observation $i$, we identify its \emph{nearest} control observation $c(i)$ whose $\hat \eta$ value is closest to that of $i$ (this is unlike \citet{abadie2016matching}, which matches on estimated propensity scores). We then take the difference between the responses corresponding to both indices, i.e., $Y_i - Y_{c(i)}$. As the two $\hat \eta$ values are close, the effect of the non-parametric function $f$ basically vanishes and only the linear function $(X_i - X_{c(i)})^\top \beta_0$, which is estimable at a $\sqrt{n}$-rate, and the ITE $\alpha_0(X_i, \eta_i)$ remain. Finally, we consider the differences between $Y_i - Y_{c(i)}$ and $(X_i - X_{c(i)})^\top \hat{\beta}$ and compute the average of these differences over all treatment observations $i$. The details of our method are elaborated in Section \ref{sec:meth}. 

At this point, we note that our model shares a degree of similarity with the simultaneous triangular equation framework studied in \citet{newey1999nonparametric} and \citet{pinkse2000nonparametric}. However, the existing literature on such models typically assumes a smooth link function between the outcome variable $Y$ and the covariates $(X, Z, Q)$, whereas in our setting, a natural discontinuity arises due to the deterministic assignment of treatment based on a thresholded score function. Furthermore, the inclusion of Equation \eqref{eq:model_2} may suggest that our method directly falls under the purview of the instrumental variable (IV) framework \citep{angrist1996identification}, where the main idea is to identify one or more variables (called instruments) that are correlated with the covariates and affect the outcome solely through their association with the covariates (known as exclusion restriction; see, e.g., \citet{lousdal2018introduction}). However, our model relaxes the standard exclusion restriction condition by allowing $X = Z$ or, more generally, permitting $X$ and $Z$ to share common predictors. For example, in the scholarship example, a student's high school grade can directly affect both their merit test score and future income.

We summarize our key contributions as follows: 
\begin{itemize}
    \item[(a)] We propose a $\sqrt{n}$-consistent and asymptotically normal estimator of $\theta_0$ in the presence of heterogeneous treatment effect (i.e., Equations \eqref{eq:model_1} and $\eqref{eq:model_2}$), where the effect of the treatment depends on both the observed $X$ and unobserved $\eta$. 
    \item[(b)] We demonstrate the effectiveness of our estimator through various synthetic and real data analyses. 
    \item[(c)] As a byproduct, our analysis also yields a non-parametric estimate of the ITE and CATE.
    
\end{itemize}
\textbf{Organization of the paper:} Section \ref{sec:meth} details our methodology for estimating the ATT, followed by an extension to estimate the ITE and CATE. Section \ref{sec:theores} presents our theoretical results on the $\sqrt{n}$-consistency and asymptotic normality of the estimator. Section \ref{sec:sim} conducts simulation studies to verify our theoretical results and assess how well our method estimates the ITE and CATE. 
Section \ref{sec:rda} applies our method to real data sets, examining the impact of Islamic political rule on women's empowerment and the effect of grade-based academic probation on students' future GPA. Finally, Section \ref{sec:conc} concludes the paper and Section \ref{sec:sketch} includes proof sketches for the theoretical results.

\section{Methodology}
\label{sec:meth}
In this section, we present our methodology for constructing a $\sqrt{n}$-consistent and asymptotically normal estimator of $\theta_0$. Recall that we observe $\{(Y_i, X_i, Z_i, Q_i)\}_{1 \le i \le n}$ from Equations \eqref{eq:model_1} and \eqref{eq:model_2}. 
The first central idea of our method involves taking first-order differences and performing matching on the estimated residuals, i.e., $\hat{\eta}_i$'s. We draw inspiration from the literature on partially linear models (PLMs), particularly techniques for estimating a model's parametric component through first-order differences of its non-parametric counterpart (see, e.g., \cite{yatchew1997an} and \citet{wang2011difference}). 

To elaborate on this approach, consider a generic PLM of the form 
$$
Y = X^\top \beta_0 + \ell(\eta) + \epsilon \,,
$$
where we observe $(Y, X, \eta)$\footnote{We first describe our approach in the simple case where the $\eta_i$'s are known. We then discuss how this method can be adapted for the practical scenario where the $\eta_i$'s are unknown.} and assume that $\mathbb{E}(\epsilon \mid X, \eta) = 0$ and $\textrm{var}(\epsilon \mid X, \eta) = \sigma_\epsilon^2$. The goal here is to construct a $\sqrt{n}$-consistent and asymptotically normal estimator of $\beta_0$ based on an i.i.d. sample $\left\{(Y_i, X_i, \eta_i)\right\}_{i=1}^n$. The method involves two key steps. First, we sort the observations based on the $\eta_i$'s and compute the first-order differences of the sorted $Y_i$'s and $X_i$'s, denoted by $\Delta Y_i$ and $\Delta X_i$. If $(X_{(i)}, Y_{(i)})$ denotes the observation corresponding to $\eta_{(i)}$, where $\eta_{(i)}$ is the $i^{th}$ order statistic among $\{\eta_1, \dots, \eta_n\}$, then first-order differencing yields
$$
Y_{(i+1)} - Y_{(i)} = (X_{(i+1)} - X_{(i)})^\top \beta_0 + \ell(\eta_{(i+1)}) - \ell(\eta_{(i)}) + \eps_{(i+1)} - \eps_{(i)}
$$

As $\eta_{(i+1)} - \eta_{(i)}$ is small (typically of the order of $n^{-1}$), we expect $\ell(\eta_{(i+1)}) - \ell(\eta_{(i)})$ to be negligible as long as $f$ has minimal smoothness (e.g., $f$ is $\alpha$-H\"{o}lder with $\alpha > 1/2$). 
Therefore, we have
$$
Y_{(i+1)} - Y_{(i)} \approx (X_{(i+1)} - X_{(i)})^\top \beta_0 + \eps_{(i+1)} - \eps_{(i)}.
$$
Now, regressing the differences of the sorted $Y_i$'s on the sorted $X_i$'s yields a $\sqrt{n}$-consistent and asymptotically normal estimator of $\beta_0$. Refer to \cite{yatchew1997an} or \cite{wang2011difference} for more details on this approach.  

Let us now consider Equations \eqref{eq:model_1} and \eqref{eq:model_2}, again assuming that the $\eta_i$'s are known. Note that observations in the control group satisfy $Q_i < \tau_0$, which implies 
$$
Y_i = X_i^\top \beta_0 + \ell(\eta_i) + \epsilon_i \,.
$$ 
Therefore, we can use the first-order difference-based method described in the previous paragraphs to obtain an estimate $\hat{\beta}$ of $\beta_0$. 

We now introduce the second central idea of our method, which is residual matching. For the $i^{th}$ observation in the treatment group, we identify an observation in the control group whose $\eta$ value is closest to it. To be more specific, we define $c(i)$ as 
$$
c(i) = \argmin_{j \in \text{ control group}} |\eta_j - \eta_i| \,.
$$
Recall that the response of an observation in the treatment group satisfies 
$$
Y_i = \alpha_0(X_i, \eta_i) + X_i^\top \beta_0 + \ell(\eta_i) + \eps_i \,.
$$
Therefore, if the $i^{th}$ observation in the treatment group is matched to the $c(i)^{th}$ observation in the control group, we have
$$
Y_i - Y_{c(i)} = \alpha_0(X_i, \eta_i) + (X_i - X_{c(i)})^\top \beta_0 + \ell(\eta_i) - \ell(\eta_{c(i)}) + \eps_i - \eps_{c(i)} \,.
$$
Note that $\eta_{c(i)}$, by definition, is the closest control value to $\eta_i$. Thus, we expect $\eta_i - \eta_{c(i)}$ to be small and $\ell(\eta_i) - \ell(\eta_{c(i)})$ to be negligible under minimum smoothness assumption on $f$. Therefore, we have
\begin{equation}
    \label{eq:matching}
    Y_i - Y_{c(i)} \approx \alpha_0(X_i, \eta_i) + (X_i - X_{c(i)})^\top \beta_0 + \eps_i - \eps_{c(i)} \,. 
\end{equation}

Recall that we have obtained an estimate $\hat \beta$ of $\beta_0$ which is $\sqrt{n}$-consistent and asymptotically normal as described before. Thus, we can replace $\beta_0$ by $\hat \beta$ in Equation \eqref{eq:matching}, which yields
$$
Y_i - Y_{c(i)} \approx \alpha_0(X_i, \eta_i) + (X_i - X_{c(i)})^\top \hat \beta + \eps_i - \eps_{c(i)} \,. 
$$
Finally, taking the average of $(Y_i - Y_{c(i)}) - (X_i - X_{c(i)})^\top \hat \beta$ over all the treatment observations yields our final estimate: 
$$
\hat \theta = \frac{1}{n_T}\sum_{i: Q_i \ge \tau_0}\left\{(Y_i - Y_{c(i)}) - (X_i - X_{c(i)})^\top \hat \beta\right\} \,,
$$




where $n_T$ is the number of observations in the treatment group. 

\textit{However, in practice, the $\eta_i$'s are unknown}. In this case, they can be estimated by regressing $Q$ on $Z$ and taking the regression residuals following Equation \eqref{eq:model_2}. We now implement the method described above, replacing the $\eta_i$'s with their estimates $\hat \eta_i$'s (i.e., we perform the differencing and residual matching with respect to $\hat \eta_i$). For technical convenience, we also perform data splitting. The entire method is summarized in Algorithm \ref{alg:est}. 

Let us now briefly elaborate on the steps of Algorithm \ref{alg:est}. In Step 1, we start by partitioning the data into three roughly equal parts (i.e., when $n$ is divisible by $3$, each part consists of $n/3$ data points; if not, each of the first two parts consists of $\lfloor n/3 \rfloor$ data points and the last part consists of $n - 2\lfloor n/3 \rfloor$ data points). 
The three sets of data are denoted by $I_1, I_2, I_3$. 
Data splitting allows certain technical advantages in our quantitative analysis, though in practice, our algorithm can also be used without data splitting. 

In Step 2, we approximate the unobserved confounders $\eta$ by regressing $Q$ on $Z$ using the observations in $I_1$. We let $\hat \gamma$ be the estimated coefficient, and set $\hat \eta_i = Q_i - Z_i^\top \hat \gamma$ for all observations.
\begin{algorithm}[h]
\setstretch{1.05}
\caption{Estimation of the ATT $\theta_0$}
\label{alg:est}
\begin{algorithmic}[1]
\Require i.i.d. data $\{ (X_i, Y_i, Z_i, Q_i)\}_{i=1}^n$, threshold $\tau_0$
\Ensure A $\sqrt{n}$-consistent and asymptotically normal estimator of $\theta_0$ 
\State Partition $\{1,\cdots,n\}$ into $I_1$, $I_2$, $I_3$ of roughly equal sizes.
\vspace{0.05in}
\State Perform OLS of $Q_i$ against $Z_i$ for $i \in I_1$; obtain $\hat{\gamma}$ and set $\hat \eta_i = Q_i - Z_i^\top \hat \gamma$ for all $i \in I_2 \cup I_3 $\,.
\vspace{0.05in}
\State Order $\{\hat \eta_i\}_{i \in I_2^C}$ and denote by $\{\hat \eta_{(i)}\}$ be the corresponding order statistics. Denote by $\{(X_{(i)}, Y_{(i)})\}$ the induced order on $\{(X_i ,Y_i)\}_{i \in I_2^C}$.  
\vspace{0.05in}
\State Regress the first-order difference $\Delta Y_{(i)}$ on $\Delta X_{(i)}$ (where $i \in I_2^C$) and set $\hat \beta$ to be the coefficients corresponding to this regression. 
\vspace{0.05in}
\State Perform residual matching: for each $i \in I_3^T$, find $c(i)$ defined as: 
$$
c(i) = \argmin_{j \in I_3^C} |\hat{\eta}_i - \hat{\eta}_j| \,.
$$
\State \Return the estimated ATT defined as follows: 
$$
\hat \theta =   \frac{1}{|I_3^T|} \sum_{i \in I_3^T} \left( (Y_i - X_i^\top \hat \beta)  - (Y_{c(i)} - X_{c(i)}^\top \hat \beta) \right) \,.
$$
\end{algorithmic}
\end{algorithm}
In Step 3, we consider $I_2^C$, the observations in $I_2$ that belong to the control group. We sort $\hat \eta_i$ for all $i \in I_2^C$ and let $\hat \eta_{(i)}$ be the $i^{th}$ order statistic among the $\hat \eta$'s in $I_2^C$. In Step 4, we take first-order differences of the $(X_{(i)}, Y_{(i)})$ observations corresponding to $\hat \eta_{(i)}$ and apply the method described previously to obtain $\hat \beta$.

Finally, we focus on $I_3$ and define $I_3^T$ and $I_3^C$ similarly (i.e., treatment observations in $I_3$ and control observations in $I_3$). In Step 5, for each treatment observation $i \in I_3^T$, we select a control observation $c(i) \in I_3^C$ such that
$$
c(i) = \argmin_{j \in I_3^C} \left|\hat \eta_i - \hat \eta_j\right| \,.
$$
In other words, we match each treatment observation $i \in I_3^T$ with a control observation $c(i) \in I_3^C$ whose $\hat \eta$ value is closest to that of $i$. Lastly, in Step 6, we calculate the differences $(Y_i - X_i^\top \hat \beta)  - (Y_{c(i)} - X_{c(i)}^\top \hat \beta)$ for all $i \in I_3^T$ and average them out to obtain $\hat \theta$. 
\\
\begin{remark}[Cross-fitting]
\label{remark:cf}
In order to reduce the asymptotic variance of our estimator, one may use cross-fitting, i.e., implementing Algorithm \ref{alg:est} by interchanging the role of $I_1, I_2, I_3$. In particular, note that the final estimator $\hat \theta$ in Algorithm \ref{alg:est} is calculated from $I_3$. For cross-fitting, we can repeat the algorithm twice to obtain two additional versions of $\hat \theta$, one from each of $I_1$ and $I_2$. Finally, we can take the average of the three $\hat \theta$'s as our final estimate $\hat \theta$.
\end{remark}
\vspace{2mm}
\begin{remark}[Applications to CATE and fixed treatment effects]
It is easy to see that our method can also be applied when the ITE only depends on $X$ (i.e., $\alpha_0(X)$), known in the literature as the CATE, and more specifically for a fixed treatment effect model, i.e., $\alpha_0$ is a constant.
\end{remark}
\vspace{2mm}
\begin{remark}[Comparison between our method and \cite{mukherjee2021estimation}]
One key distinction between our approach and that of \cite{mukherjee2021estimation} is that our approach avoids the need to estimate $f$ when estimating the ATT and ITE. Estimating non-parametric functions is typically more computationally intensive than dealing with finite-dimensional parameters, and requires careful selection of tuning parameters (such as bandwidth or the number of basis functions). Consequently, our method is more straightforward to implement in practice as compared to the method in \cite{mukherjee2021estimation}.
\end{remark}

\subsection{Individual treatment effect estimation}
\label{sec:ite-av-est}
So far, we have presented Algorithm \ref{alg:est} for estimating the ATT $\theta_0 = \mathbb{E}(\alpha_0(X, \eta) \mid Q \geq \tau)$ and established its theoretical properties. However, in certain scenarios, there might also be interest in the ITE, denoted by $\alpha_0(X, \eta)$. For instance, in the context of the scholarship example, the award committee might wish to understand how the effect of the scholarship varies based on students' background characteristics $X$ (such as age and race) and innate abilities $\eta$. We now present Algorithm \ref{alg:ite}, a slightly modified version of Algorithm \ref{alg:est} to estimate the ITE $\alpha_0(X, \eta)$.


\begin{algorithm}[h]
\setstretch{1.05}
\caption{Estimation of the ITE $\alpha_0(X, \eta)$}
\label{alg:ite}
\begin{algorithmic}[1]
\Require i.i.d. data $\{ (X_i, Y_i, Z_i, Q_i)\}_{i=1}^n$, threshold $\tau_0$
\Ensure An estimate of $\alpha_0(X)$ 
\State Follow Steps 1 to 5 of Algorithm \ref{alg:ite}. 
\vspace{0.05in}
\State Estimate $\alpha(\cdot)$ by regressing $(Y_i - X_i^\top \hat \beta)  - (Y_{c(i)} - X_{c(i)}^\top \hat \beta)$ on $X_i$ and $\hat{\eta}_i$ using any non-parametric regression algorithm (e.g., basis expansion, kernel smoothing, neural networks, etc.), and call the estimator $\hat \alpha(\cdot)$. 
\State \Return $\hat \alpha(\cdot)$. 
\end{algorithmic}
\end{algorithm}

Algorithm \ref{alg:ite} can be summarized as follows. We first follow Steps 1 to 5 of Algorithm \ref{alg:est}. We then utilize any non-parametric regression algorithm (e.g., B-splines or local parametric regression) to estimate $\alpha_0(X, \eta)$. In the case where the ITE depends only on $X$ (equivalent to the standard CATE parameter), we can adjust the algorithm so that the non-parametric regression is done on $X_i$ (instead of $X_i$ and $\hat{\eta}_i$) to yield an estimator $\hat{\alpha}(X)$ of $\alpha_0(X)$.

\section{Theoretical results}
\label{sec:theores}
In this section, we establish the theoretical properties of our estimator $\hat{\theta}$, demonstrating that it is $\sqrt{n}$-consistent and asymptotically normal (CAN).
We denote by $d_X$ (resp. $d_Z$) the dimension of $X$ (resp. $Z$). 
Recall that our parameter of interest is the ATT $\theta_0 = \mathbb{E}(\alpha_0(X, \eta) \mid Q \geq \tau_0)$, which we estimate using $n$ i.i.d. realizations of $(X, Y, Z, Q)$.  As elaborated in Algorithm \ref{alg:est}, our method relies on splitting the data into three (almost) equal parts. 
Henceforth, we assume $n = 3\tilde{n}$ for some positive integer $\tilde{n}$ for ease of presentation and represent the entire data $\cD_n := \cI_1 \cup \cI_2 \cup \cI_3$, where each $\cI_j$ contains $\tilde{n}$ observations. 
Briefly speaking, Algorithm \ref{alg:est} first uses $\cI_1$ to estimate $\gamma_0$. It then uses $\cI_2$ to estimate $\beta_0$ using the estimator of $\gamma_0$ obtained from $\cI_1$, and finally uses $\cI_3$ to estimate $\theta_0$ using the estimators of $(\gamma_0, \beta_0)$. We now state the assumptions required to show that $\hat \theta$ is $\sqrt{n}$-CAN:
\\




\begin{assumption}[Covariates]
\label{asm:momentofxz}
$(X, Z)$ are compactly supported, and $0 \in \mathrm{supp}(Z)$. 
\end{assumption}
\vspace{2mm}

\begin{assumption}[Errors]
\label{asm:errormeanvar}
The error $\eps$ satisfies $\mathbb{E}(\epsilon \mid X, \eta) = 0$, $\mathrm{var}(\epsilon \mid X, \eta) = \sigma_\epsilon^2$, and $\mathbb{E}(|\epsilon|^3 \mid X, \eta)$ is finite. The error $\eta$ satisfies $\bbE(\eta \mid Z) = 0$ and $\mathrm{var}(\eta \mid Z) = \sigma_\eta^2$. Furthermore, $\eta$ (equivalently $Q$) is compactly supported.
\end{assumption}

For notational simplicity, we define the following conditional mean, variance and covariance functions of the covariates given the error and the control indicator: 
\begin{align*}
g_\delta(b) & = \mathbb{E}(X \mid \eta - Z^\top \delta= b, Q < \tau_0) \,, \\
q_\delta(b) & = \mathbb{E}(Z \mid \eta - Z^\top\delta= b  , Q < \tau_0) \,, \\
v_\delta(b) & = \mathrm{var}(X \mid \eta - Z^\top\delta= b , Q < \tau_0) \,, \\
w_\delta(b) & = \mathrm{var}(Z\mid \eta - Z^\top\delta= b , Q < \tau_0) \,, \\
k_\delta(b) & = \mathrm{cov}(X,Z\mid \eta - Z^\top\delta= b , Q < \tau_0) \,.
\end{align*}


\begin{assumption}[Smoothness of the conditional functions]
\label{asm:smoothness_conditional}
For any $\delta \in \mathbb{R}^{d_Z}$, we have $g_\delta(b), q_\delta(b), v_\delta(b)$ and $w_\delta(b)$, and $k_\delta(b)$ are Lipschitz continuous w.r.t. $b \in \mathrm{supp}(\eta - Z^\top \delta)$. Furthermore, for any sequence $\{\delta_n\}_{n \geq 1}$ that converges to $0$ and $b \in \mathrm{supp}(\eta)$, we have $g_{\delta_n}(b) \to g_{0}(b)$, $q_{\delta_n}(b) \to q_{0}(b)$, $v_{\delta_n}(b) \to v_{0}(b)$, $w_{\delta_n}(b) \to w_{0}(b)$, and $k_{\delta_n}(b) \to k_{0}(b)$.
\end{assumption}
\vspace{2mm}


\begin{assumption}
\label{asm:eigenvalue}
    For a large enough $\tilde{n}$, we have $\mathbb{E}\left( 1/\left(\lambda_{\min}\left( \frac{\tilde{Z}^\top \tilde{Z}}{\tilde{n}} \right)\right)^6 \right)$ is finite, where $\lambda_{\min}(\Gamma)$ is the smallest eigenvalue of $\Gamma$ and $\tilde{Z} = (Z_1; Z_2; \cdots; Z_{\tilde{n}})^\top \in \mathbb{R}^{\tilde{n} \times d_Z}$.
\end{assumption}
\vspace{2mm}
\begin{assumption}[Model parameters]
\label{asm:fbounded}
    The non-linear function $\ell(\cdot)$ in Equation \eqref{eq:model_1} is Lipschitz continuous and has a bounded second derivative. Also, $\mathbb{E}(\left| \alpha_0(X, \eta)\right|^3 \mid Q \geq \tau_0)$ is finite.
\end{assumption} 
\vspace{2mm}
\begin{assumption}[Density ratio]
\label{asm:abadie}
For $\delta \in \mathbb{R}^{d_Z}$ and $b \in \mathbb{R}$, let $f_{0,\delta}(b)$ and $f_{1,\delta}(b)$ denote the density of $\eta - Z^\top \delta$ at $b$ conditional on $Q < \tau_0$ and $Q \ge \tau_0$, respectively. Then, for any $\delta \in \mathbb{R}^{d_Z}$ and $b \in \mathrm{supp}(\eta - Z^\top \delta)$, $f_{1,\delta}(b)$, $f_{0,\delta}(b)$, and $f_{1,\delta}(b)/f_{0,\delta}(b)$ are uniformly bounded and uniformly bounded away from zero. Furthermore, for any sequence $\{\delta_n\}_{n \geq 1}$ that converges to $0$ and $b \in \mathrm{supp}(\eta)$, we have $f_{0,{\delta_{n}}}(b) \to f_{0,0}(b)$ and $f_{1,{\delta_{n}}}(b) \to f_{1,0}(b)$.
\end{assumption}

The compactness of $X$ and $Z$ in Assumption \ref{asm:momentofxz} is made for technical convenience and can be relaxed with careful truncation arguments. However, in practical scenarios, $X$ and $Z$ are naturally bounded or can be made so through proper scaling. Moreover, the assumption that $0 \in \mathrm{supp}(Z)$ ensures that $\textrm{supp}(\eta)$ is contained in $\textrm{supp}(\eta - Z^\top \delta)$, which is crucial for our proofs. Assumption \ref{asm:errormeanvar} is standard in the non-parametric regression literature, where we posit homoskedasticity for the errors $\eps$ and $\eta$. This can be relaxed by assuming $\textrm{var}(\epsilon \mid X, \eta)$ and $\textrm{var}(\eta \mid Z)$ to be uniformly bounded and bounded away from zero. Furthermore, the compactness of $\eta$ is made for technical convenience and can be relaxed through careful truncation arguments.


The continuity of $g_\delta$, $q_\delta$, $v_\delta$, $w_\delta$, and $k_\delta$ w.r.t. $\delta$ in Assumption \ref{asm:smoothness_conditional} are satisfied as long as the density function of $(X, \eta - Z^\top \delta)$ (resp. $(Z, \eta - Z^\top \delta)$) are continuous w.r.t. $\delta$ for any fixed $X$ (resp. $Z$). Assumption \ref{asm:eigenvalue} is a technical assumption to ensure the Lindeberg's condition for the martingale central limit theorem \citep{billingsley1995probability} is satisfied. Furthermore, Assumption \ref{asm:fbounded} sets a minimal requirement on the smoothness of the function $\ell$ and the boundedness of the third moment of the heterogeneous treatment effect $\alpha_0(X, \eta)$.


Finally, Assumption \ref{asm:abadie} is standard for the density ratio of control and treatment observations (see, e.g., Assumption 2 in \cite{abadie2016matching}).
To grasp this, consider the case where $\delta = 0$. Here, $f_{1,0}(b)/f_{0,0}(b)$ corresponds to the density ratio of $\eta$ for treatment and control observations. Our method's core idea is to match treatment and control observations with respect to $\eta$. If $\mathbb{P}(\eta \in S \mid Q \geq \tau_0)/\mathbb{P}(\eta \in S \mid Q < \tau_0) = \infty$ for some set $S \subseteq \mathbb{R}$, matching becomes infeasible since in $S$, $\eta$'s in the treatment group cannot be matched with $\eta$'s in the control group. Here, the need for the density ratio of $\eta - Z^\top \delta$ to be bounded for any fixed $\delta$ arises because we do not directly observe $\eta$, but instead approximate it using $\hat \eta = \eta - Z^\top (\hat \gamma - \gamma_0)$.

It is easy to see that $\hat{\gamma}$ is $\sqrt{{n}}$-CAN. Our first proposition establishes that the estimator of $\beta_0$ obtained in Step 4 of Algorithm \ref{alg:est} is $\sqrt{{n}}$-CAN.
\\
\begin{proposition}[$\hat{\beta}$ is $\sqrt{{n}}$-CAN]
\label{prop:beta}
    Consider the estimator $\hat{\beta}$ of $\beta_0$ following the first four steps of Algorithm \ref{alg:est}. Under Assumptions \ref{asm:momentofxz} to \ref{asm:fbounded}, the estimator is $\sqrt{{n}}$-CAN:
    $$\sqrt{\tilde{n}}(\hat{\beta} - \beta_0) \overset{d}{\longrightarrow} \mathcal{N}\left(0, 
    \Sigma_\beta \right),$$
    where the explicit value of $\Sigma_\beta$ can be found in Equation \eqref{eq:betahat_var} of Appendix \ref{app:proof-prop}.
\end{proposition}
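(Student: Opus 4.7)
The plan is to decompose $\hat{\beta}-\beta_0$ via the usual OLS algebra for first-order-difference regression, identify the limit of each factor, and combine via Slutsky's theorem. Let $\pi$ be the permutation on $I_2^C$ that orders $\hat{\eta}_i$ increasingly, write $X_{(i)}, Y_{(i)}, \eta_{(i)}, Z_{(i)}, \epsilon_{(i)}$ for the induced ordering, and set $n_c := |I_2^C|$. Each control observation obeys $Y_i = X_i^{\top}\beta_0 + \ell(\eta_i) + \epsilon_i$, so first-order differencing gives
\[
\Delta Y_{(i)} = \Delta X_{(i)}^{\top}\beta_0 + \Delta\ell_{(i)} + \Delta\epsilon_{(i)}, \qquad \Delta\ell_{(i)} := \ell(\eta_{(i+1)}) - \ell(\eta_{(i)}),
\]
and hence $\hat{\beta}-\beta_0 = A_n^{-1}\bigl(B_n^{(\ell)} + B_n^{(\epsilon)}\bigr)$, where $A_n$, $B_n^{(\ell)}$, $B_n^{(\epsilon)}$ are the averages (over $1 \le i \le n_c - 1$) of $\Delta X_{(i)}\Delta X_{(i)}^{\top}$, $\Delta X_{(i)}\Delta\ell_{(i)}$, and $\Delta X_{(i)}\Delta\epsilon_{(i)}$, respectively.

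First I would show $A_n \xrightarrow{p} A$ for some positive-definite $A$. Because adjacent ordered observations have close $\hat{\eta}$-values, the conditional covariance of $\Delta X_{(i)}$ is approximately $2 v_0(\eta_{(i)})$ by the Lipschitz continuity in \Cref{asm:smoothness_conditional}, while the conditional mean (a difference of $g_0$) is negligible; a standard law-of-large-numbers argument then yields $A = 2\,\mathbb{E}\bigl[v_0(\eta)\mid Q<\tau_0\bigr]$, invertible under \Cref{asm:momentofxz} and \Cref{asm:smoothness_conditional}. Next, rewriting the telescoping $\sum_i \Delta X_{(i)}\Delta\epsilon_{(i)}$ as a $1$-dependent sequence and applying the martingale CLT \citep{billingsley1995probability}, with Lindeberg verified via the bounded third conditional moment of $\epsilon$ in \Cref{asm:errormeanvar} and the eigenvalue control in \Cref{asm:eigenvalue}, I would obtain $\sqrt{\tilde n}\, B_n^{(\epsilon)}\xrightarrow{d}\mathcal{N}(0, V_\epsilon)$ with $V_\epsilon$ proportional to $\sigma_\epsilon^2 A$.

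The delicate piece is $B_n^{(\ell)}$, because the ordering is based on $\hat{\eta}_i = \eta_i - Z_i^{\top}(\hat{\gamma}-\gamma_0)$, not on $\eta_i$. Using \Cref{asm:fbounded} to Taylor-expand $\ell$ around $\eta_{(i)}$ together with the identity $\eta_{(i)} = \hat{\eta}_{(i)} + Z_{(i)}^{\top}(\hat{\gamma}-\gamma_0)$, I would write
\[
\Delta\ell_{(i)} = \ell'(\eta_{(i)})\bigl[(\hat{\eta}_{(i+1)}-\hat{\eta}_{(i)}) + \Delta Z_{(i)}^{\top}(\gamma_0-\hat{\gamma})\bigr] + R_i,
\]
with $|R_i| = O\bigl((\eta_{(i+1)}-\eta_{(i)})^{2}\bigr)$ uniformly. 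The term driven by the adjacent $\hat{\eta}$-gaps is $O(1/n_c)$ per summand and contributes $o_p(\tilde{n}^{-1/2})$ by a direct second-moment calculation. The second bracket is linear in $\hat{\gamma}-\gamma_0$ and can be written as $\Pi_n(\gamma_0-\hat{\gamma})$, with $\Pi_n \xrightarrow{p} \Pi$ for an explicit limit $\Pi$ determined by $\ell'$ and the conditional cross-covariance $k_0$ of \Cref{asm:smoothness_conditional}. Since $I_1$ and $I_2$ are disjoint by the data splitting in Step 1, $\hat{\gamma}$ (computed on $I_1$) is independent of $I_2$, and since $\hat{\gamma}$ is itself $\sqrt{\tilde{n}}$-CAN by standard OLS theory under \Cref{asm:momentofxz} and \Cref{asm:errormeanvar}, Slutsky's theorem yields
\[
\sqrt{\tilde{n}}(\hat{\beta}-\beta_0) \xrightarrow{d} A^{-1}\bigl(\zeta_\epsilon + \Pi\,\zeta_\gamma\bigr),
\]
where $\zeta_\epsilon$ and $\zeta_\gamma$ are independent centered Gaussians; the covariance then gives $\Sigma_\beta$.

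The principal obstacle is the analysis of $B_n^{(\ell)}$. Naively one hopes that smoothness of $\ell$ together with adjacent $\hat{\eta}$-values forces this term to vanish, but sorting on $\hat{\eta}$ rather than $\eta$ inflates the adjacent $\eta$-gap from $O(1/n_c)$ to $O_p(\tilde{n}^{-1/2})$, which coincides with the target CLT rate. Correctly isolating the linear-in-$(\hat{\gamma}-\gamma_0)$ piece and controlling the quadratic remainders in $(\eta_{(i+1)}-\eta_{(i)})^{2}$---using the bounded second derivative of $\ell$ and a covariance bound that exploits the near-independence of adjacent ordered controls on the tight $\hat{\eta}$-grid---is the technical heart of the argument.
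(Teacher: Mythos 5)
Your proposal is correct and follows essentially the same route as the paper: the same OLS decomposition for the differenced regression, the same limit $2\,\mathbb{E}[\mathrm{var}(X\mid\eta,Q<\tau_0)\mid Q<\tau_0]$ (up to the normalization by $\tilde n$ versus $|I_2^C|$) for the Gram matrix, the same recognition that sorting on $\hat\eta$ rather than $\eta$ makes the $\ell$-difference term contribute a non-negligible piece linear in $\hat\gamma-\gamma_0$ with coefficient built from $\ell'$ and $\mathrm{cov}(X,Z\mid\eta,Q<\tau_0)$, and a martingale CLT for the $\epsilon$ term. The one step you treat more casually than the paper is the final combination: because the weights in your $B_n^{(\epsilon)}$ depend on $I_1$ through $\hat\gamma$, the two pieces are not finite-sample independent, and the paper establishes their joint asymptotic normality (with the asymptotic independence you assert) via a single martingale CLT whose filtration reveals $I_1$ before $I_2$, rather than by independence plus Slutsky alone.
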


We elaborate the key steps of the proof in Section \ref{sec:sketch} and defer the complete proof to Appendix \ref{app:proof-prop}.
\\
\begin{remark}[$\sqrt{\tilde{n}}$ versus $\sqrt{n}$]
    In Proposition \ref{prop:beta}, we present the asymptotic normality of $\sqrt{\tilde{n}}(\hat \beta - \beta_0)$, where, $\tilde{n} = n/3$ by definition. Therefore, it is immediate that $$\sqrt{n}(\hat \beta - \beta_0) \overset{d}{\longrightarrow} \mathcal{N}\left(0, 
    3\Sigma_\beta \right).$$
    Note that the factor of $3$ can be removed by cross-fitting. 
\end{remark}

Upon obtaining a $\sqrt{{n}}$-CAN estimator of $\beta_0$, we are now ready to present our main result 
pertaining to the $\sqrt{{n}}$-consistency and asymptotic normality of the ATT estimator $\hat \theta$, which is obtained via matching each treatment observation with its nearest control observation in terms of $\hat{\eta}$.
\\





\begin{theorem}[$\hat{\theta}$ is $\sqrt{{n}}$-CAN]
\label{thm:main}
Consider the estimator $\hat{\theta}$ of $\theta_0$ summarized in Algorithm \ref{alg:est}. Under Assumptions \ref{asm:errormeanvar} to \ref{asm:abadie}, the estimator is $\sqrt{{n}}$-CAN:
    $$\sqrt{\tilde{n}}(\hat{\theta} - \theta_0) \xrightarrow{d} \mathcal{N}(0, \sigma_\theta^2),$$
where the explicit form of $\sigma_\theta^2$ can be found in Equation \eqref{eq:thetavar} of Appendix \ref{app:proof-thm}.
\end{theorem}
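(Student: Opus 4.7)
The plan is to linearize $\hat\theta-\theta_0$ into four asymptotically tractable pieces using the structural form of model \eqref{eq:model_1}. Substituting $Y_i=\alpha_0(X_i,\eta_i)+X_i^\top\beta_0+\ell(\eta_i)+\eps_i$ for treated $i\in I_3^T$ and $Y_{c(i)}=X_{c(i)}^\top\beta_0+\ell(\eta_{c(i)})+\eps_{c(i)}$ for its matched control, each summand in $\hat\theta$ rewrites as
\[
(Y_i-X_i^\top\hat\beta)-(Y_{c(i)}-X_{c(i)}^\top\hat\beta)=\alpha_0(X_i,\eta_i)-(X_i-X_{c(i)})^\top(\hat\beta-\beta_0)+[\ell(\eta_i)-\ell(\eta_{c(i)})]+(\eps_i-\eps_{c(i)}).
\]
Averaging over $i\in I_3^T$ and subtracting $\theta_0$ yields the decomposition $\hat\theta-\theta_0=A_n+B_n+C_n+D_n$, where $A_n$ is the sample mean of $\alpha_0(X_i,\eta_i)-\theta_0$ on the treated group, $B_n$ is the $\hat\beta$-adjustment, $C_n$ is the nuisance-function discrepancy, and $D_n$ is the noise difference. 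Data splitting ensures that $(\hat\gamma,\hat\beta)$ are built from $I_1\cup I_2$ and are therefore, conditional on the respective design, independent of the fresh randomness in $I_3$ entering $A_n$ and $D_n$.

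The term $A_n$ is a sample mean of i.i.d.\ zero-mean variables with a finite third moment (Assumption \ref{asm:fbounded}), so a classical CLT gives $\sqrt{\tilde n}A_n\xrightarrow{d}\mathcal{N}(0,\mathrm{Var}(\alpha_0(X,\eta)\mid Q\geq\tau_0)/p_1)$ with $p_1=\mathbb{P}(Q\geq\tau_0)$. For $B_n$, Proposition \ref{prop:beta} gives $\sqrt{\tilde n}(\hat\beta-\beta_0)=O_p(1)$, while the smoothness of $g_\delta$ in Assumption \ref{asm:smoothness_conditional} together with the density-ratio condition in Assumption \ref{asm:abadie} ensures that $|I_3^T|^{-1}\sum_{i\in I_3^T}(X_i-X_{c(i)})\xrightarrow{p}\mathbb{E}[X\mid Q\geq\tau_0]-\mathbb{E}[g_0(\eta)\mid Q\geq\tau_0]=:\mu_X$. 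Slutsky's lemma then identifies $\sqrt{\tilde n}B_n$ asymptotically as $-\mu_X^\top\sqrt{\tilde n}(\hat\beta-\beta_0)$.

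The delicate step, and the main obstacle, is $C_n$. Although $\ell(\eta_i)-\ell(\eta_{c(i)})$ looks like a vanishing matching bias, it is actually of order $\tilde n^{-1/2}$ because we match on $\hat\eta=\eta-Z^\top(\hat\gamma-\gamma_0)$ rather than on $\eta$ itself, so the estimation error in $\hat\gamma$ leaks into the match. My plan is to Taylor expand
\[
\ell(\eta_i)-\ell(\eta_{c(i)})=\ell'(\eta_{c(i)})(\eta_i-\eta_{c(i)})+O\bigl((\eta_i-\eta_{c(i)})^2\bigr),
\]
and then decompose $\eta_i-\eta_{c(i)}=(\hat\eta_i-\hat\eta_{c(i)})+(Z_i-Z_{c(i)})^\top(\hat\gamma-\gamma_0)$. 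Under Assumption \ref{asm:abadie}, one-dimensional nearest-neighbour matching gives $|\hat\eta_i-\hat\eta_{c(i)}|=O_p(\tilde n^{-1})$, so the contribution of the first piece averages to $O_p(\tilde n^{-1})$, and the quadratic remainder is also $O_p(\tilde n^{-1})$. The leading contribution is therefore $(\hat\gamma-\gamma_0)^\top\cdot|I_3^T|^{-1}\sum_{i\in I_3^T}\ell'(\eta_{c(i)})(Z_i-Z_{c(i)})$, whose averaged factor, by Assumption \ref{asm:smoothness_conditional} applied to $q_\delta$ and a companion treatment-side conditional expectation, converges in probability to the deterministic vector $v^*=\mathbb{E}\bigl[\ell'(\eta)\{\mathbb{E}[Z\mid\eta,Q\geq\tau_0]-\mathbb{E}[Z\mid\eta,Q<\tau_0]\}\bigm|Q\geq\tau_0\bigr]$. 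Hence $\sqrt{\tilde n}C_n$ is asymptotically $(v^*)^\top\sqrt{\tilde n}(\hat\gamma-\gamma_0)$.

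The noise term $D_n$ rewrites as $|I_3^T|^{-1}\bigl(\sum_{i\in I_3^T}\eps_i-\sum_{j\in I_3^C}K_j\eps_j\bigr)$, where $K_j=\#\{i\in I_3^T:c(i)=j\}$ counts how many times control $j$ is used. Conditional on the design (covariates and the $K_j$'s), $D_n$ is a weighted sum of independent zero-mean $\eps$'s with conditional variance $\sigma_\eps^2|I_3^T|^{-2}(|I_3^T|+\sum_j K_j^2)$. A moment bound on $K_j$ in the spirit of Lemma 3 of \citet{abadie2016matching}, which extends to matching on the estimated $\hat\eta$ under Assumption \ref{asm:abadie}, controls $|I_3^T|^{-1}\sum_j K_j^2=O_p(1)$, and a conditional Lindeberg/martingale CLT delivers asymptotic normality with an explicit variance. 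Finally, independence across the three splits combined with the joint normality of $(\sqrt{\tilde n}A_n,\sqrt{\tilde n}(\hat\beta-\beta_0),\sqrt{\tilde n}(\hat\gamma-\gamma_0),\sqrt{\tilde n}D_n)$ and the linear representations of $B_n$ and $C_n$ yield $\sqrt{\tilde n}(\hat\theta-\theta_0)\xrightarrow{d}\mathcal{N}(0,\sigma_\theta^2)$ with $\sigma_\theta^2$ obtained by summing the four variance contributions (together with a cross-term between the $B_n$ and $C_n$ linearizations reflecting the correlation between $\hat\beta$ and $\hat\gamma$ induced by their shared reliance on $\hat\gamma$).
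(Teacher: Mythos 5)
Your proposal is correct and follows essentially the same route as the paper: your $A_n,B_n,C_n,D_n$ are exactly the paper's terms $(1),(2),(4),(3)$, you linearize the matching-bias term $C_n$ through the same Taylor expansion that transfers it onto $\sqrt{\tilde n}(\hat\gamma-\gamma_0)$, you control the match counts $K_j$ via the same Abadie--Imbens-type lemmas, and your limits $\mu_X$ and $v^*$ coincide with the paper's $-A_2$ and $A_4$ after rewriting the density-ratio expectations by the tower property. The cross-term you flag between the $B_n$ and $C_n$ linearizations is precisely what the paper absorbs into $A_5^\top\Sigma_\gamma A_5$ via its joint martingale CLT.
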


Theorem \ref{thm:main} is the main result of this paper, which proves that the ATT can be estimated at a parametric rate despite a non-parametric correlation between the unobserved errors in Equations \eqref{eq:model_1} and \eqref{eq:model_2}. The key steps of the proof are presented in Section \ref{sec:sketch}, while the detailed proof can be found in Appendix \ref{app:proof-thm}.
\\
\begin{remark}[Cross-fitting]
\label{remark:cf2}
    We can gain efficiency (in terms of asymptotic variance) by performing cross-fitting as described in Remark \ref{remark:cf}. In particular, if $\hat{\theta}_{\textrm{cf}}$ denotes our cross-fitting estimator, we have $$\sqrt{n}(\hat{\theta}_{\textrm{cf}} - \theta_0) \xrightarrow{d} \mathcal{N}(0, \sigma_\theta^2).$$
\end{remark}

\subsection{Estimation of asymptotic variance of ATT}
\label{sec:bootstr}
In our main theorem (Theorem \ref{thm:main}), we have established that the estimator obtained in Algorithm \ref{alg:est} is a $\sqrt{n}$-CAN estimator of the ATT $\theta_0$, with an asymptotic variance denoted by $\sigma_\theta^2$. Therefore, if we have a consistent estimator of $\sigma_\theta$ (say $\hat{\sigma}_\theta$), then by Slutsky's theorem, 
$$
\frac{1}{{\hat \sigma}_\theta}\sqrt{\tilde{n}}(\hat \theta - \theta_0) \overset{d}{\longrightarrow} \cN(0, 1) \,.
$$
This allows us to perform statistical inference. For example, one might be interested in testing $H_0: \theta_0 = 0$ vs. $H_1: \theta_0 \neq 0$, as it quantifies whether there is any treatment effect on the treated individuals.



However, $\hat{\sigma}_\theta$ is notoriously difficult to estimate as it involves numerous nuisance parameters. A more practical approach is to use techniques like bootstrapping. 
In a standard $n$-out-of-$n$ bootstrapping procedure, $n$ observations are drawn from the full data set of $n$ observations with replacement, and Algorithm \ref{alg:est} is applied to these sampled observations. This process is repeated $B$ times (i.e., sample $n$ observations with replacement and estimate $\theta_0$ based on these sampled observations) to yield $\{\hat \theta_b\}_{b=1}^B$. 
Based on these bootstrapped estimators, we estimate $\sigma_\theta^2$ as the variance of these estimators (scaled by $\tilde{n}$): 
\begin{equation*}
    \hat{\sigma}_\theta^2 = \tilde{n} \left(\frac{\sum_{b=1}^B \left(\hat{\theta}_b - \bar \theta_B\right)^2}{B-1}\right),
\end{equation*}
where $\bar \theta_B = \frac{1}{B}\left(\hat{\theta}_1 + \cdots + \hat{\theta}_B\right)$ and $\tilde{n} = n/3$. We empirically show in Section \ref{sec:sim2} that the bootstrap estimator is consistent, and leave its proof for future work.

\section{Simulation studies}
\label{sec:sim}
In this section, we conduct three simulation studies. The first simulation seeks to numerically illustrate the consistency and asymptotic normality of the estimator $\hat{\theta}$ for the ATT $\theta_0$, as outlined in Algorithm \ref{alg:est}. Additionally, it checks whether the asymptotic variance of the estimator matches the formula from Equation \eqref{eq:thetavar} and whether cross-fitting results in an efficiency gain as explained in Remarks \ref{remark:cf} and \ref{remark:cf2}. The second simulation illustrates that bootstrapping, as described in Section \ref{sec:bootstr}, provides a good approximation of the true asymptotic variance. Lastly, the third simulation study evaluates the capability of Algorithm \ref{alg:ite} to estimate the ITE.

\subsection{Data generating process}
For the first and second simulations, we consider a simple data-generating process:
\begin{align}
\label{eq:dgp-sim-1}
    Y &= (X_1^2 + X_2 X_3 + \eta^2) \mathds{1}_{Q \geq 0} + X_1 + X_3 + \eta/2 + \epsilon \\
\label{eq:dgp-sim-2} Q &= X_4 + \eta \,.
\end{align}

where $X_1, \cdots, X_4 \sim \mathcal{N}(0,1)$ i.i.d., $\epsilon \sim \mathcal{N}(0, 0.5)$, $\eta \sim \mathcal{U}(-1,1)$. Note that Equations \eqref{eq:dgp-sim-1} and \eqref{eq:dgp-sim-2} are a special instance of Equations \eqref{eq:model_1} and \eqref{eq:model_2} with $Z = (X_1, X_2, X_3, X_4)$, $X = (X_1, X_2, X_3)$, $\alpha_0(X,\eta) = X_1^2 + X_2 X_3 + \eta^2$, $\ell(\eta) = \eta/2$, $\tau_0 = 0$, $\beta_0 = (1,0,1)^\top$, and $\gamma_0 = (0,0,0,1)^\top$. For this particular data-generating process, $\sigma_\theta^2 \approx 11.455$. Also, the true ATT $=\theta_0 = \mathbb{E}(X_1^2 + X_2 X_3 + \eta^2 \mid X_4 + \eta \geq 0) \approx 1.333.$

\subsection{Simulation 1: Estimation of the ATT}
\label{sec:sim1}
By Theorem \ref{thm:main}, $\sqrt{\tilde{n}}(\hat{\theta} - \theta_0) \xrightarrow{d} \mathcal{N}(0, 11.455)$. To numerically verify this, we generate $1,000$ Monte-Carlo iterations following the data-generating process in Equations \eqref{eq:dgp-sim-1} and \eqref{eq:dgp-sim-2}, each of size $n = 12,000$.
This means that each split is of size $\tilde{n} = 4,000$. For each iteration $1 \leq k \leq 1,000$, we compute $\zeta_k = \sqrt{\tilde{n}}(\hat{\theta}_k - \theta_0)$, where $\hat{\theta}_k$ is the estimate of the ATT $\theta_0$ using iteration $k$ following Algorithm \ref{alg:est}. The sample mean and variance of the $\zeta_k$'s are around $0.05$ and $12.5$, respectively, close to $0$ and $11.455$. Moreover, the histogram of the $\zeta_k$'s (not shown) resembles a Gaussian distribution. It is worth noting that the result in Theorem \ref{thm:main} is still roughly valid despite the violation of Assumption \ref{asm:momentofxz} on the compactness of the support of $X$ and $Z$, since $X,Z$ are effectively compactly supported.

We now repeat what we did before, but instead compute $\zeta_k^{\textrm{cf}} = \sqrt{n}(\hat{\theta}_k^{\textrm{cf}} - \theta_0)$, where $\hat{\theta}_k^{\textrm{cf}}$ is the cross-fitted estimate of the ATT $\theta_0$ for iteration $k$ (see Remark \ref{remark:cf}). The sample mean and variance of the $\zeta_k^{\textrm{cf}}$'s are around $0.09$ and $11.2$ respectively, close to $0$ and $11.455$. Also, the histogram of the $\zeta_k^{\textrm{cf}}$'s, as shown in Figure \ref{fig:histo}, resembles that of a normal distribution, thus corroborating Remark \ref{remark:cf2}.

\begin{figure}[h]
    \centering
    \includegraphics[scale=0.55]{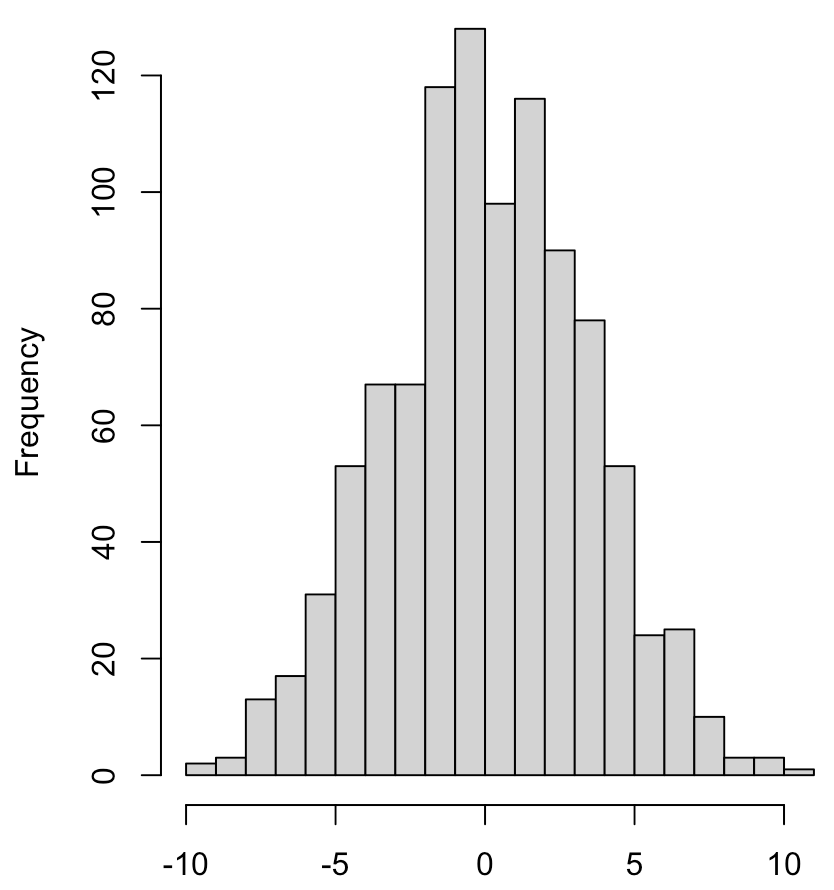}
    \caption{The histogram of the $\zeta_k^{\textrm{cf}}$'s looks fairly normal and centered around 0.}
    \label{fig:histo}
\end{figure}

\subsection{Simulation 2: Estimation of the asymptotic variance via bootstrapping}
\label{sec:sim2}
Our next simulation pertains to the validity of our asymptotic variance estimator obtained via bootstrapping. Recall that in the bootstrap, we start with a data set of size $n$ and follow the method described in Section \ref{sec:bootstr}, which involves sampling $n$ observations (with replacement) from the data set and performing Algorithm \ref{alg:est} to obtain an estimate of the ATT $\theta_0$. We then repeat this running procedure $B = 200$ times, with each iteration $1 \leq b \leq 200$ resulting in an estimate $\hat{\theta}_b$. The bootstrap estimator of the asymptotic variance is given by
\begin{equation*}
    \hat{\sigma}_\theta^2 = \tilde{n} \left(\frac{\sum_{b=1}^B \left(\hat{\theta}_b - \overline{{\theta}}_B\right)^2}{B-1}\right),
\end{equation*}
where $\overline{{\theta}}_B = \frac{1}{B}\left(\hat{\theta}_1 + \cdots + \hat{\theta}_B\right)$ and $\tilde{n} = n/3$. 

To evaluate the performance of the bootstrap, for each $n$, we generate $100$ different data sets of size $n$ and calculate $\hat{\sigma}_\theta^2$ for each data set. Table \ref{tab:boot} summarizes the mean and 90\% Monte-Carlo confidence region (CR) of the $\hat{\sigma}_\theta^2$'s for each $n \in \{2,000, 5,000, 12,000\}$. 

\begin{table}[h]
\centering
\begin{tabular}{cccc}
\hline
$n$      & $2,000$ & $5,000$ & $12,000$  \\ \hline
Mean of the $\hat{\sigma}_\theta^2$'s & $12.0$ & $11.7$ & 11.9 \\ \hline
90\% Monte-Carlo CR of the $\hat{\sigma}_\theta^2$'s & $(10.2, 13.7)$ & $(10.2, 13.5)$ & (10.3, 13.4) \\ \hline
\end{tabular}
\caption{The bootstrap provides a good estimate of the true asymptotic variance $\sigma_\theta^2 = 11.455$.} 
\label{tab:boot}
\end{table} 


\subsection{Simulation 3: Estimation of the ITE}
This section presents our simulation results for estimating the ITE function $\alpha_0$ via the method proposed in Section \ref{sec:ite-av-est}. To illustrate our method, we consider two simulation scenarios: (i) $\alpha_0$ only depends on $X$, which is equivalent to the standard CATE parameter, i.e., $Y_i(1) - Y_i(0) = \alpha_0(X_i)$; and (ii) $\alpha_0$ depends on both $X_i$ and $\eta_i$, i.e., $Y_i(1) - Y_i(0) = \alpha_0(X_i, \eta_i)$. 

\subsubsection{Case I: $\alpha_0$ only depends on $X$}
This subsection assumes that $\alpha_0$ depends only on the background information $X$; in particular, we take $\alpha_0(X) = X_1^2 + X_2X_3$. Our data generating process is as follows: 
\begin{equation*}
\begin{split}
   &Y = \left(X_1^2 + X_2X_3 \right) \mathds{1}_{Q \geq 0} + X_1 + X_3 + \eta/2 + \epsilon \\
   &Q = X_4 + \eta,
\end{split}
\end{equation*}
where, same as before, $X_1, \cdots, X_4 \sim \mathcal{N}(0,1)$ i.i.d., $\epsilon \sim \mathcal{N}(0, 0.5)$, $\eta \sim \mathcal{U}(-1,1)$, $Z = (X_1, X_2, X_3, X_4)$, and $X = (X_1, X_2, X_3)$. 

To obtain an estimate $\hat{\alpha}(X)$ of the ITE, we first follow Steps 1 to 5 of Algorithm \ref{alg:est} and then regress $(Y_i - X_i^\top \hat \beta)  - (Y_{c(i)} - X_{c(i)}^\top \hat \beta)$ on $X_i$ using cubic B-splines with degrees of freedom chosen via 4-fold cross-validation, and quadratic interaction terms. Table \ref{tab:mse1} shows that the mean squared error (MSE) of $\hat{\alpha}(X)$ among all treated individuals tends to decrease as the sample size $n$ increases. Moreover, Figure \ref{fig:itex} provides a comparison between $\hat{\alpha}(X)$ and $\alpha_0(X)$ for $(X_2, X_3) = (\pm 0.7, \pm 0.2)$ when $n = 50,000$, demonstrating that our approach can predict the ITE well.

\begin{figure}[t]
    \centering
    \includegraphics[scale=0.3]{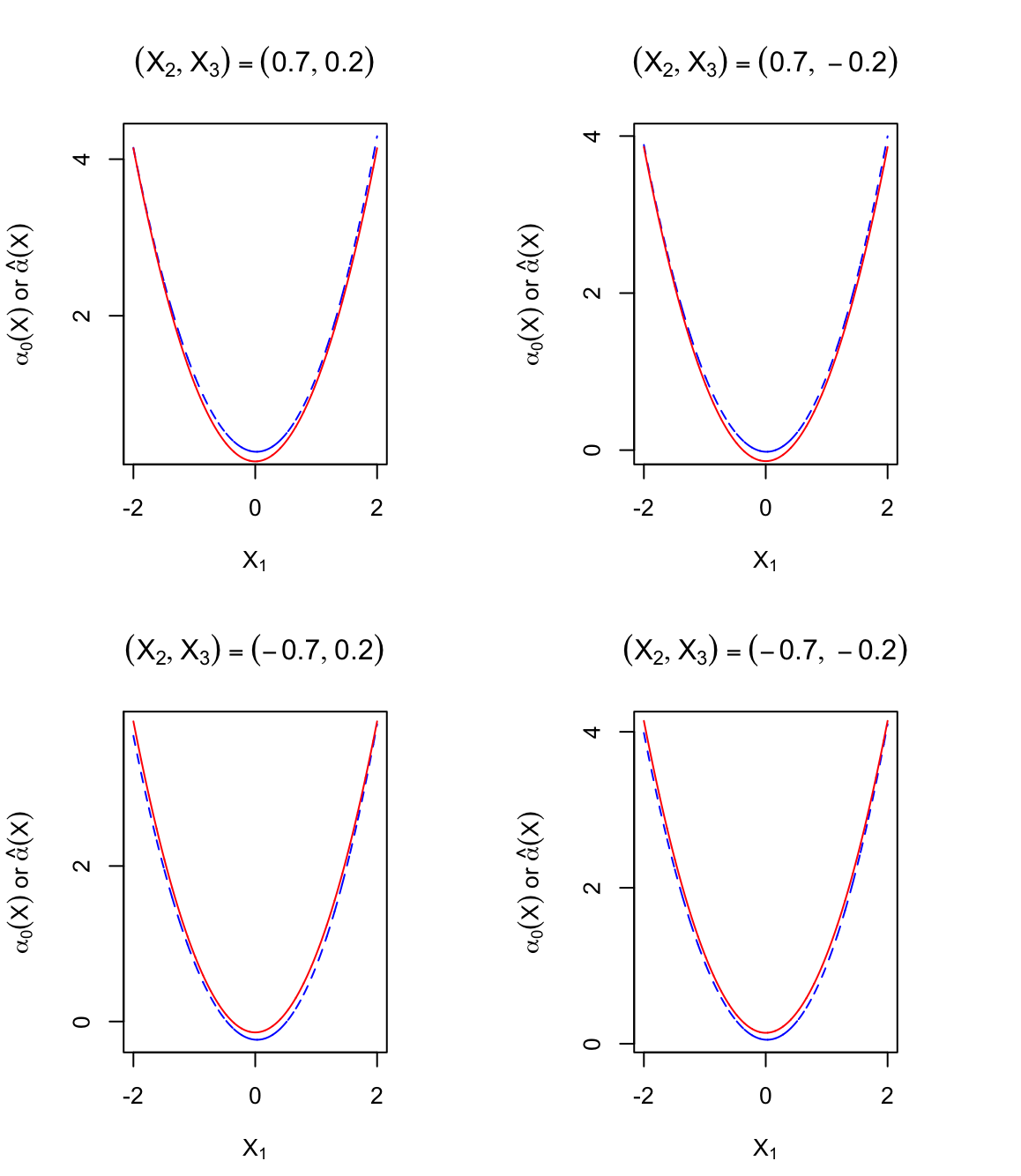}
    \caption{\textcolor{red}{Actual} vs. \textcolor{blue}{predicted} ITE for $(X_2, X_3) = (\pm 0.7, \pm 0.2)$.}
    \label{fig:itex}
\end{figure}

\begin{table}[h]
\centering
\begin{tabular}{ccccc}
\hline
$n$      & 4-fold CV d.f. & MSE of $\hat{\alpha}(X)$ & 4-fold CV d.f. & MSE of $\hat{\alpha}(X, \hat{\eta})$   \\ \hline
$5,000$ & 3 & 0.16 & 3 & 0.22 \\ \hline
$10,000$ & 3 & 0.04 & 3 & 0.08 \\ \hline
$20,000$ & 3 & 0.04 & 3 & 0.05 \\ \hline
$50,000$ & 3 & 0.01 & 3 & 0.03 \\ \hline
\end{tabular}
\caption{The MSEs of $\hat{\alpha}(X)$ and $\hat{\alpha}(X, \hat{\eta})$ tend to decrease as the sample size $n$ increases.}
\label{tab:mse1}
\end{table}

\subsubsection{Case II: $\alpha_0$ depends on $(X, \eta)$}
We now consider the same data generating process as used in Sections \ref{sec:sim1} and \ref{sec:sim2}, repeated below for clarity:
\begin{equation*}
\begin{split}
   &Y = \left(X_1^2 + X_2 X_3 + \eta^2 \right) \mathds{1}_{Q \geq 0} + X_1 + X_3 + \eta/2 + \epsilon \\
   &Q = X_4 + \eta.
\end{split}
\end{equation*}
Here, the ITE $\alpha_0(X,\eta) = X_1^2 + X_2 X_3 + \eta^2$ depends on both $X$ and $\eta$.

\begin{figure}[t]
    \centering
    \includegraphics[scale=0.3]{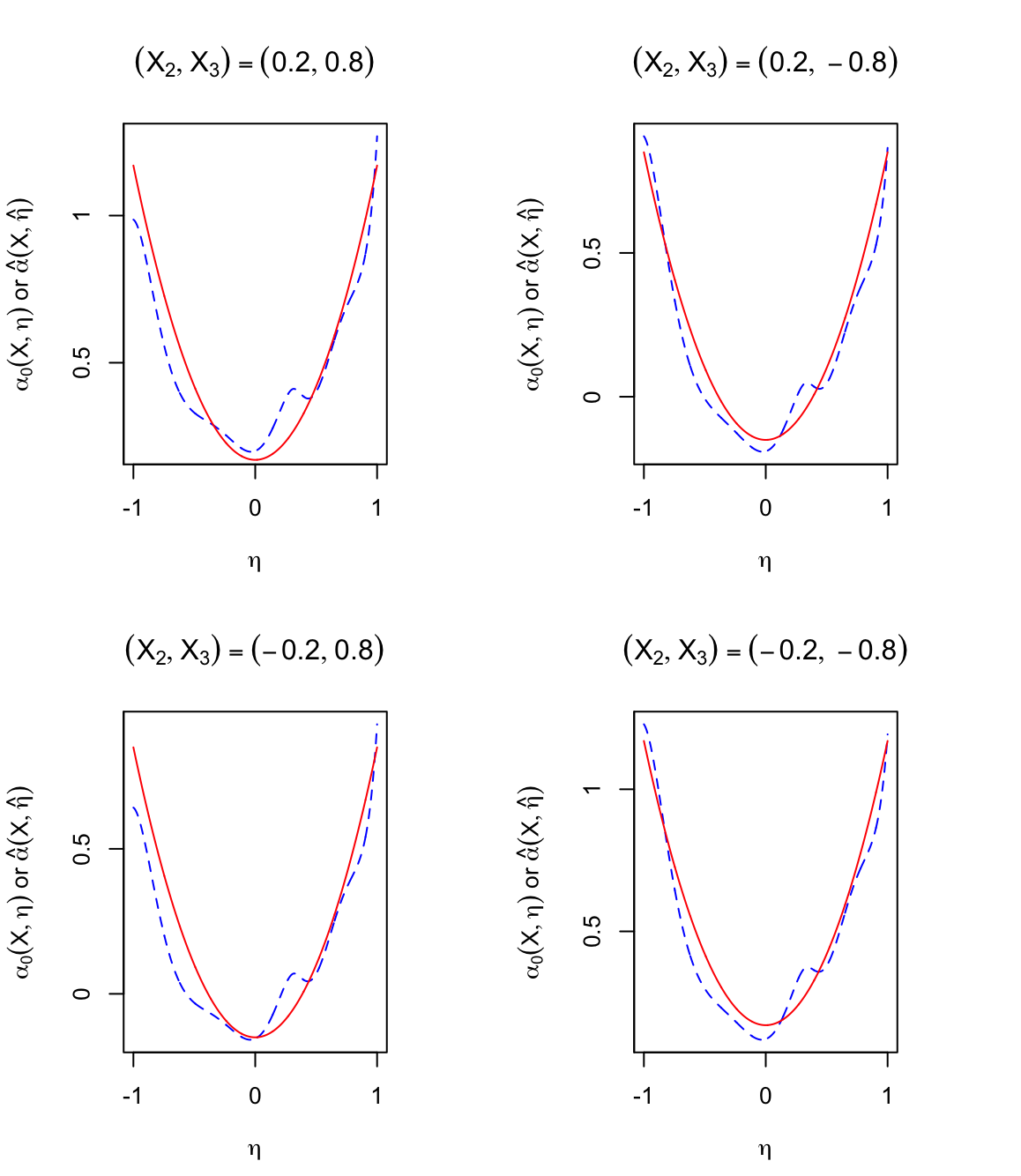}
    \caption{\textcolor{red}{Actual} vs. \textcolor{blue}{predicted} ITE for $(X_1, X_2, X_3, X_4) = (0.1, \pm 0.2, \pm 0.8, 1.5)$.}
    \label{fig:itex2}
\end{figure}

To obtain an estimate $\hat{\alpha}(X,\hat{\eta})$ of the ITE, we first follow Steps 1 to 5 of Algorithm \ref{alg:est} and then regress $(Y_i - X_i^\top \hat \beta)  - (Y_{c(i)} - X_{c(i)}^\top \hat \beta)$ on $X_i$ and $\hat{\eta}_i$ using cubic B-splines with degrees of freedom (d.f.) chosen via 4-fold cross-validation (CV), and quadratic interaction terms. Recall that $\eta$ is unknown in this scenario and thus needs to be estimated. Table \ref{tab:mse1} shows that the mean squared error (MSE) of $\hat{\alpha}(X, \hat{\eta})$ among all treated individuals tends to decrease as the sample size $n$ increases. Moreover, Figure \ref{fig:itex2} provides a comparison between $\hat{\alpha}(X, \hat{\eta})$ and $\alpha_0(X, \eta)$ for $(X_1, X_2, X_3, X_4) = (0.1, \pm 0.2, \pm 0.8, 1.5)$ and $\eta \in [-1, 1]$ when $n = 50,000$, demonstrating that our approach can predict the ITE well.

\section{Real data analysis}
\label{sec:rda}
In this section, we apply our method to two real data sets.  The first data set examines how Islamic political rule impacts women's empowerment, while the second one focuses on how academic probation based on grades affects a student's future GPA.

\subsection{Effect of Islamic party on women's education}

In this subsection, we utilize a data set originally introduced in \citet{meyersson2014islamic} regarding the effect of Islamic political rule on women's empowerment. Specifically, we aim to investigate whether the winning of Islamic parties affects women's educational outcomes. The data set consists of $2,629$ rows, each representing a municipality. The response variable $Y$ is calculated as $Y_w - Y_m$, where $Y_w$ ($Y_m$) denotes the percentage of women (men) aged 15 to 20 who completed high school by 2020. The treatment, $Q$, is determined by the difference in vote share between the largest Islamic and largest secular party in the 1994 election. An Islamic party is considered elected if and only if $Q > 0$. 

The covariates $X$ used in the analysis are as follows: (1) the Islamic vote share in 1994; (2) the number of parties with at least one vote in 1994; (3) the log of population in 1994; (4) whether a municipality is a district center; (5) whether a municipality is a province center; (6) whether a municipality is a sub-metro center; (7) whether a municipality is a metro center; (8) the share of population below the age of 19 in 2000; (9) the share of population above the age of 60 in 2000; (10) the ratio of males to females in 2000; and (11) the average household size in 2000.  

Using our method, we find an estimated ATT of $\hat{\theta} = 0.65$. To test the null hypothesis $H_0: \theta_0 = 0$ versus the alternative hypothesis $H_1: \theta_0 \neq 0$, we employ bootstrapping with $B = 500$ bootstrap samples. We obtain a bootstrap mean of $0.68$ and a 95\% bootstrap confidence interval of $(-0.62, 2.13)$. This result is similar to the result obtained by \citet{mukherjee2021estimation}, who assumed a homogeneous treatment effect, where they found a 95\% bootstrap confidence interval of $(-0.42, 1.42)$.

\subsection{Effect of academic probation on subsequent GPA}
We now utilize \citeauthor{lindo2010ability}'s (\citeyear{lindo2010ability}) data to analyze whether grade-based academic probation affects a student's subsequent GPA. This data set comprises $44,362$ rows, with each row representing a student from one of three large Canadian universities denoted as $A$, $B$, and $C$. For each student, the response variable $Y$ is their GPA in the first term of the second year, and the treatment $Q$ is the difference between their first-year GPA and the academic probation threshold. A student is placed on probation if and only if $Q > 0$. 

The covariates $X$ used are as follows: (1) the student's high school grade; (2) the total credits taken by the student in the first year; (3) whether the student is from university $A$; (4) whether the student is from university $B$; (5) whether the student is a male; (6) whether the student was born in North America; (7) the student's age when entering college; and (8) whether the student is a native English speaker.

Using our method, we obtain an estimated ATT of $\hat{\theta} = 0.28$. To test the null hypothesis $H_0: \theta_0 = 0$ against the alternative hypothesis $H_1: \theta_0 \neq 0$, we again employ bootstrapping with $B = 500$ bootstrap samples. The bootstrap mean and 95\% confidence interval are $0.27$ and $(0.22, 0.32)$, respectively. Since the confidence interval is entirely positive, we conclude that students who are placed on academic probation in their first year tend to see an improvement in their GPA in the first term of the second year. This conclusion is consistent with findings by Lindo et al. (2010) and Mukherjee et al. (2021), with the latter reporting a 95\% bootstrap confidence interval of $(0.25, 0.32)$.
\section{Conclusion}
\label{sec:conc}
In this paper, we developed an algorithm to estimate the ATT in a non-randomized treatment setting, where the treatment assignment depends on whether a variable exceeds a pre-specified threshold. Our method assumes that the treatment effect for each individual depends on their observed and unobserved covariates, and incorporates all individuals rather than only those close to the threshold. We proved that the resulting ATT estimator is both $\sqrt{n}$-consistent and asymptotically normal under standard regularity conditions, with empirical evidence showing that its asymptotic variance can be consistently estimated via the bootstrap. Moreover, a slight adjustment to our algorithm  allows us to estimate both the ITE and CATE, though we do not explore the theoretical properties of the corresponding estimators given the manuscript's complexity. Finally, we validated the effectiveness of our method through synthetic and real data analyses. Future work may focus on establishing the consistency of the bootstrap variance estimator and conducting inference on the ITE and CATE estimators.
\section{Roadmap of theoretical proofs}
\label{sec:sketch}

In this section, we outline proof sketches for Proposition \ref{prop:beta} and Theorem \ref{thm:main}, with full proofs provided in Appendices \ref{app:proof-prop} and \ref{app:proof-thm}, respectively.

\subsection{Proof sketch of Proposition \ref{prop:beta}}
We divide the proof into 4 key steps:

\underline{\textbf{Step 1: Asymptotic normality of $\hat{\gamma}_{\tilde{n}}$}} 

In Algorithm \ref{alg:est}, we first perform OLS of $Q_i$ against $Z_i$ for observations in $I_1$. The goal is to estimate each $\eta$ with $\hat{\eta} = Q - Z^\top \hat{\gamma}_{\tilde{n}}$, the main ingredient for differencing and matching in Steps 4 and 6, respectively. Observe that $$\sqrt{\tilde{n}}(\hat{\gamma}_{\tilde{n}} - \gamma_0) = \sum_{i=1}^{\tilde{n}} \frac{1}{\sqrt{\tilde{n}}} \left( \frac{\tilde{Z}^\top \tilde{Z}}{\tilde{n}}\right)^{-1} Z_i \eta_i,$$
where $\tilde{Z} = (Z_1; Z_2; \cdots; Z_{\tilde{n}})^\top \in \mathbb{R}^{{\tilde{n}} \times d_Z}$, converges in distribution to $\mathcal{N}(0, \Sigma_\gamma)$, where $\Sigma_\gamma = \sigma_\eta^2 \Sigma_Z^{-1}$. To estimate $\beta_0$, we regress the first-order differences of $Y$ on $X$ (based on the $\hat{\eta}$ values) for observations in the second partition that belong to the control group, denoted by $I_2^C$. 

Observe that \begin{equation}
\label{eq:betahatnormeqp}
    \sqrt{{\tilde{n}}}({\hat{\beta}_{\tilde{n}}} - \beta_0) = \left( \frac{1}{{\tilde{n}}} (\Delta X)^\top \Delta X \right)^{-1} \left( \frac{1}{\sqrt{{\tilde{n}}}} (\Delta X)^\top \Delta w \right),
\end{equation}
where $\Delta X$ consists of $X_{(i+1)} - X_{(i)}$ terms, $\Delta Y$ consists of $Y_{(i+1)} - Y_{(i)}$ terms, and $\Delta w$ consists of $\ell(\eta_{(i+1)}) - \ell(\eta_{(i)}) + \epsilon_{(i+1)} - \epsilon_{(i)}$ terms. To establish the asymptotic normality of ${\hat{\beta}_{\tilde{n}}}$, we examine each term in the product on the RHS of Equation \eqref{eq:betahatnormeqp}. We show that the first term converges in probability while the second terms converges in distribution to a normal distribution, whence the conclusion follows via Slutsky's theorem.

\underline{\textbf{Step 2: First term of RHS of Equation \eqref{eq:betahatnormeqp}}} 

We initially fix the first partition of the data; in other words, we first assume that ${\hat{\delta}_{\tilde{n}}}:= \hat{\gamma}_{\tilde{n}} - \gamma_0$ is fixed. Note that each observation in the control group within $I_2$ can be written as $X = g_{\hat{\delta}_{\tilde{n}}}(\hat{\eta}) + u_{\hat{\delta}_{\tilde{n}}}$, where $\hat{\eta} = \eta - Z^\top \hat{\delta}_{\tilde{n}}$ and $\mathbb{E}(u_{\hat{\delta}_{\tilde{n}}} \mid \hat{\eta}, Q < \tau_0) = 0$. We can also write $X = g_0(\eta) + u_0$, where $\mathbb{E}(u_0 \mid \eta, Q < \tau_0) = 0$. Since $X_{(i)} = g_{\hat{\delta}_{\tilde{n}}}(\hat{\eta}_{(i)})+ +u_{\hat{\delta}_{\tilde{n}(i)}}$ for every $i$, we have
\begin{equation*}
    \begin{split}
        \frac{1}{{\tilde{n}}} (\Delta X)^\top \Delta X = \frac{1}{{\tilde{n}}} \sum_{i=1}^{|I_2^C|-1} (X_{(i+1)} - X_{(i)}) (X_{(i+1)} - X_{(i)})^\top
        =  L + (M + M^\top) + P \,,
    \end{split}
\end{equation*}
where
\begin{equation*}
    \begin{split}
        &L = \frac{1}{{\tilde{n}}} \sum_{i=1}^{|I_2^C|-1} \left(g_{\hat{\delta}_{\tilde{n}}}({\hat{\eta}}_{(i+1)}) - g_{\hat{\delta}_{\tilde{n}}}({\hat{\eta}}_{(i)}) \right) \left(g_{\hat{\delta}_{\tilde{n}}}({\hat{\eta}}_{(i+1)}) - g_{\hat{\delta}_{\tilde{n}}}({\hat{\eta}}_{(i)}) \right)^\top, \\
        &M = \frac{1}{{\tilde{n}}} \sum_{i=1}^{|I_2^C|-1} \left(g_{\hat{\delta}_{\tilde{n}}}({\hat{\eta}}_{(i+1)}) - g_{\hat{\delta}_{\tilde{n}}}({\hat{\eta}}_{(i)}) \right) \left(uu_{\hat{\delta}_{\tilde{n}(i+1)}} - u_{\hat{\delta}_{\tilde{n}(i)}}\right)^\top, \\
        &P = \frac{1}{{\tilde{n}}} \sum_{i=1}^{|I_2^C|-1} \left(u_{\hat{\delta}_{\tilde{n}(i+1)}} - u_{\hat{\delta}_{\tilde{n}(i)}} \right) \left(u_{\hat{\delta}_{\tilde{n}(i+1)}} - u_{\hat{\delta}_{\tilde{n}(i)}} \right)^\top.
    \end{split}
\end{equation*}

Utilizing the fact that the ordering is done on the $\hat{\eta}_i$'s and $g_{\hat{\delta}_{\tilde{n}}}$ is Lipschitz, we can show via the Cauchy-Schwarz inequality that $L$ and $M$ are both $o_p(1)$. Moreover, we can show that conditional on $\hat{\delta}_{\tilde{n}}$,
\begin{align*}
    o_p(1) & = P - \frac{2}{{\tilde{n}}} \sum_{i=1}^{|I_2^C|} u_{\hat{\delta}_{\tilde{n}i}} u_{\hat{\delta}_{\tilde{n}i}}^\top \\
    & =  P - 2\frac{|I_2^C|}{{\tilde{n}}}\left(\frac{1}{|I_2^C|} \sum_{i=1}^{|I_2^C|} (X_{i} - \bbE(X \mid \hat \eta_{i}, Q < \tau_0))(X_{i} - \bbE(X \mid \hat \eta_{i}, Q < \tau_0))^\top\right) \\
    \implies & P - 2\bbP(Q < \tau_0) \bbE\left(\var\left(X \mid \hat{\eta}, Q < \tau_0\right) \mid Q < \tau_0\right) = o_p(1).
\end{align*}

We can now apply Lebesgue's DCT to prove that $P$ (and thus $ (\Delta X)^\top \Delta X/{\tilde{n}}$) converges to $2\bbP(Q < \tau_0) \bbE\left(\var\left(X \mid \eta, Q < \tau_0\right) \mid Q < \tau_0 \right)$ in probability. Finally, an application of the continuous mapping theorem yields
\begin{equation}
    \left( \frac{1}{{\tilde{n}}} (\Delta X)^\top \Delta X \right)^{-1} \overset{P}{\longrightarrow} \frac{1}{2\bbP (Q<\tau_0)}\Sigma_u^{-1},
    \label{eq:cmt}
\end{equation}
where $\Sigma_u := \bbE\left(\var\left(X \mid \eta, Q < \tau_0\right) \mid Q < \tau_0 \right)$.

\underline{\textbf{Step 3: Second term of RHS of Equation \eqref{eq:betahatnormeqp}}} 

We have
\begin{equation*}
    \begin{split}
        \frac{1}{\sqrt{{\tilde{n}}}} (\Delta X)^\top \Delta w &= \frac{1}{\sqrt{{\tilde{n}}}} \sum_{i=1}^{|I_2^C|-1} (X_{(i+1)} - X_{(i)}) \left(\ell(\eta_{(i+1)}) - \ell(\eta_{(i)}) + \epsilon_{(i+1)} - \epsilon_{(i)} \right) 
    \end{split}
\end{equation*}
As before, we expand the above equation by rewriting $X_i = g_{\hat{\delta}_{\tilde{n}}}(\hat \eta_i) + u_{\hat{\delta}_{\tilde{n}i}}$. Some customary algebra followed by ignoring lower order terms shows
\begin{align*}
      \frac{1}{\sqrt{{\tilde{n}}}} (\Delta X)^\top \Delta w &=\frac{1}{\sqrt{{\tilde{n}}}} \sum_{i=1}^{|I_2^C|-1} \left(\ell(\eta_{(i+1)}) - \ell(\eta_{(i)}) \right) \left( u_{\hat{\delta}_{\tilde{n}(i+1)}} - u_{\hat{\delta}_{\tilde{n}(i)}} \right) \\
      & \qquad \qquad + \frac{1}{\sqrt{{\tilde{n}}}} \sum_{i=1}^{|I_2^C|-1} \left(\epsilon_{(i+1)} - \epsilon_{(i)} \right) \left( u_{\hat{\delta}_{\tilde{n}(i+1)}} - u_{\hat{\delta}_{\tilde{n}(i)}} \right) + o_p(1) \\
      & \triangleq F + H + o_p(1) \,,
\end{align*}
where both $F$ and $H$ contribute to the asymptotic normality. 

First, observe that
\begin{equation*}
    \begin{split}
        F &= \frac{1}{\sqrt{{\tilde{n}}}} \sum_{i=1}^{|I_2^C|-1}\left( \ell(\eta_{(i+1)}) - \ell(\hat{\eta}_{(i+1)})\right) \left( u_{\hat{\delta}_{\tilde{n}(i+1)}} - u_{\hat{\delta}_{\tilde{n}(i)}} \right) \\
        & \qquad - \frac{1}{\sqrt{{\tilde{n}}}} \sum_{i=1}^{|I_2^C|-1}\left( \ell(\eta_{(i)}) - \ell(\hat{\eta}_{(i)})\right) \left( u_{\hat{\delta}_{\tilde{n}(i+1)}} - u_{\hat{\delta}_{\tilde{n}(i)}} \right) \\
        & \qquad \qquad + \frac{1}{\sqrt{{\tilde{n}}}} \sum_{i=1}^{|I_2^C|-1}\left( \ell(\hat{\eta}_{(i+1)}) - \ell(\hat{\eta}_{(i)})\right) \left( u_{\hat{\delta}_{\tilde{n}(i+1)}} - u_{\hat{\delta}_{\tilde{n}(i)}} \right).
    \end{split}
\end{equation*}

The last summand can be shown to be asymptotically negligible, again due to the fact that the ordering is done on the $\hat{\eta}_i$'s and $f$ is Lipschitz. Also, the first and second summands can be approximated via a two-step Taylor expansion: 
\begin{equation*}
    \ell(\eta_{(i)}) - \ell(\hat{\eta}_{(i)}) = (\eta_{(i)} - \hat{\eta}_{(i)}) f'(\hat{\eta}_{(i)}) + \frac{(\eta_{(i)} - \hat{\eta}_{(i)})^2}{2} \ell''(\tilde{\eta}_{(i)}) \,,
\end{equation*}
where $\tilde \eta_{(i)}$ is between $\eta_{(i)}$ and $\hat \eta_{(i)}$ (we can similarly expand $\ell(\eta_{(i+1)}) - \ell(\hat{\eta}_{(i+1)})$). The quadratic terms can be shown to be asymptotically negligible. 

For the linear terms, we first decompose $Z$ for control units into $q_{\hat{\delta}_{\tilde{n}}}(\hat{\eta}) + w_{\hat{\delta}_{\tilde{n}}}$, where $\hat{\eta} = \eta - Z^\top \hat{\delta}_{\tilde{n}}$ and $\mathbb{E}(w_{\hat{\delta}_{\tilde{n}}} \mid \hat{\eta}, Q < \tau_0) = 0$. Similarly, we can write $Z = q_0(\eta) + w_0$, where $\mathbb{E}(w_0 \mid \eta, Q < \tau_0) = 0$. Some algebra followed by ignoring lower order terms yields
$F = \sqrt{{\tilde{n}}}(\hat{\gamma}_{\tilde{n}} - \gamma_0) \bar{F} + o_p(1)$, where
$$\bar{F} = \frac{1}{{\tilde{n}}} \left( \sum_{i=1}^{|I_2^C|-1} \left( u_{{\hat{\delta}_{\tilde{n}(i+1)}}} - u_{{\hat{\delta}_{\tilde{n}(i)}}}\right) \left( \ell'(\hat{\eta}_{(i+1)}) (w_{{\hat{\delta}_{\tilde{n}(i+1)}}})^\top - \ell'(\hat{\eta}_{(i)}) (w_{{\hat{\delta}_{\tilde{n}(i)}}})^\top \right)\right).$$

It is possible to show that conditional on $\hat{\delta}_{\tilde{n}}$,
\begin{align*}
    &\bar{F} - 2\frac{|I_2^C|}{{\tilde{n}}} \left(\frac{1}{|I_2^C|} \sum_{i=1}^{|I_2^C|} \ell'(\hat{\eta}_i) u_{{\hat{\delta}_{\tilde{n}i}}} w_{{\hat{\delta}_{\tilde{n}i}}}^\top \right) = o_p(1)\\
    &\implies \bar{F} - 2\bbP(Q < \tau_0) \bbE\left( \ell'(\hat{\eta}) u_{\hat{\delta}_{\tilde{n}}} w_{\hat{\delta}_{\tilde{n}}}^\top \mid Q < \tau_0\right) = o_p(1).
\end{align*}

Following a similar approach as for the term $P$ in Step 2, we have that $\bar{F}$ converges to $2 \bbP (Q < \tau_0) \mathbb{E}\left(\ell'({\eta}) u_{0} w_{0}^\top \mid Q < \tau_0 \right)$ in probability, where $u_0 = X - \mathbb{E}(X \mid \eta, Q < \tau_0)$ as before. Omitting $o_p(1)$ terms and using the expansion of $\sqrt{{\tilde{n}}}(\hat{\gamma}_{\tilde{n}} - \gamma_0)$ in Step 1, we have
\begin{equation}
    F = \sum_{i=1}^{\tilde{n}} \frac{2}{\sqrt{{\tilde{n}}}} \bbP(Q < \tau_0) \mathbb{E}\left(\ell'({\eta}) u_0 w_0^\top \mid Q < \tau_0 \right)  \left( \frac{\tilde{Z}^\top \tilde{Z}}{{\tilde{n}}}\right)^{-1} Z_i \eta_i.
    \label{eq:f}
\end{equation}

Finally, we can rewrite $H$ as
\begin{equation}
    \begin{split}
        H &= \frac{1}{\sqrt{{\tilde{n}}}} \sum_{i=1}^{|I_2^C|-1}\left(  \epsilon_{(i+1)} - \epsilon_{(i)} \right) \left( u_{{\hat{\delta}_{\tilde{n}(i+1)}}} -u_{{\hat{\delta}_{\tilde{n}(i)}}}  \right) \\
        &= \frac{1}{\sqrt{{\tilde{n}}}} \Big( \epsilon_{(1)} (u_{{\hat{\delta}_{\tilde{n}(1)}}} - u_{{\hat{\delta}_{\tilde{n}(2)}}}) + \epsilon_{(2)} (2u_{{\hat{\delta}_{\tilde{n}(2)}}} - u_{{\hat{\delta}_{\tilde{n}(1)}}} - u_{{\hat{\delta}_{\tilde{n}(3)}}}) + \cdots \\ &+ \epsilon_{(|I_2^C|-1)} (2u_{{\hat{\delta}_{\tilde{n}(|I_2^C|-1)}}} - u_{{\hat{\delta}_{\tilde{n}(|I_2^C|- 2)}}} - u_{{\hat{\delta}_{\tilde{n}(|I_2^C|)}}}) 
        + \epsilon_{(|I_2^C|)} (u_{{\hat{\delta}_{\tilde{n}(|I_2^C|)}}} - u_{{\hat{\delta}_{\tilde{n}(|I_2^C|- 1)}}}) \Big) \\
        &:= \frac{1}{\sqrt{{\tilde{n}}}} \sum_{i=1}^{|I_2^C|} \epsilon_{(i)} a_{{\hat{\delta}_{\tilde{n}(i)}}}  := \frac{1}{\sqrt{{\tilde{n}}}} \sum_{i={\tilde{n}}+1}^{2{\tilde{n}}} \epsilon_i a_{{\hat{\delta}_{\tilde{n}}},i},
    \end{split}
    \label{eq:h}
\end{equation}
where $a_{{\hat{\delta}_{\tilde{n}}}, i} = 0$ for any treatment observation $i$.

\underline{\textbf{Step 4: Putting everything together}}

Now, we can employ the martingale central limit theorem \citep{billingsley1995probability} to establish the asymptotic normality of $\frac{1}{\sqrt{{\tilde{n}}}} (\Delta X)^\top \Delta w = F+ H + o_p(1).$ Effectively, we need to show $F + H$ is asymptotically normal. Recall that $F$ is primarily a function of $\eta_i$'s (see Equation \eqref{eq:f}) and $H$ is a function of both $\eta_i$'s and $\eps_i$'s (see Equation \eqref{eq:h}). Consequently, they are \emph{not} independent and thus we need to establish the normality jointly.  The detailed argument is presented in Appendix \ref{app:proof-prop}. 

Lastly, we use Slutsky's theorem to establish the asymptotic normality of $\sqrt{{\tilde{n}}}({\hat{\beta}_{\tilde{n}}} - \beta_0)$.

\subsection{Proof sketch of Theorem \ref{thm:main}}
We divide the proof into 4 key steps. We first introduce some notations. Let $t_i = \mathds{1}_{Q_i \geq \tau_0}$ denote the treatment status of observation $i$, ${\tilde{n}}_1 = \sum_{i=2{\tilde{n}}+1}^{3{\tilde{n}}} t_i$ denote the number of treatment observations in $I_3$, and ${\tilde{n}}_0 = {\tilde{n}} - {\tilde{n}}_1$ denote the number of control observations in $I_3$. 
Also, let $I_3^T$ (resp. $I_3^C$) denotes the group of individuals in $I_3$ with $t_i = 1$ (resp. $t_i = 0$). 
For each $i \in I_3^T$, let $c(i) \in I_3^C$ be its \emph{nearest neighbor} in the control group with respect to $\hat \eta$  
(see Step 5 of Algorithm \ref{alg:est}).

Following \citet{abadie2016matching}, for each $i \in I_3^C$, we define $K_{{\hat{\delta}_{\tilde{n}}},i}$ to be the number of times observation $i$ is used as a match. In other words, $K_{{\hat{\delta}_{\tilde{n}}}, i}$ denotes the number of treatment observations whose nearest neighbor (with respect to $\hat{\eta}$) is $i$. Here, ${\hat{\delta}_{\tilde{n}}} = \hat{\gamma}_{\tilde{n}} - \gamma_0$ that is obtained from $I_1$. We divide the proof into 4 key steps:

\textbf{\underline{Step 1: Decomposition of $\sqrt{{\tilde{n}}}(\hat{\theta}_{\tilde{n}} - \theta_0)$}} 

Write $\sqrt{{\tilde{n}}}(\hat{\theta}_{\tilde{n}} - \theta_0) = (1) + (2) + (3) + (4)$, where

\begin{equation*}
    \begin{split}
        &(1) = \frac{\sqrt{{\tilde{n}}}}{{\tilde{n}}_1} \sum_{i=2{\tilde{n}}+1}^{3{\tilde{n}}} t_i \left( \alpha_0(X_i, \eta_i) - \mathbb{E}(\alpha_0(X, \eta) \mid Q \geq \tau_0)\right),\\
        &(2) = \frac{\sqrt{{\tilde{n}}}}{{\tilde{n}}_1} (\beta_0 - {\hat{\beta}_{\tilde{n}}})^\top \sum_{i=2{\tilde{n}}+1}^{3{\tilde{n}}} \left( t_i - (1-t_i) K_{{\hat{\delta}_{\tilde{n}}},i} \right) X_i,\\
        &(3) = \frac{\sqrt{{\tilde{n}}}}{{\tilde{n}}_1} \sum_{i=2{\tilde{n}}+1}^{3{\tilde{n}}} \left( t_i - (1-t_i) K_{{\hat{\delta}_{\tilde{n}}},i} \right) \epsilon_i,\\
        &(4) = \frac{\sqrt{{\tilde{n}}}}{{\tilde{n}}_1} \sum_{i=2{\tilde{n}}+1}^{3{\tilde{n}}} t_i \left( \ell(\eta_i) - \ell(\eta_{c(i)})\right).
    \end{split}
\end{equation*}
We initially focus on examining terms (2) and (4) before returning to discuss terms (1) and (3) in Step 4.

\textbf{\underline{Step 2: Term $(2)$}} 

We can rewrite $(2)$ as
\begin{equation*}
    \begin{split}
        (2) = \left( \sqrt{{\tilde{n}}} (\beta_0 - {\hat{\beta}_{\tilde{n}}})\right)^\top \left( \frac{{\tilde{n}}}{{\tilde{n}}_1} \right) \left( \frac{1}{{\tilde{n}}} \sum_{i=2{\tilde{n}}+1}^{3{\tilde{n}}} t_i X_i - \frac{1}{{\tilde{n}}} \sum_{i=2{\tilde{n}}+1}^{3{\tilde{n}}} (1-t_i) K_{{\hat{\delta}_{\tilde{n}}},i} X_i \right).
    \end{split}
\end{equation*}

As established in Proposition \ref{prop:beta}, the first term in the product converges to $\mathcal{N}(0, \Sigma_\beta)$. Also, it is easy to see that the second term converges in to $1/\bbP (Q \geq \tau_0)$ in probability, and the first part of the third term converges to $\mathbb{E}(wX) = \bbP (Q \geq \tau_0) \mathbb{E}(X \mid Q \geq \tau_0)$ in probability. 

For the remaining term, it is possible to show that
\begin{equation*}
    \frac{1}{{\tilde{n}}} \sum_{i=2{\tilde{n}}+1}^{3{\tilde{n}}} (1-t_i) K_{{\hat{\delta}_{\tilde{n}}},i} X_i - \bbP (Q \geq \tau_0) \mathbb{E}\left( \frac{f_{1,{\hat{\delta}_{\tilde{n}}}}(\hat{\eta})}{f_{0,{\hat{\delta}_{\tilde{n}}}}(\hat{\eta})} X \mid Q < \tau_0 \right) = o_p(1)
\end{equation*}
conditional on ${\hat{\delta}_{\tilde{n}}}$ by slightly modifying the proofs of Lemmas S.6, S.7 and S.10 of \citet{abadie2016matching}. Here, $f_{i,{\hat{\delta}_{\tilde{n}}}}(\hat{\eta})$ is the density of $\hat{\eta}:= \eta - Z^\top {\hat{\delta}_{\tilde{n}}}$ conditional on $w = i$ for $i \in \{0,1\}$. 

Using Lebesgue's DCT, we have
\begin{equation*}
    \frac{1}{{\tilde{n}}} \sum_{i=2{\tilde{n}}+1}^{3{\tilde{n}}} (1-t_i) K_{{\hat{\delta}_{\tilde{n}}},i} X_i \overset{P}{\longrightarrow} \bbP (Q \geq \tau_0) \mathbb{E}\left( \frac{f_{1}({\eta})}{f_{0}({\eta})} X \mid Q < \tau_0 \right),
\end{equation*}
where $f_i(\eta)$ denotes the density of $\eta$ conditional on $w = i$ for $i \in \{0,1\}$. Intuitively, the density ratio term $f_1(\eta)/f_0(\eta)$ appears since for any control observation $i$, $K_{{\hat{\delta}_{\tilde{n}}},i}$ denotes the number of treatment observations having $i$ as their nearerst neighbor. 

Following the derivations in Proposition \ref{prop:beta} and after some algebra, we obtain 
\begin{align*}
    (2) &= \left( \sqrt{{\tilde{n}}} ({\hat{\beta}_{\tilde{n}}} - \beta_0)\right)^\top \left( \mathbb{E}\left( \frac{f_1({\eta})}{f_0({\eta})} X \mid Q < \tau_0 \right) - \mathbb{E}\left( X \mid Q \geq \tau_0 \right) \right) + o_p(1) \\
    &= \sum_{i=1}^{\tilde{n}} \frac{1}{\sqrt{{\tilde{n}}}} A_3^\top A_1 \left( \frac{\tilde{Z}^\top \tilde{Z}}{{\tilde{n}}}\right)^{-1} Z_i \eta_i + \sum_{i={\tilde{n}}+1}^{2{\tilde{n}}} \frac{1}{\sqrt{{\tilde{n}}}} A_3^\top \epsilon_i a_{{\hat{\delta}_{\tilde{n}}},i} + o_p(1),
\end{align*}
where $A_1 = 2 \bbP (Q < \tau_0) \mathbb{E}\left(\ell'({\eta}) u_0 w_0^\top \mid Q < \tau_0 \right)$ and $$A_3 = \frac{1}{2\bbP (Q<\tau_0)} \Sigma_u^{-1} \left( \mathbb{E}\left( \frac{f_1({\eta})}{f_0({\eta})} X \mid Q < \tau_0 \right) - \mathbb{E}\left( X \mid Q \geq \tau_0 \right) \right).$$
The terms $A_1$ and $A_3$ emerge from the decomposition of $\sqrt{{\tilde{n}}}({\hat{\beta}_{\tilde{n}}} - \beta_0)$ in the roadmap of Proposition \ref{prop:beta}'s proof (see Equations \eqref{eq:betahatnormeqp} to \eqref{eq:h}). 

\textbf{\underline{Step 3: Term $(4)$}} 

Next, we can decompose $(4)$ into
\begin{equation}
    \begin{split}
        (4) &= \frac{\sqrt{{\tilde{n}}}}{{\tilde{n}}_1} \sum_{i=2{\tilde{n}}+1}^{3{\tilde{n}}} t_i \left( \ell(\hat{\eta}_i) - \ell(\hat{\eta}_{c(i)})\right) 
        + \frac{\sqrt{{\tilde{n}}}}{{\tilde{n}}_1} \sum_{i=2{\tilde{n}}+1}^{3{\tilde{n}}} t_i \left( \ell(\eta_i) - \ell(\hat{\eta}_{i})\right) \\
        & \qquad - \frac{\sqrt{{\tilde{n}}}}{{\tilde{n}}_1} \sum_{i=2{\tilde{n}}+1}^{3{\tilde{n}}} t_i \left( \ell(\eta_{c(i)}) - \ell(\hat{\eta}_{c(i)})\right).
    \end{split}
    \label{eq:4-decomp}
\end{equation}   

As before, we first develop our argument for a fixed ${\hat{\delta}_{\tilde{n}}}$. Following the proof of Proposition 1 in \citet{abadie2016matching}, we can show that the first summand is $o_p(1)$. To address the second and third summands, we again utilize two-step Taylor expansions akin to those employed for the term $F$. After some algebra, we obtain
\begin{equation*}
    \begin{split}
        (4) &= \frac{\sqrt{{\tilde{n}}}}{{\tilde{n}}_1} \sum_{i=2{\tilde{n}}+1}^{3{\tilde{n}}} (t_i - (1 - t_i) K_{{\hat{\delta}_{\tilde{n}}}, i}) (\eta_i - \hat{\eta}_i) \ell'(\hat{\eta}_i) \\
        &= (\sqrt{{\tilde{n}}}(\hat{\gamma}_{\tilde{n}} - \gamma_0))^\top \left( \frac{{\tilde{n}}}{{\tilde{n}}_1}\right) \left( \frac{1}{{\tilde{n}}} \sum_{i=2{\tilde{n}}+1}^{3{\tilde{n}}} (t_i - (1 - t_i) K_{{\hat{\delta}_{\tilde{n}}}, i}) Z_i \ell'(\hat{\eta}_i)\right).
    \end{split}
\end{equation*}
The expression $t_i - (1 - t_i) K_{{\hat{\delta}_{\tilde{n}}}, i}$ intuitively stems from the observation that in the last two summands of Equation \eqref{eq:4-decomp}, each treated observation appears once and each control observation appears $K_{{\hat{\delta}_{\tilde{n}}}, i}$ times. 

Finally, following a similar derivation to that for the term $(2)$, 
we have 
$$(4) = \sum_{i=1}^{\tilde{n}} \frac{1}{\sqrt{{\tilde{n}}}} A_4^\top  \left( \frac{\tilde{Z}^\top \tilde{Z}}{{\tilde{n}}}\right)^{-1} Z_i \eta_i + o_p(1),$$ where
$$A_4 =  \mathbb{E}\left(Z\ell'({\eta}) \mid Q \geq \tau_0 \right) - \mathbb{E}\left( \frac{f_1({\eta})}{f_0({\eta})} Z \ell'({\eta}) \mid Q < \tau_0 \right).$$
Again, the density ratio term appears due to the presence of $K_{{\hat{\delta}_{\tilde{n}}}, i}$.

\textbf{\underline{Step 4: Putting everything together}} 

To show that $\sqrt{{\tilde{n}}}(\hat{\theta}_{\tilde{n}} - \theta_0) = (1) + (2) + (3) + (4)$ is asymptotically normal, we substitute the terms $(2)$ and $(4)$ following our derivations in Steps 2 and 3. Meanwhile, we use the formulas for $(1)$ and $(3)$ as in Step 1.
We have
\begin{equation*}
    \begin{split}
        &\sqrt{{\tilde{n}}}(\hat{\theta}_{\tilde{n}} - \theta_0) \\
        &= \sum_{i=1}^{\tilde{n}} \frac{1}{\sqrt{{\tilde{n}}}} A_5^\top \left( \frac{\tilde{Z}^\top \tilde{Z}}{{\tilde{n}}}\right)^{-1} Z_i \eta_i + \sum_{i={\tilde{n}}+1}^{2{\tilde{n}}}  \frac{1}{\sqrt{{\tilde{n}}}} A_3^\top \epsilon_i a_{{\hat{\delta}_{\tilde{n}}},i}\\  
        &+  \sum_{i=2{\tilde{n}}+1}^{3{\tilde{n}}} \frac{\sqrt{{\tilde{n}}}}{{\tilde{n}}_1} t_i \left( \alpha_0(X_i, \eta_i) - \mathbb{E}(\alpha_0(X, \eta) \mid Q \geq \tau_0)\right)
        + \sum_{i=2{\tilde{n}}+1}^{3{\tilde{n}}} \frac{\sqrt{{\tilde{n}}}}{{\tilde{n}}_1} \left( t_i - (1-t_i) K_{{\hat{\delta}_{\tilde{n}}},i} \right) \epsilon_i,
    \end{split}
\end{equation*}
where $A_5 = A_1^\top A_3 + A_4$. Since the terms are not independent, we need to apply the martingale central limit theorem \citep{billingsley1995probability} to establish normality jointly. Details are provided in Appendix \ref{app:proof-thm}.





\clearpage


\bibliography{icl}

\clearpage
\appendix

\begin{center}
\Large{\textbf{SUPPLEMENTARY MATERIAL}}
\end{center}

\section{Proof of Proposition \ref{prop:beta}}
\label{app:proof-prop}
In this section we present the proof of Proposition \ref{prop:beta}. First, it is easy to see that
$$\sqrt{{\tilde{n}}}(\hat{\gamma}_{\tilde{n}} - \gamma_0) = \sum_{i=1}^{\tilde{n}} \frac{1}{\sqrt{{\tilde{n}}}} \left( \frac{\tilde{Z}^\top \tilde{Z}}{{\tilde{n}}}\right)^{-1} Z_i \eta_i,$$
where $\tilde{Z} = (Z_1; Z_2; \cdots; Z_{\tilde{n}})^\top \in \mathbb{R}^{{\tilde{n}} \times d_Z}$. Also, it is a standard result in regression analysis that $\sqrt{{\tilde{n}}}(\hat{\gamma}_{\tilde{n}} - \gamma_0) \overset{d}{\longrightarrow} \mathcal{N}(0, \Sigma_\gamma)$, where $\Sigma_\gamma = \sigma_\eta^2 \cdot \plim\left( \frac{\tilde{Z}^\top \tilde{Z}}{{\tilde{n}}}\right)^{-1}$ due to Assumption \ref{asm:errormeanvar}.

Next, we define $\hat \delta_{\tilde{n}}:= \hat{\gamma}_{\tilde{n}} - \gamma_0$ and $\mathcal{W}_{\tilde{n}}$ to be the event where $||\hat{\delta}_{\tilde{n}}||_2 < 1$. Note that
\begin{align*}
    \mathbb{P}(\sqrt{{\tilde{n}}}(\hat{\gamma}_{\tilde{n}} - \gamma_0) \leq \square) = \mathbb{P}(\sqrt{{\tilde{n}}}(\hat{\gamma}_{\tilde{n}} - \gamma_0) \leq \square \cap \mathcal{W}_{\tilde{n}}) + \mathbb{P}(\sqrt{{\tilde{n}}}(\hat{\gamma}_{\tilde{n}} - \gamma_0) \leq \square \cap \mathcal{W}_{\tilde{n}}^c),
\end{align*}
where $\mathbb{P}(\sqrt{{\tilde{n}}}(\hat{\gamma}_{\tilde{n}} - \gamma_0) \leq \square \cap \mathcal{W}_{\tilde{n}}^c) \leq \mathbb{P}(\mathcal{W}_{\tilde{n}}^c) \to 0$ as $\tilde{n} \to \infty$ since $||\hat{\delta}_{\tilde{n}}||_2 \xrightarrow{P} 0$. Therefore, WLOG, we can work on the event $\mathcal{W}_{\tilde{n}}$ in the subsequent parts of the proof (e.g., establishing the asymptotic normality of $\hat{\beta}_{\tilde{n}}$ and $\hat{\theta}_{\tilde{n}}$ in Theorem \ref{thm:main}).

Now, let $I_2^C \subseteq I_2$ be the indices of observations in the second partition that belong to the control group. Recall that $\{\hat{\eta}_{(i)}\}$ is the order statistics of $\{\hat{\eta}_i\}_{i \in I_2^C}$, which induces an ordering on the $Y_i$'s, $X_i$'s, $\epsilon_i$'s, and $\eta_i$'s. We then have
\begin{equation*}
    Y_{(i+1)} - Y_{(i)} = (X_{(i+1)} - X_{(i)})^\top \beta_0 + \ell(\eta_{(i+1)}) - \ell(\eta_{(i)}) + \epsilon_{(i+1)} - \epsilon_{(i)},
\end{equation*}
compactly written as $\Delta Y = (\Delta X) \beta_0 + \Delta w$. It is easy to see that
\begin{equation}
\label{eq:betahatnormeq}
    \sqrt{{\tilde{n}}}(\hat{\beta}_{\tilde{n}} - \beta_0) = \left( \frac{1}{{\tilde{n}}} (\Delta X)^\top \Delta X \right)^{-1} \left( \frac{1}{\sqrt{{\tilde{n}}}} (\Delta X)^\top \Delta w \right).
\end{equation}

We first focus on the first term of Equation \eqref{eq:betahatnormeq}, assuming $\hat \delta_{\tilde{n}}:= \hat{\gamma}_{\tilde{n}} - \gamma_0$ is fixed (i.e., we conduct our analysis conditional on $I_1$). Define
$$C = \bigcup_{||\delta|| \leq 1} \textrm{supp}(\eta - Z^\top \delta).$$
It is clear that under Assumptions 1 and 2, we have $C$ is compact and $\textrm{supp}({\eta}) \subseteq C$.

Note that for observations in the control group, we can decompose $X$ into $g_{\hat{\delta}_{\tilde{n}}}(\hat{\eta}) + u_{\hat{\delta}_{\tilde{n}}}$, where $\hat{\eta} = \eta - Z^\top \hat{\delta}_{\tilde{n}}$ and $\mathbb{E}(u_{\hat{\delta}_{\tilde{n}}} \mid \hat{\eta}, Q < \tau_0) = 0$. Similarly, we also have $X = g_0(\eta) + u_0$, where $\mathbb{E}(u_0 \mid \eta, Q < \tau_0) = 0$. Using this decomposition, we have
\begin{equation*}
    \begin{split}
        \frac{1}{{\tilde{n}}} (\Delta X)^\top \Delta X &= \frac{1}{{\tilde{n}}} \sum_{i=1}^{|I_2^C|-1} (X_{(i+1)} - X_{(i)}) (X_{(i+1)} - X_{(i)})^\top \\
        &=  L + M + M^\top + P,
    \end{split}
\end{equation*}
where
\begin{equation*}
    \begin{split}
        &L = \frac{1}{{\tilde{n}}} \sum_{i=1}^{|I_2^C|-1} \left(g_{\hat{\delta}_{\tilde{n}}}({\hat{\eta}}_{(i+1)}) - g_{\hat{\delta}_{\tilde{n}}}({\hat{\eta}}_{(i)}) \right) \left(g_{\hat{\delta}_{\tilde{n}}}({\hat{\eta}}_{(i+1)}) - g_{\hat{\delta}_{\tilde{n}}}({\hat{\eta}}_{(i)}) \right)^\top, \\
        &M = \frac{1}{{\tilde{n}}} \sum_{i=1}^{|I_2^C|-1} \left(g_{\hat{\delta}_{\tilde{n}}}({\hat{\eta}}_{(i+1)}) - g_{\hat{\delta}_{\tilde{n}}}({\hat{\eta}}_{(i)}) \right) \left(u_{\hat{\delta}_{\tilde{n}_{(i+1)}}} - u_{\hat{\delta}_{\tilde{n}_{(i)}}} \right)^\top, \\
        &P = \frac{1}{{\tilde{n}}} \sum_{i=1}^{|I_2^C|-1} \left(u_{\hat{\delta}_{\tilde{n}_{(i+1)}}} - u_{\hat{\delta}_{\tilde{n}_{(i)}}}\right) \left(u_{\hat{\delta}_{\tilde{n}_{(i+1)}}} - u_{\hat{\delta}_{\tilde{n}_{(i)}}}\right)^\top.
    \end{split}
\end{equation*}

For each $i \leq j,k \leq d_X$, we have
\begin{equation*}
    \begin{split}
        \left|L_{j,k}\right| &= \left|\frac{1}{{\tilde{n}}} \sum_{i=1}^{|I_2^C|-1} \left(g_{j,\hat{\delta}_{\tilde{n}}}(\hat{\eta}_{(i+1)}) - g_{j,\hat{\delta}_{\tilde{n}}}(\hat{\eta}_{(i)}) \right) \left(g_{k,\hat{\delta}_{\tilde{n}}}(\hat{\eta}_{(i+1)}) - g_{k,\hat{\delta}_{\tilde{n}}}(\hat{\eta}_{(i)}) \right) \right| \\
        &\leq \frac{\nu_1}{{\tilde{n}}} \sum_{i=1}^{|I_2^C|-1} \left( \hat{\eta}_{(i+1)} - \hat{\eta}_{(i)} \right)^2 \\
        &= O_p(n^{-2})
    \end{split}
\end{equation*}
for some Lipschitz constant $\nu_1$ using the Cauchy-Schwarz inequality, Assumption \ref{asm:smoothness_conditional}, and the fact that the ordering is done on the $\hat{\eta}_i$'s. 

Similarly, we have
\begin{equation*}
    \begin{split}
        \left|M_{j,k}\right| &= \left|\frac{1}{{\tilde{n}}} \sum_{i=1}^{|I_2^C|-1} \left(g_{j,\hat{\delta}_{\tilde{n}}}(\hat{\eta}_{(i+1)}) - g_{j,\hat{\delta}_{\tilde{n}}}(\hat{\eta}_{(i)}) \right) \left(u_{k,\hat{\delta}_{\tilde{n}_{(i+1)}}} - u_{k,\hat{\delta}_{\tilde{n}_{(i)}}}\right) \right| \\
        &\leq \frac{\nu_2}{{\tilde{n}}} \sqrt{\sum_{i=1}^{|I_2^C|-1} \left( \hat{\eta}_{(i+1)} - \hat{\eta}_{(i)} \right)^2}  \sqrt{2 \sum_{i=1}^{|I_2^C|} (u_{k,\hat{\delta}_{\tilde{n}_{i}}})^2}\\
        &= O_p(n^{-1})
    \end{split}
\end{equation*}
for some Lipschitz constant $\nu_2$ using the Cauchy-Schwarz inequality and the elementary inequality $(a-b)^2 \leq 2(a^2+b^2)$, Assumptions \ref{asm:momentofxz} and \ref{asm:smoothness_conditional}, and the fact that the ordering is done on the $\hat{\eta}_i$'s. Specifically, Assumption \ref{asm:momentofxz} implies that $\mathbb{E}(u_{k,{\hat{\delta}_{\tilde{n}}}}^2 \mid Q < \tau_0) = \mathbb{E}(\textrm{var}(X_k \mid \hat{\eta}, Q < \tau_0) \mid Q < \tau_0)$ is finite. Thus, we have $L = o_p(1)$ and $M = o_p(1)$.

We now analyze $P$. Observe that
\begin{equation*}
    \begin{split}
    P &= \frac{1}{{\tilde{n}}} \sum_{i=1}^{|I_2^C|-1} \left(u_{\hat{\delta}_{\tilde{n}_{(i+1)}}} - u_{\hat{\delta}_{\tilde{n}_{(i)}}}\right) \left(u_{\hat{\delta}_{\tilde{n}_{(i+1)}}} - u_{\hat{\delta}_{\tilde{n}_{(i)}}}\right)^\top \\
    &= P_1 - P_2 - P_3 - P_3^\top,
    \end{split}
\end{equation*}
where 
\begin{equation*}
    \begin{split}
        &P_1 = \frac{2}{{\tilde{n}}} \sum_{i=1}^{|I_2^C|} u_{\hat{\delta}_{\tilde{n}_{i}}} u_{\hat{\delta}_{\tilde{n}_{i}}}^\top, \\
        &P_2 = \frac{1}{{\tilde{n}}} \left( u_{\hat{\delta}_{\tilde{n}_{(1)}}} u_{\hat{\delta}_{\tilde{n}_{(1)}}}^\top + u_{\hat{\delta}_{\tilde{n}_{(|I_2^C|)}}}u_{\hat{\delta}_{\tilde{n}_{(|I_2^C|)}}}^\top \right),\\
        &P_3 = \frac{1}{{\tilde{n}}} \sum_{i=1}^{|I_2^C|-1} u_{\hat{\delta}_{\tilde{n}_{(i+1)}}}u_{\hat{\delta}_{\tilde{n}_{(i)}}}^\top.
    \end{split}
\end{equation*}
It is easy to see that $P_1 - 2\bbP(Q < \tau_0)\Sigma_{u,{\hat{\delta}_{\tilde{n}}}} = o_p(1)$ where $\Sigma_{u,{\hat{\delta}_{\tilde{n}}}} = \mathbb{E}(\textrm{var}(X \mid \hat{\eta}, Q < \tau_0) \mid Q < \tau_0)$, and $P_2 = o_p(1)$. Moreover, we can also show that $P_3 = o_p(1)$. To see this, let $$W = \frac{1}{{\tilde{n}}} \sum_{i=1}^{|I_2^C|-1} u_{j,\hat{\delta}_{\tilde{n}_{(i+1)}}}u_{k,\hat{\delta}_{\tilde{n}_{(i)}}}$$
be the $(j,k)$-th entry of $P_3$. Then, $\mathbb{E}(W \mid Q_1 < \tau_0, \cdots, Q_{|I_2^C|} < \tau_0) = 0$ and 
\begin{align*}
    \textrm{var}(W \mid Q_1 < \tau_0, \cdots, Q_{|I_2^C|} < \tau_0) &= \frac{1}{{\tilde{n}}^2} \sum_{i=1}^{|I_2^C|-1} \mathbb{E}(u_{j,{\hat{\delta}_{\tilde{n}}}}^2 \mid Q < \tau_0) \mathbb{E}(u_{k,{\hat{\delta}_{\tilde{n}}}}^2 \mid Q < \tau_0) \\
    &= O(n^{-1})
\end{align*}
due to Assumption \ref{asm:momentofxz}. This implies $P_3 = o_p(1)$ since $W = o_p(1)$ by Chebyshev's inequality.

So far, we have shown that conditional on $\hat{\delta}_{\tilde{n}}$, we have
$$\frac{1}{{\tilde{n}}} (\Delta X)^\top \Delta X - 2\bbP(Q < \tau_0) \Sigma_{u,{\hat{\delta}_{\tilde{n}}}} = o_p(1),$$
where $\Sigma_{u,{\hat{\delta}_{\tilde{n}}}} = \mathbb{E}(\textrm{var}(X \mid \hat{\eta}, Q < \tau_0) \mid Q < \tau_0)$ and $\hat{\eta} = \eta - Z^\top \hat{\delta}_{\tilde{n}}$. 

We first show the following lemma:
\vspace{2mm}
\begin{lemma}
\label{lemma:cont}
For any sequence $\{\hat{\delta}_{\tilde{n}}\}_{\tilde{n} \geq 1}$, where $||\hat \delta_{\tilde{n}}||_2 \leq 1$ for every $\tilde{n}$, that converges to $0$, we have $\Sigma_{u,{\hat{\delta}_{\tilde{n}}}} \rightarrow \Sigma_u := \mathbb{E}(\textrm{var}(X \mid {\eta}, Q < \tau_0) \mid Q < \tau_0)$ as $\tilde{n} \rightarrow \infty$.
\end{lemma}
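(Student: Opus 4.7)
The plan is to show the convergence $\Sigma_{u,\hat\delta_{\tilde n}} \to \Sigma_u$ as an application of Lebesgue's dominated convergence theorem (DCT), after verifying the appropriate pointwise convergence of the integrand. Recall that, by definition,
\begin{equation*}
\Sigma_{u,\hat\delta_{\tilde n}} = \mathbb{E}\bigl(v_{\hat\delta_{\tilde n}}(\eta - Z^\top \hat\delta_{\tilde n}) \,\big|\, Q < \tau_0\bigr),
\qquad
\Sigma_u = \mathbb{E}\bigl(v_{0}(\eta) \,\big|\, Q < \tau_0\bigr),
\end{equation*}
so it suffices to prove that $v_{\hat\delta_{\tilde n}}(\eta - Z^\top \hat\delta_{\tilde n}) \to v_{0}(\eta)$ in an appropriate sense, combined with a uniform integrability argument.

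First I would establish pointwise (almost sure) convergence via a triangle-inequality decomposition:
\begin{equation*}
\bigl\| v_{\hat\delta_{\tilde n}}(\eta - Z^\top \hat\delta_{\tilde n}) - v_{0}(\eta) \bigr\|
\;\le\;
\bigl\| v_{\hat\delta_{\tilde n}}(\eta - Z^\top \hat\delta_{\tilde n}) - v_{\hat\delta_{\tilde n}}(\eta) \bigr\|
+
\bigl\| v_{\hat\delta_{\tilde n}}(\eta) - v_{0}(\eta) \bigr\|.
\end{equation*}
For the first summand, Assumption \ref{asm:smoothness_conditional} gives that $v_{\hat\delta_{\tilde n}}(\cdot)$ is Lipschitz in its argument, so this summand is controlled by a constant multiple of $|Z^\top \hat\delta_{\tilde n}| \le \|Z\|\,\|\hat\delta_{\tilde n}\|$, which vanishes because $Z$ is compactly supported (Assumption \ref{asm:momentofxz}) and $\hat\delta_{\tilde n} \to 0$. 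One subtlety I would need to handle carefully is that the Lipschitz constant should be uniform in $\delta$ for $\|\delta\| \le 1$; this follows from Assumption \ref{asm:smoothness_conditional} together with the compactness of $\{\delta : \|\delta\|\le 1\}$ (or can be read directly from a mild uniformity strengthening of the assumption). The second summand vanishes pointwise by the continuity-in-$\delta$ part of Assumption \ref{asm:smoothness_conditional}, recalling that the inclusion $0 \in \mathrm{supp}(Z)$ from Assumption \ref{asm:momentofxz} guarantees $\mathrm{supp}(\eta) \subseteq \mathrm{supp}(\eta - Z^\top \delta)$, so the pointwise limit is defined at the relevant evaluation point.

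Next I would check the hypotheses of Lebesgue's DCT. The key point is that each entry of the conditional variance matrix $v_{\delta}(b)$ is uniformly bounded in $(\delta, b)$: since $X$ is compactly supported by Assumption \ref{asm:momentofxz}, one has $\|v_{\delta}(b)\| \le C_X$ for some constant $C_X$ depending only on the support of $X$, uniformly over all $\delta$ and all $b$. This constant is an integrable dominating function for the conditional expectation given $Q<\tau_0$.

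Combining the a.s. pointwise convergence of $v_{\hat\delta_{\tilde n}}(\eta - Z^\top \hat\delta_{\tilde n})$ to $v_{0}(\eta)$ on the event $\{Q<\tau_0\}$ with the uniform bound, Lebesgue's DCT yields
\begin{equation*}
\Sigma_{u,\hat\delta_{\tilde n}}
\;=\;
\mathbb{E}\bigl(v_{\hat\delta_{\tilde n}}(\eta - Z^\top \hat\delta_{\tilde n}) \,\big|\, Q < \tau_0\bigr)
\;\longrightarrow\;
\mathbb{E}\bigl(v_{0}(\eta) \,\big|\, Q < \tau_0\bigr)
\;=\;\Sigma_u,
\end{equation*}
as desired. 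The main (mild) obstacle I anticipate is verifying the uniform-in-$\delta$ Lipschitz control in the first summand above; this is a small technical refinement of Assumption \ref{asm:smoothness_conditional}, and standard compactness arguments in $\delta$ suffice.
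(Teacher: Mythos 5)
Your argument is correct in substance but reaches the conclusion by a genuinely different route than the paper. The paper writes both $\Sigma_{u,\hat\delta_{\tilde n}}$ and $\Sigma_u$ as Lebesgue integrals over the fixed compact set $C=\bigcup_{\|\delta\|\le 1}\mathrm{supp}(\eta-Z^\top\delta)$, with integrand $\mathrm{var}(X\mid\eta-Z^\top\hat\delta_{\tilde n}=t,Q<\tau_0)\,f_{\eta-Z^\top\hat\delta_{\tilde n}\mid Q<\tau_0}(t)\,\mathds{1}_{t\in\mathrm{supp}(\eta-Z^\top\hat\delta_{\tilde n})}$; it proves pointwise-in-$t$ convergence of each factor separately (splitting into $t\in\mathrm{supp}(\eta)$, where the convergence clauses of Assumptions \ref{asm:smoothness_conditional} and \ref{asm:abadie} apply, and $t\in C\setminus\mathrm{supp}(\eta)$, where the indicator eventually vanishes), and closes with DCT on $C$. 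You instead keep the underlying probability measure fixed, writing both quantities as conditional expectations of $v_\delta$ evaluated at a moving point, and obtain a.s. convergence of the integrand by a triangle inequality: the Lipschitz clause of Assumption \ref{asm:smoothness_conditional} moves the evaluation point from $\eta-Z^\top\hat\delta_{\tilde n}$ to $\eta$, and the continuity-in-$\delta$ clause changes the function index at the fixed point $\eta$. What your route buys: you never need the convergence of the conditional densities from Assumption \ref{asm:abadie} (the measure does not change with $\tilde n$), and your dominating function is just a constant from the compact support of $X$. What it costs is the caveat you flag yourself: the Lipschitz clause is stated per $\delta$, and your first summand needs the Lipschitz constant of $v_\delta$ to be bounded uniformly over $\|\delta\|_2\le 1$ (or at least along $\{\hat\delta_{\tilde n}\}$); compactness of the $\delta$-ball alone does not yield this without some joint regularity, so you are leaning on a mild strengthening of the stated assumption. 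This is a fair trade---the paper's own case analysis relies on its own conventions (e.g., setting the conditional variance to $\mathrm{var}(X)$ off the support)---and both routes correctly use $0\in\mathrm{supp}(Z)$ to guarantee $\mathrm{supp}(\eta)\subseteq\mathrm{supp}(\eta-Z^\top\delta)$ so that the relevant evaluation points lie in the domain of $v_{\hat\delta_{\tilde n}}$.
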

\begin{proof}
Note that for any $\tilde{n}$, we have
$$\Sigma_{u,{\hat{\delta}_{\tilde{n}}}} = \int_C \textrm{var} \left(X \mid \eta - Z^\top \hat{\delta}_{\tilde{n}} = t , Q < \tau_0 \right) f_{\eta - Z^\top \hat{\delta}_{\tilde{n}} \mid Q < \tau_0}(t) \mathds{1}_{t \in \textrm{supp}(\eta - Z^\top \hat{\delta}_{\tilde{n}})} dt.$$
Moreover,
$$\Sigma_{u} = \int_C \textrm{var} \left(X \mid \eta = t , Q < \tau_0 \right) f_{\eta \mid Q < \tau_0}(t) \mathds{1}_{t \in \textrm{supp}(\eta)} dt.$$
We first prove that for every $t \in C$, $\textrm{var} \left(X \mid \eta - Z^\top \hat{\delta}_{\tilde{n}} = t , Q < \tau_0 \right) \rightarrow \textrm{var} \left(X \mid \eta = t , Q < \tau_0 \right)$. We consider two cases: (1) $t \in \textrm{supp}(\eta)$; and (2) $t \in C \setminus \textrm{supp}(\eta)$. For the first case, the statement clearly follows from Assumptions 1, 2 and 3. For the second case, note that Assumption 1, 2 and the fact that $\hat{\delta}_{\tilde{n}} \rightarrow 0$ implies that we can find some $n^*$ such that for every $\tilde{n} \geq n^*$, we have $t \notin \textrm{supp}({\eta - Z^\top \hat{\delta}_{\tilde{n}}})$. The conclusion is thus immediate since $\textrm{var} \left(X \mid \eta - Z^\top \hat{\delta}_{\tilde{n}} = t , Q < \tau_0 \right) = \textrm{var}(X)$ for every $\tilde{n} \geq n^*$, and $\textrm{var} \left(X \mid \eta = t , Q < \tau_0 \right) = \textrm{var}(X)$.

In a similar manner, for every $t \in C$, we can show that $f_{\eta - Z^\top \hat{\delta}_{\tilde{n}} \mid Q < \tau_0}(t) \rightarrow f_{\eta \mid Q < \tau_0}(t)$ under Assumptions 1, 2 and 6, and $\mathds{1}_{t \in \textrm{supp}(\eta - Z^\top \hat{\delta}_{\tilde{n}})} \rightarrow \mathds{1}_{t \in \textrm{supp}(\eta)}$ under Assumptions 1 and 2. The lemma thus follows upon applying Lebesgue's dominated convergence theorem (DCT) under Assumptions 1 and 2 using the fact that $C$ is compact. This completes the proof.
\end{proof}

We now use Lemma \ref{lemma:cont} to prove the following lemma:
\begin{lemma}
\label{lemma:uncond}
    As ${\tilde{n}} \rightarrow \infty$, we have
    $$\frac{1}{{\tilde{n}}} (\Delta X)^\top \Delta X \overset{P}{\longrightarrow} 2\bbP(Q < \tau_0) \Sigma_{u},$$
    where  $\Sigma_u = \mathbb{E}(\textrm{var}(X \mid {\eta}, Q < \tau_0) \mid Q < \tau_0)$.
\end{lemma}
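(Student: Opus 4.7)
The plan is to bootstrap the conditional-on-$\hat{\delta}_{\tilde{n}}$ convergence that has just been established into an unconditional convergence, by combining it with Lemma \ref{lemma:cont} (continuity of $\delta \mapsto \Sigma_{u,\delta}$ at $0$), the $\sqrt{\tilde{n}}$-consistency $\hat{\delta}_{\tilde{n}} \xrightarrow{P} 0$, and the mutual independence of the subsamples $I_1$ and $I_2$. A triangle inequality gives
$$
\left|\tfrac{1}{\tilde{n}}(\Delta X)^\top \Delta X - 2\mathbb{P}(Q<\tau_0)\Sigma_u\right| \;\le\; \underbrace{\left|\tfrac{1}{\tilde{n}}(\Delta X)^\top \Delta X - 2\mathbb{P}(Q<\tau_0)\Sigma_{u,\hat{\delta}_{\tilde{n}}}\right|}_{(A)} + \underbrace{2\mathbb{P}(Q<\tau_0)\left\|\Sigma_{u,\hat{\delta}_{\tilde{n}}} - \Sigma_u\right\|}_{(B)},
$$
and it suffices to show each summand is $o_p(1)$.

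For $(B)$, I would use a standard subsequence argument: from any subsequence of $\{\tilde{n}\}$, one can extract a further subsequence $\{\tilde{n}_k\}$ along which $\hat{\delta}_{\tilde{n}_k} \to 0$ almost surely. On this full-measure event, Lemma \ref{lemma:cont}, applied to the realized deterministic sequence $\hat{\delta}_{\tilde{n}_k}(\omega)$, yields $\Sigma_{u,\hat{\delta}_{\tilde{n}_k}} \to \Sigma_u$ almost surely. Since every subsequence admits a further subsequence converging to $0$ a.s., we conclude $(B) \xrightarrow{P} 0$.

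For $(A)$, the mutual independence $I_1 \perp I_2$ lets me condition on $\hat{\delta}_{\tilde{n}}$ (which is $I_1$-measurable) and, for any $\epsilon > 0$, write
$$
\mathbb{P}((A) > \epsilon) \;=\; \mathbb{E}\bigl[\psi_{\tilde{n}}(\hat{\delta}_{\tilde{n}})\bigr], \qquad \psi_{\tilde{n}}(\delta) \;:=\; \mathbb{P}\bigl(|\tfrac{1}{\tilde{n}}(\Delta X)^\top \Delta X - 2\mathbb{P}(Q<\tau_0)\Sigma_{u,\delta}| > \epsilon \,\big|\, \hat{\delta}_{\tilde{n}} = \delta\bigr).
$$
The conditional derivation already in hand shows $\psi_{\tilde{n}}(\delta) \to 0$ for each fixed $\delta$ with $\|\delta\|_2 \le 1$, and trivially $\psi_{\tilde{n}} \le 1$. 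Restricting to the event $\mathcal{W}_{\tilde{n}}$ (which has probability $\to 1$), I would then invoke Lebesgue's DCT to conclude $\mathbb{E}[\psi_{\tilde{n}}(\hat{\delta}_{\tilde{n}})] \to 0$ and hence $(A) \xrightarrow{P} 0$, completing the proof.

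The main obstacle I expect is making the final DCT step fully rigorous, because pointwise convergence of $\psi_{\tilde{n}}$ together with $\hat{\delta}_{\tilde{n}} \xrightarrow{P} 0$ does not, in general, imply $\mathbb{E}[\psi_{\tilde{n}}(\hat{\delta}_{\tilde{n}})] \to 0$. To close this gap I would upgrade each of the bounds $L = O_p(\tilde{n}^{-2})$, $M = O_p(\tilde{n}^{-1})$, $P_2 = o_p(1)$, $P_3 = o_p(1)$, and $P_1 - 2\mathbb{P}(Q<\tau_0)\Sigma_{u,\delta} = o_p(1)$ into \emph{locally uniform} bounds on $\{\|\delta\|_2 \le 1\}$, using (i) the compactness of $\mathrm{supp}(X,Z,\eta)$ from Assumptions \ref{asm:momentofxz}--\ref{asm:errormeanvar} to get a common envelope, and (ii) the continuity-in-$\delta$ assertions of Assumption \ref{asm:smoothness_conditional} to upgrade the pointwise Lipschitz and variance constants into ones uniform on the compact set $\{\|\delta\|_2 \le 1\}$. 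Uniform $o_p(1)$ in $\delta$ combined with $\hat{\delta}_{\tilde{n}} \xrightarrow{P} 0$ then immediately delivers $(A) \xrightarrow{P} 0$. An equivalent route via Skorokhod's representation theorem---passing to a probability space where $\hat{\delta}_{\tilde{n}} \to 0$ almost surely and then using bounded convergence---would also work and may be cleaner to execute.
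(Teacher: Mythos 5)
Your proposal follows essentially the same route as the paper's proof: both condition on $\hat{\delta}_{\tilde{n}}$, de-condition via bounded convergence (Lebesgue's DCT applied to the conditional probability), and then use Lemma \ref{lemma:cont} together with $\hat{\delta}_{\tilde{n}} \xrightarrow{P} 0$ to replace $\Sigma_{u,\hat{\delta}_{\tilde{n}}}$ by $\Sigma_u$ --- the paper performs this last step by an explicit $\epsilon$--$\xi$ splitting on the event $\|\hat{\delta}_{\tilde{n}}\|_2 < \xi$ rather than your subsequence argument, but the two devices are interchangeable. The DCT subtlety you flag, namely that pointwise-in-$\delta$ convergence of $\psi_{\tilde{n}}(\delta)$ does not by itself control $\psi_{\tilde{n}}(\hat{\delta}_{\tilde{n}})$ when the argument moves with $\tilde{n}$, is present in the paper's own proof as well (which simply asserts the DCT step), so your plan to upgrade the conditional bounds to ones locally uniform in $\delta$ is a more careful treatment of a point the paper glosses over.
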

\begin{proof}
Let $\Lambda_{\tilde{n}} = \frac{1}{{\tilde{n}}} (\Delta X)^\top \Delta X$. For every coordinate $t$ and $\epsilon > 0$, we have 
    \begin{equation*}
        \mathbb{P}\left(\left| \Lambda_{\tilde{n},t} - 2\bbP(Q < \tau_0) \Sigma_{u, \hat{\delta}_{\tilde{n}}, t} \right| \geq \epsilon \mid \hat{\delta}_{\tilde{n}} \right) \rightarrow 0.
    \end{equation*}
Applying Lebesgue's DCT, we have
\begin{equation*}
        \mathbb{P}\left(\left| \Lambda_{\tilde{n},t} - 2\bbP(Q < \tau_0) \Sigma_{u, \hat{\delta}_{\tilde{n}}, t} \right| \geq \epsilon \right) \rightarrow 0.
    \end{equation*}
Fix any $\epsilon > 0$. From Lemma \ref{lemma:cont}, we know that for every coordinate $t$, there exists some $\xi > 0$ such that $|\Sigma_{u, \delta, t} - \Sigma_{u, t}| < \frac{\epsilon}{4 \bbP(Q < \tau_0)}$ whenever $||\delta||_2 < \xi$. Now, observe that
\begin{equation*}
    \begin{split}
        &\mathbb{P}\left(\left| \Lambda_{\tilde{n},t} - 2\bbP(Q < \tau_0) \Sigma_{u, t} \right| \geq \epsilon\right) \\
    &= \mathbb{P}\left(\left| \Lambda_{\tilde{n},t} - 2\bbP(Q < \tau_0) \Sigma_{u, t} \right| \geq \epsilon \cap ||\hat{\delta}_{\tilde{n}}||_2 < \xi\right) + \mathbb{P}\left(\left| \Lambda_{\tilde{n},t} - 2\bbP(Q < \tau_0) \Sigma_{u, t} \right| \geq \epsilon \cap ||\hat{\delta}_{\tilde{n}}||_2 \geq \xi\right)  \\
    &\leq \mathbb{P}\left(\left| \Lambda_{\tilde{n},t} - 2\bbP(Q < \tau_0) \Sigma_{u, \hat{\delta}_{\tilde{n}}, t} \right| \geq \frac{\epsilon}{2} \cap ||\hat{\delta}_{\tilde{n}}||_2 < \xi \right) + \mathbb{P}(||\hat{\delta}_{\tilde{n}}||_2 \geq \xi) \\
    &\leq \mathbb{P}\left(\left| \Lambda_{\tilde{n},t} - 2\bbP(Q < \tau_0) \Sigma_{u, \hat{\delta}_{\tilde{n}}, t} \right| \geq \frac{\epsilon}{2}\right) + \mathbb{P}(||\hat{\delta}_{\tilde{n}}||_2 \geq \xi).
    \end{split}
\end{equation*}
Note that the first term goes to $0$ as established above, and so does the second term since $\hat{\delta}_{\tilde{n}} \overset{P}{\longrightarrow} 0$. This implies $\Lambda_{\tilde{n},t} \overset{P}{\longrightarrow} 2 \bbP(Q < \tau_0) \Sigma_{u,t}$ for every coordinate $t$, which means $\Lambda_{\tilde{n}} \overset{P}{\longrightarrow} 2 \bbP(Q < \tau_0) \Sigma_u$. This completes the proof.
\end{proof}
From Lemma \ref{lemma:uncond}, an application of the continuous mapping theorem yields 
$$\left( \frac{1}{{\tilde{n}}} (\Delta X)^\top \Delta X \right)^{-1} \overset{P}{\longrightarrow} \frac{1}{2\bbP(Q < \tau_0)}\Sigma_u^{-1}.$$

We now work on the second term of Equation \eqref{eq:betahatnormeq}. Note that we can decompose it as
\begin{equation*}
    \begin{split}
        \frac{1}{\sqrt{{\tilde{n}}}} (\Delta X)^\top \Delta w &= \frac{1}{\sqrt{{\tilde{n}}}} \sum_{i=1}^{|I_2^C|-1} (X_{(i+1)} - X_{(i)}) \left(\ell(\eta_{(i+1)}) - \ell(\eta_{(i)}) + \epsilon_{(i+1)} - \epsilon_{(i)} \right) \\
        &= E + F + G + H,
    \end{split}
\end{equation*}
where
\begin{equation*}
   \begin{split}
        &E = \frac{1}{\sqrt{{\tilde{n}}}} \sum_{i=1}^{|I_2^C|-1} \left(\ell(\eta_{(i+1)}) - \ell(\eta_{(i)}) \right) \left( g_{{\hat{\delta}_{\tilde{n}}}}(\hat{\eta}_{(i+1)}) - g_{{\hat{\delta}_{\tilde{n}}}}(\hat{\eta}_{(i)}) \right),\\
    &F = \frac{1}{\sqrt{{\tilde{n}}}} \sum_{i=1}^{|I_2^C|-1} \left(\ell(\eta_{(i+1)}) - \ell(\eta_{(i)}) \right) \left( u_{\hat{\delta}_{\tilde{n}_{(i+1)}}} - u_{\hat{\delta}_{\tilde{n}_{(i)}}}\right),\\
    &G = \frac{1}{\sqrt{{\tilde{n}}}} \sum_{i=1}^{|I_2^C|-1} \left(\epsilon_{(i+1)} - \epsilon_{(i)} \right) \left( g_{{\hat{\delta}_{\tilde{n}}}}(\hat{\eta}_{(i+1)}) - g_{{\hat{\delta}_{\tilde{n}}}}(\hat{\eta}_{(i)}) \right),\\
    &H = \frac{1}{\sqrt{{\tilde{n}}}} \sum_{i=1}^{|I_2^C|-1} \left(\epsilon_{(i+1)} - \epsilon_{(i)} \right) \left( u_{\hat{\delta}_{\tilde{n}_{(i+1)}}} - u_{\hat{\delta}_{\tilde{n}_{(i)}}} \right).
    \end{split}
\end{equation*}

First, observe that 
\begin{equation*}
    \begin{split}
        \sum_{i=1}^{|I_2^C|-1} \left( \ell(\eta_{(i+1)}) - \ell(\eta_{(i)})\right)^2 &\leq \nu_3^2 \sum_{i=1}^{I_2^C|-1} \left( |\eta_{(i+1)} - \hat{\eta}_{(i+1)} | + |\eta_{(i)} - \hat{\eta}_{(i)} | + |\hat{\eta}_{(i+1)} - \hat{\eta}_{(i)} |\right)^2 \\
        &\leq 3\nu_3^2 \sum_{i=1}^{I_2^C|-1} \left( (\eta_{(i+1)} - \hat{\eta}_{(i+1)})^2 + (\eta_{(i)} - \hat{\eta}_{(i)})^2 + (\hat{\eta}_{(i+1)} - \hat{\eta}_{(i)})^2 \right) \\
        &\leq 6\nu_3^2 \sum_{i=1}^{|I_2^C|} ((Z_i^C)^\top (\hat{\gamma}_{\tilde{n}} - \gamma_0))^2 + 3\nu_3^2 \sum_{i=1}^{|I_2^C|-1} (\hat{\eta}_{(i+1)} - \hat{\eta}_{(i)})^2 \\
        &= O_p(n^{-1}) O_p(n) + O_p(n^{-1}) \\
        &= O_p(1).
    \end{split}
\end{equation*}
for some Lipschitz constant $\nu_3$. Here, we used Assumptions \ref{asm:momentofxz}, \ref{asm:errormeanvar} and \ref{asm:fbounded}, the triangle inequality and the elementary inequality $(a+b+c)^2 \leq 3(a^2+b^2+c^2)$, as well as the fact that $||{\hat{\delta}_{\tilde{n}}}||_2^2 = O_p(n^{-1})$ and the ordering is done on the $\hat{\eta}_i$'s. 

Now, we can easily utilize the Cauchy-Schwarz inequality to show that $E = O_p(n^{-1}) = o_p(1)$ using Assumption \ref{asm:smoothness_conditional}, the above result, and the fact that the ordering is done on the $\hat{\eta}_i$'s. Similarly, we can show $G = O_p(n^{-1/2}) = o_p(1)$ using the elementary $(a-b)^2 \leq 2(a^2+b^2)$ under Assumptions \ref{asm:errormeanvar} and \ref{asm:smoothness_conditional}. Let us look at $F$. We can rewrite it as
\begin{equation*}
    \begin{split}
        F &= \frac{1}{\sqrt{{\tilde{n}}}} \sum_{i=1}^{|I_2^C|-1}\left( \ell(\eta_{(i+1)}) - \ell(\hat{\eta}_{(i+1)})\right) \left( u_{\hat{\delta}_{\tilde{n}_{(i+1)}}} - u_{\hat{\delta}_{\tilde{n}_{(i)}}}\right) \\
        &- \frac{1}{\sqrt{{\tilde{n}}}} \sum_{i=1}^{|I_2^C|-1}\left( \ell(\eta_{(i)}) - \ell(\hat{\eta}_{(i)})\right) \left( u_{\hat{\delta}_{\tilde{n}_{(i+1)}}} - u_{\hat{\delta}_{\tilde{n}_{(i)}}}\right) \\
        &+ \frac{1}{\sqrt{{\tilde{n}}}} \sum_{i=1}^{|I_2^C|-1}\left( \ell(\hat{\eta}_{(i+1)}) - \ell(\hat{\eta}_{(i)})\right) \left( u_{\hat{\delta}_{\tilde{n}_{(i+1)}}} - u_{\hat{\delta}_{\tilde{n}_{(i)}}}\right).
    \end{split}
\end{equation*}

A straightforward application of the Cauchy-Schwarz inequality allows us to show that the third term in $F$ is $o_p(1)$ under Assumptions \ref{asm:momentofxz}, \ref{asm:errormeanvar} and \ref{asm:fbounded} and the elementary inequality $(a-b)^2 \leq 2(a^2+b^2)$. 
Moreover, observe that 
\begin{equation*}
    \ell(\eta_{(i)}) - \ell(\hat{\eta}_{(i)}) = (\eta_{(i)} - \hat{\eta}_{(i)}) \ell'(\hat{\eta}_{(i)}) + (\eta_{(i)} - \hat{\eta}_{(i)})^2 \ell''(\tilde{\eta}_{(i)})
\end{equation*}
for some $\tilde{\eta}_{(i)}$ between $\eta_{(i)}$ and $\hat{\eta}_{(i)}$, and that
\begin{equation*}
    \left| \frac{1}{\sqrt{{\tilde{n}}}} \sum_{i=1}^{|I_2^C|-1}  (\eta_{(i+1)} - \hat{\eta}_{(i+1)})^2 \ell''(\tilde{\eta}_{(i+1)}) (u_{\hat{\delta}_{\tilde{n}_{(i+1)}}} - u_{\hat{\delta}_{\tilde{n}_{(i)}}})  \right| = O_p(n^{-1/2}) = o_p(1)
\end{equation*}
using the Cauchy-Schwarz inequality under Assumptions \ref{asm:momentofxz}, \ref{asm:errormeanvar} and \ref{asm:fbounded}. 

Similar to $X$, we can decompose $Z$ for observations in the control group into $q_{\hat{\delta}_{\tilde{n}}}(\hat{\eta}) + w_{\hat{\delta}_{\tilde{n}}}$, where $\mathbb{E}(w_{\hat{\delta}_{\tilde{n}}} \mid \hat{\eta}, Q < \tau_0) = 0$ and $\hat{\eta} = \eta - Z^\top \hat{\delta}_{\tilde{n}}$. Similarly, we also have $Z = q_0(\eta) + w_0$, where $\mathbb{E}(w_0 \mid \eta, Q < \tau_0) = 0$. Omitting $o_p(1)$ terms, we can further rewrite $F$ as
\begin{equation*}
    \begin{split}
         F &= \frac{1}{\sqrt{{\tilde{n}}}} \sum_{i=1}^{|I_2^C|-1}  (\eta_{(i+1)} - \hat{\eta}_{(i+1)}) \ell'(\hat{\eta}_{(i+1)}) ( u_{\hat{\delta}_{\tilde{n}_{(i+1)}}} - u_{\hat{\delta}_{\tilde{n}_{(i)}}}) - \frac{1}{\sqrt{{\tilde{n}}}} \sum_{i=1}^{|I_2^C|-1}  (\eta_{(i)} - \hat{\eta}_{(i)}) \ell'(\hat{\eta}_{(i)}) ( u_{\hat{\delta}_{\tilde{n}_{(i+1)}}} - u_{\hat{\delta}_{\tilde{n}_{(i)}}}) \\
        &= \frac{1}{{\tilde{n}}} \left( \sum_{i=1}^{|I_2^C|-1} \left( u_{\hat{\delta}_{\tilde{n}_{(i+1)}}} - u_{\hat{\delta}_{\tilde{n}_{(i)}}}\right) \left( \ell'(\hat{\eta}_{(i+1)}) (Z_{(i+1)})^\top - \ell'(\hat{\eta}_{(i)}) (Z_{(i)})^\top \right)\right) \left(\sqrt{{\tilde{n}}}(\hat{\gamma}_{\tilde{n}} - \gamma_0)\right) \\
        &= \frac{1}{{\tilde{n}}} \left( \sum_{i=1}^{|I_2^C|-1} \left( u_{\hat{\delta}_{\tilde{n}_{(i+1)}}} - u_{\hat{\delta}_{\tilde{n}_{(i)}}}\right) \left( \ell'(\hat{\eta}_{(i+1)}) (w_{\hat{\delta}_{\tilde{n}_{(i+1)}}})^\top - \ell'(\hat{\eta}_{(i)}) (w_{\hat{\delta}_{\tilde{n}_{(i)}}})^\top \right)\right) \left(\sqrt{{\tilde{n}}}(\hat{\gamma}_{\tilde{n}} - \gamma_0)\right) \\
        &+ \frac{1}{{\tilde{n}}} \left( \sum_{i=1}^{|I_2^C|-1} \left( u_{\hat{\delta}_{\tilde{n}_{(i+1)}}} - u_{\hat{\delta}_{\tilde{n}_{(i)}}}\right) \ell'(\hat{\eta}_{(i+1)}) \left( q_{{\hat{\delta}_{\tilde{n}}}}(\hat{\eta}_{(i+1)}) - q_{{\hat{\delta}_{\tilde{n}}}}(\hat{\eta}_{(i)})\right)^\top \right)
        \left(\sqrt{{\tilde{n}}}(\hat{\gamma}_{\tilde{n}} - \gamma_0)\right) \\
        &+ \frac{1}{{\tilde{n}}} \left( \sum_{i=1}^{|I_2^C|-1} \left( u_{\hat{\delta}_{\tilde{n}_{(i+1)}}} - u_{\hat{\delta}_{\tilde{n}_{(i)}}}\right) \left(\ell'(\hat{\eta}_{(i+1)}) - \ell'(\hat{\eta}_{(i)}) \right) \left( q_{{\hat{\delta}_{\tilde{n}}}}(\hat{\eta}_{(i)})\right)^\top \right)
    \left(\sqrt{{\tilde{n}}}(\hat{\gamma}_{\tilde{n}} - \gamma_0)\right).
    \end{split}
\end{equation*}
The last two terms of the third equality can be shown to be $o_p(1)$ using the Cauchy-Schwarz inequality under Assumptions \ref{asm:momentofxz}, \ref{asm:errormeanvar} and \ref{asm:fbounded}. Again, omitting $o_p(1)$ terms, we have $F = \overline{F} \left(\sqrt{{\tilde{n}}}({\hat{\gamma}_{\tilde{n}}} - \gamma_0)\right)$, where
$$\overline{F} = \frac{1}{{\tilde{n}}} \left( \sum_{i=1}^{|I_2^C|-1} \left( u_{\hat{\delta}_{\tilde{n}_{(i+1)}}} - u_{\hat{\delta}_{\tilde{n}_{(i)}}}\right) \left( \ell'(\hat{\eta}_{(i+1)}) (w_{\hat{\delta}_{\tilde{n}_{(i+1)}}})^\top - \ell'(\hat{\eta}_{(i)}) (w_{\hat{\delta}_{\tilde{n}_{(i)}}})^\top \right)\right).$$

Conditional on $\hat{\delta}_{\tilde{n}}$ (i.e., $I_1$), we can show that 
$$\overline{F} - 2 \bbP(Q < \tau_0) \mathbb{E}\left(\ell'(\hat{\eta}) u_{\hat{\delta}_{\tilde{n}}} (w_{\hat{\delta}_{\tilde{n}}})^\top \mid Q < \tau_0 \right) = o_p(1),$$
where $\hat{\eta} = \eta - Z^\top \hat{\delta}_{\tilde{n}}$, under Assumptions \ref{asm:momentofxz} and \ref{asm:fbounded} using the same method as for the term $P$. We now prove the following lemma:
\vspace{2mm}
\begin{lemma}
\label{lem:cont2}
For any sequence $\{\hat{\delta}_{\tilde{n}}\}_{\tilde{n} \geq 1}$, where $||\hat \delta_{\tilde{n}}||_2 \leq 1$ for every $\tilde{n}$, that converges to $0$, we have 
$$\mathbb{E}\left(\ell'(\hat{\eta}) u_{\hat{\delta}_{\tilde{n}}} (w_{\hat{\delta}_{\tilde{n}}})^\top \mid Q < \tau_0 \right) \rightarrow \mathbb{E}\left(\ell'({\eta}) u_0 (w_0)^\top \mid Q < \tau_0 \right),$$
where $u_{\hat{\delta}_{\tilde{n}}} = X - \mathbb{E}(X \mid \hat{\eta}, Q < \tau_0)$, $w_{\hat{\delta}_{\tilde{n}}} = Z - \mathbb{E}(Z \mid \hat{\eta}, Q < \tau_0)$, $u_0 = X - \mathbb{E}(X \mid \eta, Q < \tau_0)$, and $w_0 = Z - \mathbb{E}(Z \mid \eta, Q < \tau_0)$.
\end{lemma}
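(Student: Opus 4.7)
The plan is to mirror the strategy of Lemma \ref{lemma:cont} almost verbatim, with the conditional variance function $v_{\hat\delta_{\tilde n}}$ replaced by the matrix-valued product $\ell'(\cdot)\,k_{\hat\delta_{\tilde n}}(\cdot)$. First I would apply the tower property: conditional on $\hat\eta = t$ and $Q < \tau_0$, the inner expectation collapses via
$$\mathbb{E}\!\left[u_{\hat\delta_{\tilde n}} w_{\hat\delta_{\tilde n}}^\top \,\big|\, \hat\eta = t, Q < \tau_0\right] \;=\; \mathrm{cov}(X, Z \mid \hat\eta = t, Q<\tau_0) \;=\; k_{\hat\delta_{\tilde n}}(t),$$
because $u_{\hat\delta_{\tilde n}} = X - g_{\hat\delta_{\tilde n}}(\hat\eta)$ and $w_{\hat\delta_{\tilde n}} = Z - q_{\hat\delta_{\tilde n}}(\hat\eta)$ are the centered versions of $X$ and $Z$ given $\hat\eta$. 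This converts the quantity of interest into the Lebesgue integral
$$\mathbb{E}\!\left[\ell'(\hat\eta)\, u_{\hat\delta_{\tilde n}}\, w_{\hat\delta_{\tilde n}}^\top \,\big|\, Q < \tau_0\right] \;=\; \int_C \ell'(t)\, k_{\hat\delta_{\tilde n}}(t)\, f_{0,\hat\delta_{\tilde n}}(t)\, \mathds{1}_{t \in \mathrm{supp}(\eta - Z^\top \hat\delta_{\tilde n})}\, dt,$$
over the compact set $C = \bigcup_{\|\delta\|_2 \leq 1} \mathrm{supp}(\eta - Z^\top \delta)$ introduced in the proof of Proposition \ref{prop:beta}.

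Second, I would verify that the integrand converges pointwise to $\ell'(t)\, k_0(t)\, f_{0,0}(t)\, \mathds{1}_{t \in \mathrm{supp}(\eta)}$. For $t \in \mathrm{supp}(\eta)$, Assumption \ref{asm:smoothness_conditional} gives $k_{\hat\delta_{\tilde n}}(t) \to k_0(t)$ and Assumption \ref{asm:abadie} gives $f_{0,\hat\delta_{\tilde n}}(t) \to f_{0,0}(t)$; moreover, the hypothesis $0 \in \mathrm{supp}(Z)$ from Assumption \ref{asm:momentofxz} forces $\mathrm{supp}(\eta) \subseteq \mathrm{supp}(\eta - Z^\top \hat\delta_{\tilde n})$ so the indicator equals $1$ throughout. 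For $t \in C \setminus \mathrm{supp}(\eta)$, the very same argument used for the second case in Lemma \ref{lemma:cont} produces an $n^\star$ beyond which $t \notin \mathrm{supp}(\eta - Z^\top \hat\delta_{\tilde n})$, so both the pre-limit integrand and the limit integrand vanish there.

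Third, I would produce a uniform dominator. The compactness of $(X,Z)$ from Assumption \ref{asm:momentofxz} provides a uniform bound on $\|k_{\hat\delta}\|_\infty$; the Lipschitzness of $\ell$ from Assumption \ref{asm:fbounded} bounds $|\ell'|$; and Assumption \ref{asm:abadie} uniformly bounds $f_{0,\hat\delta}$. Consequently the integrand is dominated entrywise by a deterministic constant times $\mathds{1}_C$, and since $C$ has finite Lebesgue measure, Lebesgue's dominated convergence theorem applies entrywise on the matrix and yields the claimed limit $\mathbb{E}[\ell'(\eta)\, u_0\, w_0^\top \mid Q < \tau_0]$.

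The main obstacle is purely a bookkeeping one: correctly identifying the conditional cross-covariance $k_{\hat\delta}$ after integrating out $X$ and $Z$ given $\hat\eta$, and carefully handling the indicator on the boundary strip $C \setminus \mathrm{supp}(\eta)$ where the pre-limit support can differ from $\mathrm{supp}(\eta)$. Since Lemma \ref{lemma:cont} already established the template for this support-comparison step in the scalar variance case, no new machinery is required; one simply matches dimensions and applies the same DCT argument entrywise.
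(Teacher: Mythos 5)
Your proposal is correct and follows essentially the same route as the paper's proof: both reduce the expression to $\mathbb{E}(\ell'(\hat\eta)\,\mathrm{cov}(X,Z\mid\hat\eta,Q<\tau_0)\mid Q<\tau_0)$ via the tower property, write both sides as integrals over the compact set $C$ with the support indicator, and conclude by the pointwise-convergence-plus-DCT template of Lemma \ref{lemma:cont} under the same assumptions. Your explicit construction of the dominating function is a detail the paper leaves implicit, but the argument is identical in substance.
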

\begin{proof}
It is easy to see that the statement we want to show is equivalent to
$$\mathbb{E}\left( \ell'(\hat{\eta}) \textrm{cov}(X, Z \mid \hat{\eta}, Q < \tau_0) \mid Q < \tau_0\right) \rightarrow \mathbb{E}\left( \ell'({\eta}) \textrm{cov}(X, Z \mid {\eta}, Q < \tau_0) \mid Q < \tau_0\right).$$
Note that for any $\tilde{n}$, we have
\begin{align*}
&\mathbb{E}\left( \ell'(\hat{\eta}) \textrm{cov}(X, Z \mid \hat{\eta}, Q < \tau_0) \mid Q < \tau_0\right) \\
&= \int_C \ell'(t) \textrm{cov}(X,Z \mid \eta - Z^\top \hat{\delta}_{\tilde{n}} = t , Q < \tau_0) f_{\eta - Z^\top \hat{\delta}_{\tilde{n}} \mid Q < \tau_0}(t) \mathds{1}_{t \in \textrm{supp}(\eta - Z^\top \hat{\delta}_{\tilde{n}})} dt.
\end{align*}
Moreover, 
\begin{align*}
&\mathbb{E}\left( \ell'({\eta}) \textrm{cov}(X, Z \mid {\eta}, Q < \tau_0) \mid Q < \tau_0\right) \\
&= \int_C \ell'(t) \textrm{cov}(X,Z \mid \eta = t, Q < \tau_0) f_{\eta  \mid Q < \tau_0}(t) \mathds{1}_{t \in \textrm{supp}(\eta)} dt.
\end{align*}
Using the same method as in Lemma \ref{lemma:cont}, the conclusion follows via Lebesgue's DCT under Assumptions 1, 2, 3, 5 and 6 using the fact that $C$ is compact. This completes the proof.
\end{proof}
From here, a simple adaptation of the proof of Lemma \ref{lemma:uncond} yields
$$\overline{F} \overset{P}{\longrightarrow} 2 \bbP(Q < \tau_0) \mathbb{E}\left(\ell'({\eta}) u_{0} (w_{0})^\top \mid Q < \tau_0 \right).$$
Therefore, omitting $o_p(1)$ terms, we can write $F$ as
\begin{equation*}
    F = 2 \bbP(Q < \tau_0) \mathbb{E}\left(\ell'({\eta}) u_0 (w_0)^\top \mid Q < \tau_0 \right) \left(\sqrt{{\tilde{n}}}(\hat{\gamma}_{\tilde{n}} - \gamma_0)\right)
    = \sum_{i=1}^{\tilde{n}} \frac{1}{\sqrt{{\tilde{n}}}} A_1  \left( \frac{\tilde{Z}^\top \tilde{Z}}{{\tilde{n}}}\right)^{-1} Z_i \eta_i,
\end{equation*}
where $A_1 = 2 \bbP(Q < \tau_0) \mathbb{E}\left(\ell'({\eta}) u_0 w_0^\top \mid Q < \tau_0 \right)$. 

Lastly, we consider $H$. Note that $H$ can be written as
\begin{equation*}
    \begin{split}
        H &= \frac{1}{\sqrt{{\tilde{n}}}} \sum_{i=1}^{|I_2^C|-1}\left(  \epsilon_{(i+1)} - \epsilon_{(i)} \right) \left( u_{\hat{\delta}_{\tilde{n}_{(i+1)}}} - u_{\hat{\delta}_{\tilde{n}_{(i)}}}  \right) \\
        &= \frac{1}{\sqrt{{\tilde{n}}}} \Big( \epsilon_{(1)} (u_{\hat{\delta}_{\tilde{n}_{(1)}}} - u_{\hat{\delta}_{\tilde{n}_{(2)}}}) + \epsilon_{(2)} (2u_{\hat{\delta}_{\tilde{n}_{(2)}}} - u_{\hat{\delta}_{\tilde{n}_{(1)}}} - u_{\hat{\delta}_{\tilde{n}_{(3)}}}) + \cdots \\ &+ \epsilon_{(|I_2^C|-1)} (2u_{\hat{\delta}_{\tilde{n}_{(|I_2^C| - 1)}}} - u_{\hat{\delta}_{\tilde{n}_{(|I_2^C| - 2)}}} - u_{\hat{\delta}_{\tilde{n}_{(|I_2^C|)}}})
        + \epsilon_{(|I_2^C|)} (u_{\hat{\delta}_{\tilde{n}_{(|I_2^C|)}}} - u_{\hat{\delta}_{\tilde{n}_{(|I_2^C|-1)}}}) \Big) \\
        &:= \frac{1}{\sqrt{{\tilde{n}}}} \sum_{i=1}^{|I_2^C|} \epsilon_{(i)} a_{\hat{\delta}_{\tilde{n}_{(i)}}}  \\
        &:= \frac{1}{\sqrt{{\tilde{n}}}} \sum_{i={\tilde{n}}+1}^{2{\tilde{n}}} \epsilon_i a_{{\hat{\delta}_{\tilde{n}}},i},
    \end{split}
\end{equation*}
where $a_{{\hat{\delta}_{\tilde{n}}},i} = 0$ for each observation $i$ in the treatment group. Now, let $c \in \mathbb{R}^{d_X}$ be an arbitrary vector such that $||c||_2 = 1$. Up to $o_p(1)$ terms, we have
\begin{equation*}
    \begin{split}
         c^\top \frac{1}{\sqrt{{\tilde{n}}}} (\Delta X)^\top \Delta w &= \sum_{i=1}^{\tilde{n}} \frac{1}{\sqrt{{\tilde{n}}}} c^\top A_1 \left( \frac{\tilde{Z}^\top \tilde{Z}}{{\tilde{n}}}\right)^{-1} Z_i \eta_i + \sum_{i={\tilde{n}}+1}^{2{\tilde{n}}} \frac{1}{\sqrt{{\tilde{n}}}}  c^\top \epsilon_i a_{{\hat{\delta}_{\tilde{n}}},i} \\
        &= \xi_{{\tilde{n}},1} + \xi_{{\tilde{n}},2} + \cdots + \xi_{{\tilde{n}},2{\tilde{n}}}.
    \end{split}
\end{equation*}
Now, we consider the following $\sigma$-fields: $\mathcal{F}_{{\tilde{n}},1} = \sigma(Z_{1:{\tilde{n}}}, \eta_1)$,$\cdots, \mathcal{F}_{{\tilde{n}},{\tilde{n}}} = \sigma(Z_{1:{\tilde{n}}}, \eta_{1:{\tilde{n}}})$, $\mathcal{F}_{{\tilde{n}},{\tilde{n}}+1} = \sigma(Z_{1:2{\tilde{n}}}, \eta_{1:2{\tilde{n}}}, X_{1:2{\tilde{n}}},\epsilon_{{\tilde{n}}+1}), \cdots$, and $\mathcal{F}_{{\tilde{n}},2{\tilde{n}}} = \sigma(Z_{1:2{\tilde{n}}}, \eta_{1:2{\tilde{n}}}, X_{1:2{\tilde{n}}},\epsilon_{{\tilde{n}}+1:2{\tilde{n}}})$. For each ${\tilde{n}}$, it is easy to see that
\begin{equation*}
    \left\{ \sum_{j=1}^i \xi_{{\tilde{n}},j}, \mathcal{F}_{{\tilde{n}},i}, 1 \leq i \leq 2{\tilde{n}} \right\}
\end{equation*}
is a martingale. We now use \citeauthor{billingsley1995probability}'s (\citeyear{billingsley1995probability}) martingale central limit theorem. Note that using Assumption \ref{asm:errormeanvar}, we have
\begin{equation*}
    \begin{split}
        \sum_{i=1}^{\tilde{n}} \mathbb{E}(\xi_{{\tilde{n}},i}^2 \mid \mathcal{F}_{{\tilde{n}},i-1}) &=  \sum_{i=1}^{\tilde{n}} \mathbb{E}\left(\left( \frac{1}{\sqrt{{\tilde{n}}}} c^\top A_1 \left( \frac{\tilde{Z}^\top \tilde{Z}}{{\tilde{n}}}\right)^{-1} Z_i \eta_i \right)^2 \mid Z_{1:{\tilde{n}}}, \eta_{1:i-1}\right) \\
        &= \sigma_\eta^2 c^\top A_1 \left(\frac{\tilde{Z}^\top \tilde{Z}}{{\tilde{n}}}\right)^{-1} A_1^\top c \\
        &\overset{P}{\longrightarrow} c^\top A_1 \Sigma_\gamma A_1^\top c
    \end{split}
\end{equation*}
and
\begin{equation*}
    \begin{split}
         \sum_{i={\tilde{n}}+1}^{2{\tilde{n}}} \mathbb{E}(\xi_{{\tilde{n}},i}^2 \mid \mathcal{F}_{{\tilde{n}},i-1}) &=   \sum_{i={\tilde{n}}+1}^{2{\tilde{n}}} \mathbb{E}\left( \left( \frac{1}{\sqrt{{\tilde{n}}}}  c^\top \epsilon_i a_{{\hat{\delta}_{\tilde{n}}},i}\right)^2 \mid Z_{1:2{\tilde{n}}}, \eta_{1:2{\tilde{n}}}, X_{1:2{\tilde{n}}}, \epsilon_{{\tilde{n}}+1:i-1} \right) \\
        &= \sigma_\epsilon^2 \sum_{i={\tilde{n}}+1}^{2{\tilde{n}}} \frac{1}{{\tilde{n}}} (c^\top a_{{{\hat{\delta}_{\tilde{n}}}},i})^2.
    \end{split}
\end{equation*}
Conditional on $\hat{\delta}_{\tilde{n}}$ (i.e., $I_1$), it is easy to show that
\begin{equation*}
    \frac{1}{{\tilde{n}}} \sum_{i={\tilde{n}}+1}^{2{\tilde{n}}} (c^\top a_{{{\hat{\delta}_{\tilde{n}}}},i})^2  - 6 \bbP(Q < \tau_0) c^\top \Sigma_{u, {{\hat{\delta}_{\tilde{n}}}}} c = o_p(1)
\end{equation*}
under Assumptions \ref{asm:momentofxz} using the same method as for the term $P$. Following the proof of Lemmas \ref{lemma:cont} and \ref{lemma:uncond}, we have 
\begin{equation*}
    \frac{1}{{\tilde{n}}} \sum_{i={\tilde{n}}+1}^{2{\tilde{n}}} (c^\top a_{{\hat{\delta}_{\tilde{n}}},i})^2  \overset{P}{\longrightarrow} 6 \bbP(Q < \tau_0) c^\top \Sigma_{u} c,
\end{equation*}
whence
\begin{equation*}
    \begin{split}
         \sum_{i={\tilde{n}}+1}^{2{\tilde{n}}} \mathbb{E}(\xi_{{\tilde{n}},i}^2 \mid \mathcal{F}_{{\tilde{n}},i-1}) \overset{P}{\longrightarrow} 6 \bbP(Q < \tau_0) \sigma_\epsilon^2 c^\top \Sigma_{u} c.
    \end{split}
\end{equation*}
Therefore, the martingale central limit theorem and Cramer-Wold device gives us
\begin{equation*}
    \frac{1}{\sqrt{{\tilde{n}}}} (\Delta X)^\top \Delta w \overset{d}{\longrightarrow} \mathcal{N}(0, A_1 \Sigma_\gamma A_1^\top + 6 \bbP(Q < \tau_0) \sigma_\epsilon^2  \Sigma_u).
\end{equation*}
The asymptotic normality of $\sqrt{{\tilde{n}}}(\hat{\beta}_{\tilde{n}} - \beta_0)$ is now just a consequence of Slutsky's theorem. Concretely, we have
\begin{equation}
\label{eq:betahat_var}
    \sqrt{{\tilde{n}}}(\hat{\beta}_{\tilde{n}} - \beta_0) \overset{d}{\longrightarrow} \mathcal{N}\left(0, 
    \frac{1}{4\bbP(Q < \tau_0)^2} \Sigma_u^{-1} \zeta \Sigma_u^{-1} \right),
\end{equation}
where 
\begin{itemize}
    \item $\zeta = 4 (\bbP(Q < \tau_0))^2 \mathbb{E}\left(\ell'(\eta) u_0 w_0^\top \mid Q < \tau_0 \right) \Sigma_\gamma \mathbb{E}\left(\ell'(\eta) w_0 u_0^\top \mid Q < \tau_0 \right) + 6 \bbP(Q < \tau_0) \sigma_\epsilon^2 \Sigma_u$,
    \item $\Sigma_u = \mathbb{E}(\textrm{var}(X \mid {\eta}, Q < \tau_0) \mid Q < \tau_0)$,
    \item $u_0 = X - \mathbb{E}(X \mid \eta, Q < \tau_0)$,
    \item $w_0 = Z - \mathbb{E}(Z \mid \eta, Q < \tau_0)$.
\end{itemize}

To finish the proof, we need to establish the Lindeberg's condition for the martingale central limit theorem. A pair of sufficient conditions is $\sum_{i=1}^{{\tilde{n}}} \mathbb{E}(|\xi_{{\tilde{n}},i}|^3) \rightarrow 0$ and $\sum_{i={\tilde{n}}+1}^{2{\tilde{n}}} \mathbb{E}(|\xi_{{\tilde{n}},i}|^3) \rightarrow 0$ as ${\tilde{n}} \rightarrow \infty$, whence the Lyapunov's (and consequently Lindeberg's) condition is satisfied.

Observe that
\begin{align*}
    \sum_{i=1}^{{\tilde{n}}} \mathbb{E}(|\xi_{{\tilde{n}},i}|^3) &= \sum_{i=1}^{{\tilde{n}}} \mathbb{E}\left( \mathbb{E}\left( \left| \frac{1}{\sqrt{{\tilde{n}}}} c^\top A_1 \left( \frac{\tilde{Z}^\top \tilde{Z}}{{\tilde{n}}}\right)^{-1} Z_i \eta_i \right|^3 \mid Z_{1:{\tilde{n}}} \right) \right) \\
    &= \frac{1}{{\tilde{n}}^{3/2}} \sum_{i=1}^{\tilde{n}} \mathbb{E}\left( \left| c^\top A_1 \left( \frac{\tilde{Z}^\top \tilde{Z}}{{\tilde{n}}}\right)^{-1} Z_i \right|^3 \mathbb{E}(|\eta_i|^3 \mid Z_i)\right) \\
    &\leq \frac{\nu_4}{{\tilde{n}}^{1/2}} \mathbb{E}\left( \left|\left| \left( \frac{\tilde{Z}^\top \tilde{Z}}{{\tilde{n}}}\right)^{-1} Z_1 \right|\right|^3 \right) \\
    &\leq \frac{\nu_5}{{\tilde{n}}^{1/2}} \mathbb{E}\left( \left|\left| \left( \frac{\tilde{Z}^\top \tilde{Z}}{{\tilde{n}}}\right)^{-1} \right|\right|_{\textrm{op}}^3 ||Z_1||^3 \right) \\
    &= \frac{\nu_5}{{\tilde{n}}^{1/2}} \mathbb{E}\left( \left( \frac{1}{\lambda_{\min}\left( \frac{\tilde{Z}^\top \tilde{Z}}{{\tilde{n}}}\right)} \right)^3 ||Z_1||^3 \right) \\
    &\leq \frac{\nu_6}{{\tilde{n}}^{1/2}} \sqrt{\mathbb{E} \left( \left( \frac{1}{\lambda_{\min}\left( \frac{\tilde{Z}^\top \tilde{Z}}{{\tilde{n}}}\right)} \right)^6 \right)} \sqrt{\mathbb{E}\left( ||Z_1^3||^2 \right)} \\
    &\rightarrow 0
\end{align*}
under Assumptions \ref{asm:momentofxz}, \ref{asm:errormeanvar},  \ref{asm:eigenvalue} and \ref{asm:fbounded}. Here, $\nu_4, \nu_5, \nu_6$ are positive constant. Moreover,
\begin{align*}
    \sum_{i={\tilde{n}}+1}^{2{\tilde{n}}} \mathbb{E}(|\xi_{{\tilde{n}},i}|^3) &= \sum_{i={\tilde{n}}+1}^{2{\tilde{n}}} \mathbb{E}\left( \mathbb{E}\left( \left| \frac{1}{\sqrt{{\tilde{n}}}}  c^\top \epsilon_i a_{{\hat{\delta}_{\tilde{n}}},i}\right|^3 \mid Z_{1:2{\tilde{n}}}, \eta_{1:2{\tilde{n}}}, X_{1:2{\tilde{n}}} \right) \right) \\
    &= \sum_{i={\tilde{n}}+1}^{2{\tilde{n}}} \frac{1}{{\tilde{n}}^{3/2}} \mathbb{E}\left( \left| c^\top a_{{\hat{\delta}_{\tilde{n}}},i} \right|^3 \mathbb{E}(|\epsilon_i|^3 \mid X_i, \eta_i)\right) \\
    &\leq \frac{\nu_7}{{\tilde{n}}^{1/2}} \mathbb{E}\left( || u_{{\hat{\delta}_{\tilde{n}}}}^{3/2}||^2 \mid Q < \tau_0 \right) \\
    &\rightarrow 0
\end{align*}
using Assumptions \ref{asm:momentofxz} and \ref{asm:errormeanvar}, as well as the inequalities $|a-b|^3 \leq (|a| + |b|)^3 \leq 4 (|a|^3 + |b|^3)$. Here, $\nu_7$ is a positive constant. The proof is complete since we have verified the Lyapunov's condition.

\section{Proof of Theorem \ref{thm:main}}
\label{app:proof-thm}
Let $t_i = \mathds{1}_{Q_i \geq \tau_0}$ denote the treatment status of observation $i$, ${\tilde{n}}_1 = \sum_{i=2{\tilde{n}}+1}^{3{\tilde{n}}} t_i$ denote the number of observations in $I_3$ belonging to the treatment group, and ${\tilde{n}}_0 = {\tilde{n}} - {\tilde{n}}_1$ denote the number of observations in $I_3$ belonging to the control group. Moreover, for each $i \in I_3$, let $c(i) \in I_3$ be the index of the closest $\hat{\eta}_j$ from $\hat{\eta}_i$ such that $t_j = 1 - t_i$. Intuitively speaking, $c(i)$ represents the index of the closest observation (w.r.t. $\hat{\eta}$) from $i$ that belongs to the opposite side of the treatment group. 

Recall that Algorithm \ref{alg:est} involves matching each observation in the treatment group of $I_3$ to an observation in the control group of $I_3$ based on their $\hat{\eta}$ values. Following the notation of \citet{abadie2016matching}, we denote by $K_{{\hat{\delta}_{\tilde{n}}},i}$ the number of times observation $i$ is used as a match. Here, $\hat{\delta}_{\tilde{n}} = \hat{\gamma}_{\tilde{n}} - \gamma_0$ that is obtained from $I_1$. It is easy to see that
\begin{equation*}
    \sqrt{{\tilde{n}}}(\hat{\theta}_{\tilde{n}} - \theta_0) = (1) + (2) + (3) + (4),
\end{equation*}
where 
\begin{equation*}
    \begin{split}
        &(1) = \frac{\sqrt{{\tilde{n}}}}{{\tilde{n}}_1} \sum_{i=2{\tilde{n}}+1}^{3{\tilde{n}}} t_i \left( \alpha_0(X_i, \eta_i) - \mathbb{E}(\alpha_0(X, \eta) \mid Q \geq \tau_0)\right),\\
        &(2) = \frac{\sqrt{{\tilde{n}}}}{{\tilde{n}}_1} (\beta_0 - \hat{\beta}_{\tilde{n}})^\top \sum_{i=2{\tilde{n}}+1}^{3{\tilde{n}}} \left( t_i - (1-t_i) K_{{\hat{\delta}_{\tilde{n}}},i} \right) X_i,\\
        &(3) = \frac{\sqrt{{\tilde{n}}}}{{\tilde{n}}_1} \sum_{i=2{\tilde{n}}+1}^{3{\tilde{n}}} \left( t_i - (1-t_i) K_{{\hat{\delta}_{\tilde{n}}},i} \right) \epsilon_i,\\
        &(4) = \frac{\sqrt{{\tilde{n}}}}{{\tilde{n}}_1} \sum_{i=2{\tilde{n}}+1}^{3{\tilde{n}}} t_i \left( \ell(\eta_i) - \ell(\eta_{c(i)})\right).
    \end{split}
\end{equation*}
We begin by looking at (2). Note that we have
\begin{equation*}
    \begin{split}
        (2) &= \frac{\sqrt{{\tilde{n}}}}{{\tilde{n}}_1} (\beta_0 - \hat{\beta}_{\tilde{n}})^\top \sum_{i=2{\tilde{n}}+1}^{3{\tilde{n}}} \left( t_i - (1-t_i)K_{{\hat{\delta}_{\tilde{n}}},i} \right) X_i \\
        &= \left( \sqrt{{\tilde{n}}} (\beta_0 - \hat{\beta}_{\tilde{n}})\right)^\top \left( \frac{{\tilde{n}}}{{\tilde{n}}_1} \right) \left( \frac{1}{{\tilde{n}}} \sum_{i=2{\tilde{n}}+1}^{3{\tilde{n}}} t_i X_i - \frac{1}{{\tilde{n}}} \sum_{i=2{\tilde{n}}+1}^{3{\tilde{n}}} (1-t_i) K_{{\hat{\delta}_{\tilde{n}}},i} X_i \right).
    \end{split}
\end{equation*}
The first term in the product converges in distribution to $\mathcal{N}(0, \Sigma_\beta)$ established in Proposition \ref{prop:beta}, while the second term converges in probability to $1/\bbP(Q \geq \tau_0)$. Moreover, the first part of the third term converges in probability to $\mathbb{E}(tX) = \bbP(Q \geq \tau_0) \mathbb{E}(X \mid Q \geq \tau_0)$. The remainder term can be written as
\begin{equation*}
    \frac{1}{{\tilde{n}}} \sum_{i=2{\tilde{n}}+1}^{3{\tilde{n}}} (1-t_i) K_{{\hat{\delta}_{\tilde{n}}},i} X_i = \left(\frac{{\tilde{n}}_0}{{\tilde{n}}}\right) \left( \frac{1}{{\tilde{n}}_0} \sum_{\substack{i: t_i = 0 \\ 2{\tilde{n}}+1}}^{3{\tilde{n}}} K_{{\hat{\delta}_{\tilde{n}}},i} X_i \right).
\end{equation*}

As in the proof of Proposition \ref{prop:beta}, we first fix ${\hat{\delta}_{\tilde{n}}}$. Under Assumptions \ref{asm:momentofxz} and \ref{asm:abadie}, a slight modification to the proofs of Lemmas S.6, S.7 and S.10 of \citet{abadie2016matching} yields the following result, whose proof is omitted.
\begin{lemma}
\label{lemma:6710}
    Reorder the data in $I_3$ such that observations in the control group are first. Let $({\tilde{n}}_1, P_{0,2{\tilde{n}}+1}, \\\cdots, P_{0,2{\tilde{n}}+{\tilde{n}}_0})$ be the parameter of the distribution of $(K_{{\hat{\delta}_{\tilde{n}}},2{\tilde{n}}+1}, \cdots, K_{{\hat{\delta}_{\tilde{n}}},2{\tilde{n}}+{\tilde{n}}_0})$ given $t_{2{\tilde{n}}+1:3{\tilde{n}}}$ and $\hat{\eta}_{2{\tilde{n}}+1:2{\tilde{n}}+{\tilde{n}}_0}$. For $i \in \{2{\tilde{n}}+1, \cdots, 2{\tilde{n}}+{\tilde{n}}_0\}$, let $\hat{\eta}_{(i)}$'s be the order statistics for the $\hat{\eta}_i$'s. 
    Also, let $f_{i,{\hat{\delta}_{\tilde{n}}}}(\hat{\eta})$ ($F_{i,{\hat{\delta}_{\tilde{n}}}}(\hat{\eta})$) be the density (distribution function) of $\hat{\eta}:= \eta - Z^\top {\hat{\delta}_{\tilde{n}}}$ conditional on $t = i$, for $i \in \{0,1\}$. We have
    \begin{equation*}
        \begin{split}
            &\frac{1}{{\tilde{n}}_0} \sum_{i=2{\tilde{n}}+1}^{2{\tilde{n}}+{\tilde{n}}_0} X_i (K_{{\hat{\delta}_{\tilde{n}}},i} - {\tilde{n}}_1 P_{0,i}) = o_p(1),\\
            &\sum_{i=2{\tilde{n}}+1}^{2{\tilde{n}}+{\tilde{n}}_0} X_i \left( P_{0,i} - \frac{f_{1,{\hat{\delta}_{\tilde{n}}}}(\hat{\eta}_i)}{f_{0,{\hat{\delta}_{\tilde{n}}}}(\hat{\eta}_i)} \frac{F_{0,{\hat{\delta}_{\tilde{n}}}}(\hat{\eta}_{(i+1)}) - F_{0,{\hat{\delta}_{\tilde{n}}}}(\hat{\eta}_{(i-1)}) }{2}\right)  = o_p(1),\\
            &\sum_{i=2{\tilde{n}}+1}^{2{\tilde{n}}+{\tilde{n}}_0} X_i \frac{f_{1,{\hat{\delta}_{\tilde{n}}}}(\hat{\eta}_i)}{f_{0,{\hat{\delta}_{\tilde{n}}}}(\hat{\eta}_i)} \frac{F_{0,{\hat{\delta}_{\tilde{n}}}}(\hat{\eta}_{(i+1)}) - F_{0,{\hat{\delta}_{\tilde{n}}}}(\hat{\eta}_{(i-1)})}{2} -\mathbb{E}\left( \frac{f_{1,{\hat{\delta}_{\tilde{n}}}}(\hat{\eta})}{f_{0,{\hat{\delta}_{\tilde{n}}}}(\hat{\eta})} X \mid Q < \tau_0 \right) = o_p(1).
        \end{split}
    \end{equation*}
\end{lemma}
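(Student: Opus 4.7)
The plan is to adapt the proofs of Lemmas S.6, S.7, and S.10 of \citet{abadie2016matching} to the present setup. Three modifications are needed: matching is performed on $\hat{\eta} = \eta - Z^\top \hat{\delta}_{\tilde{n}}$ rather than on an estimated propensity score; the nuisance $\hat{\delta}_{\tilde{n}}$ comes from the independent sample $I_1$, so everything is developed conditionally on $\hat{\delta}_{\tilde{n}}$; and the uniform density bounds in Assumption \ref{asm:abadie} are used to un-condition at the end, in the spirit of Lemmas \ref{lemma:cont} and \ref{lemma:uncond}.

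For the first identity I would condition further on the treatment indicators $t_{2\tilde{n}+1:3\tilde{n}}$ and on the control $\hat{\eta}$-values $\hat{\eta}_{2\tilde{n}+1:2\tilde{n}+\tilde{n}_0}$. Given this information, the $\hat{\eta}$-values of the $\tilde{n}_1$ treated observations are i.i.d. draws from $f_{1,\hat{\delta}_{\tilde{n}}}$, so each treated observation independently picks its nearest control, and $(K_{\hat{\delta}_{\tilde{n}},i})_{i \in I_3^C}$ is multinomial with parameters $\tilde{n}_1$ and $(P_{0,i})$. Hence $\mathbb{E}(K_{\hat{\delta}_{\tilde{n}},i}) = \tilde{n}_1 P_{0,i}$ and $\mathrm{Var}(K_{\hat{\delta}_{\tilde{n}},i}) \leq \tilde{n}_1 P_{0,i}$. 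Boundedness of $X_i$ from Assumption \ref{asm:momentofxz} then gives a conditional variance for $\tfrac{1}{\tilde{n}_0}\sum X_i(K_{\hat{\delta}_{\tilde{n}},i} - \tilde{n}_1 P_{0,i})$ of order $O(\tilde{n}_1/\tilde{n}_0^{2}) = O(1/\tilde{n})$, and Chebyshev's inequality yields the $o_p(1)$ claim.

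For the second identity, the key observation is that $P_{0,i}$ equals the integral of $f_{1,\hat{\delta}_{\tilde{n}}}$ over the Voronoi cell $[(\hat{\eta}_{(i-1)} + \hat{\eta}_i)/2,\, (\hat{\eta}_i + \hat{\eta}_{(i+1)})/2]$ associated with the control $\hat{\eta}$-order statistics. Writing $f_{1,\hat{\delta}_{\tilde{n}}} = (f_{1,\hat{\delta}_{\tilde{n}}}/f_{0,\hat{\delta}_{\tilde{n}}}) \cdot f_{0,\hat{\delta}_{\tilde{n}}}$ and Taylor-expanding the density ratio around $\hat{\eta}_i$, valid by the Lipschitz-type regularity implicit in Assumption \ref{asm:abadie}, produces the claimed factorization with a remainder controlled by $(\hat{\eta}_{(i+1)} - \hat{\eta}_{(i-1)})^2$. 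Since the $X_i$'s are bounded and the sum of squared spacings of order statistics whose density is bounded and bounded away from zero is $O_p(\tilde{n}_0^{-1})$, the aggregated remainder is $o_p(1)$.

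For the third identity, the sum in question is a Riemann-type sum: after the probability integral transform the control $\hat{\eta}_i$'s become approximately uniform order statistics on $[0,1]$, so the weight $(F_{0,\hat{\delta}_{\tilde{n}}}(\hat{\eta}_{(i+1)}) - F_{0,\hat{\delta}_{\tilde{n}}}(\hat{\eta}_{(i-1)}))/2$ is on average $1/\tilde{n}_0$, and the sum converges to $\mathbb{E}(X f_{1,\hat{\delta}_{\tilde{n}}}(\hat{\eta})/f_{0,\hat{\delta}_{\tilde{n}}}(\hat{\eta}) \mid Q < \tau_0)$ by the law of large numbers, conditionally on $\hat{\delta}_{\tilde{n}}$. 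The main obstacle I anticipate is lifting all three conditional $o_p(1)$ statements to unconditional ones in a way that keeps the rates uniform in $\hat{\delta}_{\tilde{n}}$ throughout a shrinking neighborhood of $0$; this is handled by invoking the uniform bounds on $f_{0,\delta}$, $f_{1,\delta}$, and $f_{1,\delta}/f_{0,\delta}$ from Assumption \ref{asm:abadie} together with a dominated-convergence argument exactly analogous to the unconditioning steps used in Lemmas \ref{lemma:cont} and \ref{lemma:uncond}.
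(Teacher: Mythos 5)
Your proposal follows essentially the same route as the paper, which states this lemma without proof and attributes it to ``a slight modification of the proofs of Lemmas S.6, S.7 and S.10 of \citet{abadie2016matching}'' --- your three paragraphs are precisely the adaptations of S.6 (conditional multinomial structure of the $K_{\hat{\delta}_{\tilde{n}},i}$'s and a Chebyshev bound), S.7 (Voronoi-cell representation of $P_{0,i}$ and the density-ratio factorization with a squared-spacings remainder), and S.10 (the Riemann-sum/LLN step), carried out conditionally on $\hat{\delta}_{\tilde{n}}$ exactly as the paper intends. The only point worth noting is that your Taylor expansion of $f_{1,\hat{\delta}_{\tilde{n}}}/f_{0,\hat{\delta}_{\tilde{n}}}$ in the second display uses smoothness of the densities that Assumption \ref{asm:abadie} does not literally state (it is imported from \citet{abadie2016matching}), but this is a feature of the paper's omitted argument as well, not a defect specific to your write-up.
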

From Lemma \ref{lemma:6710}, conditional on $\hat{\delta}_{\tilde{n}}$, we have
\begin{equation*}
    \frac{1}{{\tilde{n}}} \sum_{i=2{\tilde{n}}+1}^{3{\tilde{n}}} (1-t_i) K_{{{\hat{\delta}_{\tilde{n}}}},i} X_i - \bbP(Q \geq \tau_0) \mathbb{E}\left( \frac{f_{1,{{\hat{\delta}_{\tilde{n}}}}}(\hat{\eta})}{f_{0,{{\hat{\delta}_{\tilde{n}}}}}(\hat{\eta})} X \mid Q < \tau_0 \right) = o_p(1),
\end{equation*}
where $\hat{\eta} = \eta - Z^\top {{\hat{\delta}_{\tilde{n}}}}$. We now prove the following lemma:
\vspace{2mm}
\begin{lemma}
\label{lem:cont3}
For any sequence $\{\hat{\delta}_{\tilde{n}}\}_{\tilde{n} \geq 1}$, where $||\hat \delta_{\tilde{n}}||_2 \leq 1$ for every $\tilde{n}$, that converges to $0$, we have 
$$\mathbb{E}\left( \frac{f_{1,{{\hat{\delta}_{\tilde{n}}}}}(\hat{\eta})}{f_{0,{{\hat{\delta}_{\tilde{n}}}}}(\hat{\eta})} X \mid Q < \tau_0 \right) \rightarrow \mathbb{E}\left( \frac{f_{1,{0}}({\eta})}{f_{0,{0}}({\eta})} X \mid Q < \tau_0 \right).$$
\end{lemma}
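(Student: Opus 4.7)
The plan is to mirror the argument used in Lemmas \ref{lemma:cont} and \ref{lem:cont2}, reducing the claim to a pointwise limit together with Lebesgue's dominated convergence theorem on the compact set $C = \bigcup_{\|\delta\|_2 \leq 1} \mathrm{supp}(\eta - Z^\top \delta)$ defined earlier in the proof of Proposition \ref{prop:beta}. First, by conditioning on $\hat{\eta} = \eta - Z^\top \hat{\delta}_{\tilde{n}}$ and invoking the tower property with the definitions of $g_{\hat{\delta}_{\tilde{n}}}$ and $f_{0,\hat{\delta}_{\tilde{n}}}$, I would rewrite the LHS in integral form: the $f_{0,\hat{\delta}_{\tilde{n}}}$ appearing in the density-ratio denominator cancels against the density factor from the expectation, leaving
\[
\mathbb{E}\left( \frac{f_{1,{\hat{\delta}_{\tilde{n}}}}(\hat{\eta})}{f_{0,{\hat{\delta}_{\tilde{n}}}}(\hat{\eta})} X \mid Q < \tau_0 \right) = \int_C f_{1,\hat{\delta}_{\tilde{n}}}(t)\, g_{\hat{\delta}_{\tilde{n}}}(t)\, \mathds{1}_{t \in \mathrm{supp}(\eta - Z^\top \hat{\delta}_{\tilde{n}})}\, dt,
\]
and analogously for the target with $\hat{\delta}_{\tilde{n}}$ replaced by $0$.

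Second, I would verify pointwise convergence of the integrand on $C$. For each fixed $t \in C$, three facts combine: (i) $g_{\hat{\delta}_{\tilde{n}}}(t) \to g_0(t)$ by the second half of Assumption \ref{asm:smoothness_conditional}; (ii) $f_{1,\hat{\delta}_{\tilde{n}}}(t) \to f_{1,0}(t)$ whenever $t \in \mathrm{supp}(\eta)$ by the second half of Assumption \ref{asm:abadie}; and (iii) $\mathds{1}_{t \in \mathrm{supp}(\eta - Z^\top \hat{\delta}_{\tilde{n}})} \to \mathds{1}_{t \in \mathrm{supp}(\eta)}$, handled exactly as in Lemma \ref{lemma:cont} by splitting into the cases $t \in \mathrm{supp}(\eta)$ and $t \in C \setminus \mathrm{supp}(\eta)$ and using Assumptions \ref{asm:momentofxz} and \ref{asm:errormeanvar} together with $\hat{\delta}_{\tilde{n}} \to 0$ to show that eventually $t \notin \mathrm{supp}(\eta - Z^\top \hat{\delta}_{\tilde{n}})$ in the second case.

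Third, I would produce a dominating function. Assumption \ref{asm:abadie} gives a uniform-in-$\tilde{n}$ upper bound on $f_{1,\hat{\delta}_{\tilde{n}}}$, Assumption \ref{asm:momentofxz} makes $\|g_{\hat{\delta}_{\tilde{n}}}(t)\|$ bounded by $\sup_{x \in \mathrm{supp}(X)} \|x\| < \infty$ uniformly in $t$ and $\tilde{n}$, and the indicator is bounded by $1$; since $C$ is compact, a constant multiple of $\mathds{1}_C$ dominates the integrand. Applying Lebesgue's DCT coordinatewise to $g_{\hat{\delta}_{\tilde{n}}}$ then delivers the claimed convergence.

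The only delicate point is the possibility that $t \in \mathrm{supp}(\eta - Z^\top \hat{\delta}_{\tilde{n}}) \setminus \mathrm{supp}(\eta)$ for infinitely many $\tilde{n}$, where the pointwise limit of $f_{1,\hat{\delta}_{\tilde{n}}}(t)$ is not directly covered by Assumption \ref{asm:abadie}; but on this set the limiting indicator is zero, so the integrand converges to $0$ pointwise, and the uniform boundedness from Assumption \ref{asm:abadie} still supplies the required domination for DCT. This is precisely the structure that made Lemmas \ref{lemma:cont} and \ref{lem:cont2} go through, so no new analytic machinery should be needed.
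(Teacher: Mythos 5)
Your proof is correct and follows essentially the same route as the paper's: rewrite both sides as integrals over the compact set $C$ (where the $f_{0,\hat{\delta}_{\tilde{n}}}$ in the denominator cancels against the conditional density of $\hat{\eta}$ given $Q<\tau_0$), verify pointwise convergence of the integrand via Assumptions \ref{asm:momentofxz}, \ref{asm:errormeanvar}, \ref{asm:smoothness_conditional} and \ref{asm:abadie} with the same two-case treatment of the support indicator as in Lemma \ref{lemma:cont}, and conclude by dominated convergence using the uniform bounds and compactness of $C$. The ``delicate point'' you raise at the end is already ruled out by your step (iii): for $t \notin \mathrm{supp}(\eta)$ the indicator vanishes for all sufficiently large $\tilde{n}$, so the stage-$\tilde{n}$ integrand is eventually zero there and no separate argument is needed.
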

\begin{proof}
Note that for any $\tilde{n}$, we have
\begin{align*}
&\mathbb{E}\left( \frac{f_{1,{{\hat{\delta}_{\tilde{n}}}}}(\hat{\eta})}{f_{0,{{\hat{\delta}_{\tilde{n}}}}}(\hat{\eta})} X \mid Q < \tau_0 \right) \\
&= \mathbb{E}\left( \frac{f_{1,{{\hat{\delta}_{\tilde{n}}}}}(\hat{\eta})}{f_{0,{{\hat{\delta}_{\tilde{n}}}}}(\hat{\eta})} \mathbb{E}(X \mid \hat{\eta}, Q < \tau_0) \mid Q < \tau_0 \right) \\
&= \int_C \frac{f_{1,{{\hat{\delta}_{\tilde{n}}}}}(t)}{f_{0,{{\hat{\delta}_{\tilde{n}}}}}(t)} \mathbb{E}(X \mid \eta - Z^\top \hat{\delta}_{\tilde{n}}= t ) f_{\eta - Z^\top \hat{\delta}_{\tilde{n}} \mid Q < \tau_0}(t) \mathds{1}_{t \in \textrm{supp}(\eta - Z^\top {\hat{\delta}_{\tilde{n}}})} dt
\end{align*}
and
\begin{align*}
&\mathbb{E}\left( \frac{f_{1,{0}}({\eta})}{f_{0,{0}}({\eta})} X \mid Q < \tau_0 \right) \\
&= \mathbb{E}\left( \frac{f_{1,{0}}({\eta})}{f_{0,{0}}({\eta})} \mathbb{E}(X \mid \eta, Q < \tau_0) \mid Q < \tau_0 \right) \\
&= \int_C \frac{f_{1,{{0}}}(t)}{f_{0,{{0}}}(t)} \mathbb{E}(X \mid \eta = t) f_{\eta \mid Q < \tau_0}(t) \mathds{1}_{t \in \textrm{supp}(\eta)} dt.
\end{align*}
Using the same method as in Lemma \ref{lemma:cont}, the conclusion follows via Lebesgue's DCT under Assumptions 1, 2, 3 and 6 using the fact that $C$ is compact. Here, we define $0/0 = 0$.
\end{proof}
From here, an argument similar to Lemma \ref{lemma:uncond} yields
\begin{equation*}
    \frac{1}{{\tilde{n}}} \sum_{i=2{\tilde{n}}+1}^{3{\tilde{n}}} (1-t_i) K_{{\hat{\delta}_{\tilde{n}}},i} X_i \overset{P}{\longrightarrow} \bbP(Q \geq \tau_0) \mathbb{E}\left( \frac{f_{1}({\eta})}{f_{0}({\eta})} X \mid Q < \tau_0 \right),
\end{equation*}
where $f_0(\eta) := f_{0,0}(\eta)$ and $f_1(\eta) := f_{1,0}(\eta)$. Thus, up to $o_p(1)$ terms, we can write $(2) = \left( \sqrt{{\tilde{n}}} (\hat{\beta}_{\tilde{n}} - \beta_0)\right)^\top A_2,$ where 
\begin{equation*}
A_2 = \mathbb{E}\left( \frac{f_1({\eta})}{f_0({\eta})} X \mid Q < \tau_0 \right) - \mathbb{E}\left( X \mid Q \geq \tau_0 \right).
\end{equation*}
Following the derivations in Proposition \ref{prop:beta}, we can again rewrite (2) as
\begin{equation*}
    (2) = \sum_{i=1}^{\tilde{n}} \frac{1}{\sqrt{{\tilde{n}}}} A_3^\top A_1 \left( \frac{\tilde{Z}^\top \tilde{Z}}{{\tilde{n}}}\right)^{-1} Z_i \eta_i + \sum_{i={\tilde{n}}+1}^{2{\tilde{n}}} \frac{1}{\sqrt{{\tilde{n}}}} A_3^\top \epsilon_i a_{{\hat{\delta}_{\tilde{n}}},i}
\end{equation*}
up to $o_p(1)$ terms, where
\begin{equation*}
    A_3 = \frac{1}{2\bbP(Q < \tau_0)} \Sigma_u^{-1} A_2.
\end{equation*}

We now consider (4), which can be decomposed into
\begin{equation*}
    \begin{split}
        (4) = \frac{\sqrt{{\tilde{n}}}}{{\tilde{n}}_1} \sum_{i=2{\tilde{n}}+1}^{3{\tilde{n}}} t_i \left( \ell(\eta_i) - \ell(\eta_{c(i)})\right)
        = (5) + (6) - (7),
    \end{split}
\end{equation*}        
where
\begin{align*}
    (5) &= \frac{\sqrt{{\tilde{n}}}}{{\tilde{n}}_1} \sum_{i=2{\tilde{n}}+1}^{3{\tilde{n}}} t_i \left( \ell(\hat{\eta}_i) - \ell(\hat{\eta}_{c(i)})\right), \\
    (6) &= \frac{\sqrt{{\tilde{n}}}}{{\tilde{n}}_1} \sum_{i=2{\tilde{n}}+1}^{3{\tilde{n}}} t_i \left( \ell(\eta_i) - \ell(\hat{\eta}_{i})\right), \\
    (7) &= \frac{\sqrt{{\tilde{n}}}}{{\tilde{n}}_1} \sum_{i=2{\tilde{n}}+1}^{3{\tilde{n}}} t_i \left( \ell(\eta_{c(i)}) - \ell(\hat{\eta}_{c(i)})\right).
\end{align*}     

As usual, we condition on ${\hat{\delta}_{\tilde{n}}}$ (i.e., $I_1$). Following the proof of the first part of Proposition 1 in \citet{abadie2016matching}, we can show that (5) is $o_p(1)$ under Assumption \ref{asm:abadie}. Now, observe that (6) and (7) can be respectively written as
\begin{equation*}
    (6) = \frac{\sqrt{{\tilde{n}}}}{{\tilde{n}}_1} \sum_{i=2{\tilde{n}}+1}^{3{\tilde{n}}} t_i \left( \eta_i - \hat{\eta}_{i}\right)\ell'(\hat{\eta}_i) + \frac{\sqrt{{\tilde{n}}}}{{\tilde{n}}_1} \sum_{i=2{\tilde{n}}+1}^{3{\tilde{n}}} t_i \left( \eta_i - \hat{\eta}_{i}\right)^2 \ell''(\tilde{\eta}_i).
\end{equation*}
and
\begin{equation*}
    (7) = \frac{\sqrt{{\tilde{n}}}}{{\tilde{n}}_1} \sum_{i=2{\tilde{n}}+1}^{3{\tilde{n}}} t_i \left( \eta_{c(i)} - \hat{\eta}_{c(i)}\right)\ell'(\hat{\eta}_{c(i)}) + \frac{\sqrt{{\tilde{n}}}}{{\tilde{n}}_1} \sum_{i=2{\tilde{n}}+1}^{3{\tilde{n}}} t_i \left( \eta_{c(i)} - \hat{\eta}_{c(i)}\right)^2 \ell''(\tilde{\eta}_{c(i)}),
\end{equation*}
where $\tilde{\eta}_i$ lies between $\eta_i$ and $\hat{\eta}_i$ and $\tilde{\eta}_{c(i)}$ lies between $\eta_{c(i)}$ and $\hat{\eta}_{c(i)}$. It is easy to see that the second summands of (6) and (7) are $o_p(1)$ under Assumptions \ref{asm:momentofxz} and \ref{asm:fbounded}. Therefore, omitting $o_p(1)$ terms, we can rewrite (4) as
\begin{equation*}
    \begin{split}
        (4) &= \frac{\sqrt{{\tilde{n}}}}{{\tilde{n}}_1} \sum_{i=2{\tilde{n}}+1}^{3{\tilde{n}}} (t_i - (1 - t_i) K_{{\hat{\delta}_{\tilde{n}}}, i}) (\eta_i - \hat{\eta}_i) \ell'(\hat{\eta}_i) \\
        &= (\sqrt{{\tilde{n}}}(\hat{\gamma}_{\tilde{n}} - \gamma_0))^\top \left( \frac{{\tilde{n}}}{{\tilde{n}}_1}\right) \left( \frac{1}{{\tilde{n}}} \sum_{i=2{\tilde{n}}+1}^{3{\tilde{n}}} (t_i - (1 - t_i) K_{{\hat{\delta}_{\tilde{n}}}, i}) Z_i \ell'(\hat{\eta}_i)\right).
    \end{split}
\end{equation*}
Under Assumptions \ref{asm:momentofxz}, \ref{asm:fbounded} and  \ref{asm:abadie}, conditional on $\hat{\delta}_{\tilde{n}}$, we have
\begin{align*}
   &\left( \frac{{\tilde{n}}}{{\tilde{n}}_1}\right) \left( \frac{1}{{\tilde{n}}} \sum_{i=2{\tilde{n}}+1}^{3{\tilde{n}}} (t_i - (1 - t_i) K_{{\hat{\delta}_{\tilde{n}}}, i}) Z_i \ell'(\hat{\eta}_i)\right) \\ &- \mathbb{E}\left(Z\ell'({\hat{\eta}}) \mid Q \geq \tau_0 \right) - \mathbb{E}\left( \frac{f_{1,{\hat{\delta}_{\tilde{n}}}}({\hat{\eta}})}{f_{0,{\hat{\delta}_{\tilde{n}}}}({\hat{\eta}})} Z \ell'({\hat{\eta}}) \mid Q < \tau_0 \right) = o_p(1),
\end{align*}
where $\hat{\eta} = {\eta} - Z^\top {\hat{\delta}_{\tilde{n}}}$, using a similar result to Lemma \ref{lemma:6710}. An argument similar to Lemmas \ref{lemma:cont} and \ref{lemma:uncond} yields
\begin{align*}
   \left( \frac{{\tilde{n}}}{{\tilde{n}}_1}\right) \left( \frac{1}{{\tilde{n}}} \sum_{i=2{\tilde{n}}+1}^{3{\tilde{n}}} (t_i - (1 - t_i) K_{{\hat{\delta}_{\tilde{n}}}, i}) Z_i \ell'(\hat{\eta}_i)\right) \overset{P}{\longrightarrow} \\ \mathbb{E}\left(Z\ell'({{\eta}}) \mid Q \geq \tau_0 \right) - \mathbb{E}\left( \frac{f_1({{\eta}})}{f_0({{\eta}})} Z \ell'({{\eta}}) \mid Q < \tau_0 \right)
\end{align*}
under Assumptions \ref{asm:momentofxz}, \ref{asm:errormeanvar}, \ref{asm:smoothness_conditional}, \ref{asm:fbounded} and \ref{asm:abadie}. 
Up to $o_p(1)$ terms, we can thus rewrite $(4)$ as
\begin{equation*}
    (4) = \sum_{i=1}^{\tilde{n}} \frac{1}{\sqrt{{\tilde{n}}}} A_4^\top  \left( \frac{\tilde{Z}^\top \tilde{Z}}{{\tilde{n}}}\right)^{-1} Z_i \eta_i,
\end{equation*}
where
\begin{equation*}
    A_4 = \mathbb{E}\left(Z\ell'({\eta}) \mid Q \geq \tau_0 \right) - \mathbb{E}\left( \frac{f_1({\eta})}{f_0({\eta})} Z \ell'({\eta}) \mid Q < \tau_0 \right).
\end{equation*}

Now, we are ready to establish the asymptotic normality of $\hat{\theta}_{\tilde{n}}$. Ignoring $o_p(1)$ terms, we have
\begin{equation*}
    \begin{split}
        &\sqrt{{\tilde{n}}}(\hat{\theta}_{\tilde{n}} - \theta_0) \\
        &= \sum_{i=1}^{\tilde{n}} \frac{1}{\sqrt{{\tilde{n}}}} A_5^\top \left( \frac{\tilde{Z}^\top \tilde{Z}}{{\tilde{n}}}\right)^{-1} Z_i \eta_i + \sum_{i={\tilde{n}}+1}^{2{\tilde{n}}}  \frac{1}{\sqrt{{\tilde{n}}}} A_3^\top \epsilon_i a_{{\hat{\delta}_{\tilde{n}}},i}\\  
        &+  \sum_{i=2{\tilde{n}}+1}^{3{\tilde{n}}} \frac{\sqrt{{\tilde{n}}}}{{\tilde{n}}_1} t_i \left( \alpha_0(X_i, \eta_i) - \mathbb{E}(\alpha_0(X, \eta) \mid Q \geq \tau_0)\right)
        + \sum_{i=2{\tilde{n}}+1}^{3{\tilde{n}}} \frac{\sqrt{{\tilde{n}}}}{{\tilde{n}}_1} \left( t_i - (1-t_i) K_{{\hat{\delta}_{\tilde{n}}},i} \right) \epsilon_i \\
    &= \xi_{{\tilde{n}},1} + \cdots + \xi_{{\tilde{n}},{\tilde{n}}} + \xi_{{\tilde{n}},{\tilde{n}}+1} + \cdots + \xi_{{\tilde{n}},2{\tilde{n}}} + \xi_{{\tilde{n}},2{\tilde{n}}+1} + \cdots + \xi_{{\tilde{n}},3{\tilde{n}}} +
    \xi_{{\tilde{n}},3{\tilde{n}}+1} + \cdots + \xi_{{\tilde{n}},4{\tilde{n}}},
    \end{split}
\end{equation*}
where $A_5 = A_1^\top A_3 + A_4$. 

Consider the following $\sigma$-fields: $\mathcal{F}_{{\tilde{n}},1} = \sigma(Z_{1:{\tilde{n}}}, \eta_1)$,$\cdots, \mathcal{F}_{{\tilde{n}},{\tilde{n}}} = \sigma(Z_{1:{\tilde{n}}}, \eta_{1:{\tilde{n}}})$,\\ $\mathcal{F}_{{\tilde{n}},{\tilde{n}}+1} = \sigma(Z_{1:2{\tilde{n}}}, \eta_{1:2{\tilde{n}}}, X_{1:2{\tilde{n}}},\epsilon_{{\tilde{n}}+1}), \cdots$, $\mathcal{F}_{{\tilde{n}},2{\tilde{n}}} = \sigma(Z_{1:2{\tilde{n}}}, \eta_{1:2{\tilde{n}}}, X_{1:2{\tilde{n}}},\epsilon_{{\tilde{n}}+1:2{\tilde{n}}})$,\\$\mathcal{F}_{{\tilde{n}},2{\tilde{n}}+1} = \sigma(Z_{1:2{\tilde{n}}}, \eta_{1:2{\tilde{n}}+1}, X_{1:2{\tilde{n}}+1},\epsilon_{{\tilde{n}}+1:2{\tilde{n}}}),\cdots,\mathcal{F}_{{\tilde{n}},3{\tilde{n}}} = \sigma(Z_{1:2{\tilde{n}}}, \eta_{1:3{\tilde{n}}}, X_{1:3{\tilde{n}}},\epsilon_{{\tilde{n}}+1:2{\tilde{n}}})$,\\
$\mathcal{F}_{{\tilde{n}},3{\tilde{n}}+1} = \sigma(Z_{1:3{\tilde{n}}}, \eta_{1:3{\tilde{n}}}, X_{1:3{\tilde{n}}},\epsilon_{{\tilde{n}}+1:2{\tilde{n}}+1}),\cdots,\mathcal{F}_{{\tilde{n}},4{\tilde{n}}} = \sigma(Z_{1:3{\tilde{n}}}, \eta_{1:3{\tilde{n}}}, X_{1:3{\tilde{n}}},\epsilon_{{\tilde{n}}+1:3{\tilde{n}}})$. For each ${\tilde{n}}$, it is easy to see that
\begin{equation*}
    \left\{ \sum_{j=1}^i \xi_{{\tilde{n}},j}, \mathcal{F}_{{\tilde{n}},i}, 1 \leq i \leq 4{\tilde{n}} \right\}
\end{equation*}
is a martingale. We now use \citeauthor{billingsley1995probability}'s (\citeyear{billingsley1995probability}) martingale central limit theorem. Note that using Assumption \ref{asm:errormeanvar}, we have
\begin{equation*}
    \begin{split}
        \sum_{i=1}^{\tilde{n}} \mathbb{E}(\xi_{{\tilde{n}},i}^2 \mid \mathcal{F}_{{\tilde{n}},i-1}) &=  \sum_{i=1}^{\tilde{n}} \mathbb{E}\left(\left( \frac{1}{\sqrt{{\tilde{n}}}} A_5^\top \left( \frac{\tilde{Z}^\top \tilde{Z}}{{\tilde{n}}}\right)^{-1} Z_i \eta_i \right)^2 \mid Z_{1:{\tilde{n}}}, \eta_{1:i-1}\right) \\
        &= \sigma_\eta^2 A_5^\top \left(\frac{\tilde{Z}^\top \tilde{Z}}{{\tilde{n}}}\right)^{-1} A_5 \\
        &\overset{P}{\longrightarrow} A_5^\top \Sigma_\gamma A_5
    \end{split}
\end{equation*}
and
\begin{equation*}
    \begin{split}
         \sum_{i={\tilde{n}}+1}^{2{\tilde{n}}} \mathbb{E}(\xi_{{\tilde{n}},i}^2 \mid \mathcal{F}_{{\tilde{n}},i-1}) &=   \sum_{i={\tilde{n}}+1}^{2{\tilde{n}}} \mathbb{E}\left( \left( \frac{1}{\sqrt{{\tilde{n}}}}  A_3^\top \epsilon_i a_{{\hat{\delta}_{\tilde{n}}},i}\right)^2 \mid Z_{1:2{\tilde{n}}}, \eta_{1:2{\tilde{n}}}, X_{1:2{\tilde{n}}}, \epsilon_{{\tilde{n}}+1:i-1} \right) \\
        &= \sigma_\epsilon^2 \sum_{i={\tilde{n}}+1}^{2{\tilde{n}}} \frac{1}{{\tilde{n}}} (A_3^\top a_{{{\hat{\delta}_{\tilde{n}}}},i})^2.
    \end{split}
\end{equation*}
Conditional on ${{\hat{\delta}_{\tilde{n}}}}$ (i.e., $I_1$), it is easy to show that
\begin{equation*}
    \frac{1}{{\tilde{n}}} \sum_{i={\tilde{n}}+1}^{2{\tilde{n}}} (A_3^\top a_{{{\hat{\delta}_{\tilde{n}}}},i})^2  - 6 \bbP(Q < \tau_0) A_3^\top \Sigma_{u, {{\hat{\delta}_{\tilde{n}}}}} A_3 = o_p(1)
\end{equation*}
under Assumptions \ref{asm:momentofxz} using the same method as for the term $P$ in the proof of Proposition \ref{prop:beta}. Following the proof of Lemma \ref{lemma:uncond}, we have 
\begin{equation*}
    \frac{1}{{\tilde{n}}} \sum_{i={\tilde{n}}+1}^{2{\tilde{n}}} (A_3^\top a_{{\hat{\delta}_{\tilde{n}}},i})^2  \overset{P}{\longrightarrow} 6 \bbP(Q < \tau_0) A_3^\top \Sigma_{u} A_3,
\end{equation*}
whence
\begin{equation*}
    \begin{split}
         \sum_{i={\tilde{n}}+1}^{2{\tilde{n}}} \mathbb{E}(\xi_{{\tilde{n}},i}^2 \mid \mathcal{F}_{{\tilde{n}},i-1}) \overset{P}{\longrightarrow} 6 \bbP(Q < \tau_0) \sigma_\epsilon^2 A_3^\top \Sigma_{u} A_3.
    \end{split}
\end{equation*}
Moreover, we have
\begin{equation*}
    \begin{split}
        &\sum_{i=2{\tilde{n}}+1}^{3{\tilde{n}}} \mathbb{E}(\xi_{{\tilde{n}},i}^2 \mid \mathcal{F}_{{\tilde{n}},i-1}) \\ &= \sum_{i=2{\tilde{n}}+1}^{3{\tilde{n}}} \mathbb{E}\left(\left(\frac{\sqrt{{\tilde{n}}}}{{\tilde{n}}_1} t_i \left( \alpha_0(X_i, \eta_i) - \mathbb{E}(\alpha_0(X, \eta) \mid Q \geq \tau_0)\right)\right)^2 \mid Z_{1:2{\tilde{n}}}, \eta_{1:i-1}, X_{1:i-1}, \epsilon_{{\tilde{n}}+1:2{\tilde{n}}}\right) \\
        &= \left(\frac{{\tilde{n}}}{{\tilde{n}}_1} \right)^2 \textrm{var}(\alpha_0(X, \eta) \mid Q \geq \tau_0) \bbP(Q \geq \tau_0) \ \overset{P}{\longrightarrow} \ \frac{\textrm{var}(\alpha(X, \eta) \mid Q \geq \tau_0)}{\bbP(Q \geq \tau_0)} \,.
    \end{split}
\end{equation*}
and
\begin{equation*}
    \begin{split}
        &\sum_{i=3{\tilde{n}}+1}^{4{\tilde{n}}} \mathbb{E}(\xi_{{\tilde{n}},i}^2 \mid \mathcal{F}_{{\tilde{n}},i-1}) \\
        &= \sum_{i=2{\tilde{n}}+1}^{3{\tilde{n}}} \mathbb{E}\left( \left( \frac{\sqrt{{\tilde{n}}}}{{\tilde{n}}_1} \left( t_i - (1-t_i) K_{{\hat{\delta}_{\tilde{n}}},i} \right) \epsilon_i\right)^2 \mid Z_{1:3{\tilde{n}}}, \eta_{1:3{\tilde{n}}}, X_{1:3{\tilde{n}}}, \epsilon_{{\tilde{n}}+1:i-1}  \right) \\
    &= \left( \frac{{\tilde{n}}}{{\tilde{n}}_1} \right)^2 \sigma_\epsilon^2 \left( \frac{1}{{\tilde{n}}} \sum_{i=2{\tilde{n}}+1}^{3{\tilde{n}}} \left( t_i - (1-t_i) K_{{\hat{\delta}_{\tilde{n}}},i} \right)^2 \right) \\
    &\overset{P}{\longrightarrow} \sigma_\epsilon^2 \left( \frac{2}{\bbP(Q \geq \tau_0)} + \frac{3}{2 \bbP(Q < \tau_0)}  \mathbb{E}\left( \left(\frac{f_1(\eta)}{f_0(\eta)} \right)^2 \mid Q < \tau_0 \right) \right).
    \end{split}
\end{equation*}
In order to derive the last line, note that
\begin{equation*}
    \begin{split}
       \frac{1}{{\tilde{n}}} \sum_{i=2{\tilde{n}}+1}^{3{\tilde{n}}} \left( t_i - (1-t_i)K_{{\hat{\delta}_{\tilde{n}}}, i} \right)^2 &= \frac{1}{{\tilde{n}}} \sum_{i=2{\tilde{n}}+1}^{\tilde{n}} (t_i^2 + (1-t_i)^2 K_{{\hat{\delta}_{\tilde{n}}}, i}^2) \\
    &= \frac{1}{{\tilde{n}}} \sum_{i=2{\tilde{n}}+1}^{3{\tilde{n}}} (t_i + (1-t_i) K_{{\hat{\delta}_{\tilde{n}}}, i}^2) \\
    &= \frac{1}{{\tilde{n}}} \sum_{i=2{\tilde{n}}+1}^{3{\tilde{n}}}  t_i + \frac{1}{{\tilde{n}}} \sum_{i=2{\tilde{n}}+1}^{3{\tilde{n}}} (1-t_i) K_{{\hat{\delta}_{\tilde{n}}}, i}^2. 
    \end{split}
\end{equation*}
The first term clearly converges in probability to $P(Q \geq \tau)$. Also, according to Lemma S.11 of \citet{abadie2016matching}, conditional on ${\hat{\delta}_{\tilde{n}}}$ (i.e., $I_1$),
\begin{equation*}
    \begin{split}
         \frac{1}{{\tilde{n}}} \sum_{\substack{i: t_i = 0 \\ 2{\tilde{n}}+1}}^{3{\tilde{n}}} K_{{\hat{\delta}_{\tilde{n}}}, i}^2  &= \left(\frac{{\tilde{n}}_0}{{\tilde{n}}}\right) \left( \frac{1}{{\tilde{n}}_0} \sum_{\substack{i: t_i = 0 \\ 2{\tilde{n}}+1}}^{3{\tilde{n}}} K_{{\hat{\delta}_{\tilde{n}}},i}^2 \right) \\
        &= \bbP(Q \geq \tau_0) + \frac{3}{2} \frac{(\bbP(Q \geq \tau_0))^2}{\bbP(Q < \tau_0)} \mathbb{E}\left( \left(\frac{f_{1,{\hat{\delta}_{\tilde{n}}}}(\hat{\eta})}{f_{0,{\hat{\delta}_{\tilde{n}}}}(\hat{\eta})} \right)^2  \mid Q < \tau_0 \right) + o_p(1),
    \end{split}
\end{equation*}
where $\hat{\eta} = \eta - Z^\top {\hat{\delta}_{\tilde{n}}}$, whence the conclusion immediately follows under Assumption \ref{asm:abadie} by an argument similar to Lemmas \ref{lemma:cont} and \ref{lemma:uncond} under Assumptions 1, 2, 3 and 6. 

Therefore, an application of the martingale central limit theorem \citep{billingsley1995probability} gives us
\begin{equation*}
    \sqrt{{\tilde{n}}}(\hat{\theta}_{\tilde{n}} - \theta_0) \overset{d}{\longrightarrow} \mathcal{N}\left(0, \sigma_\theta^2 \right),
\end{equation*}
where
\begin{equation}
    \label{eq:thetavar}
    \sigma^2_\theta = A_5^\top \Sigma_\gamma A_5 + 6\mathbb{P}(Q < \tau_0) \sigma_\epsilon^2 A_3^\top \Sigma_u A_3 + B + C  \,,
\end{equation}
with 
\begin{align*}
    A_3 & = \frac{1}{2\mathbb{P}(Q < \tau_0)} \Sigma_u^{-1} \left( \mathbb{E}\left( \frac{f_1({\eta})}{f_0({\eta})} X \mid Q < \tau_0 \right) - \mathbb{E}\left(X \mid Q \geq \tau_0\right) \right), \\
    A_5 & =  2\mathbb{P}(Q < \tau_0) \mathbb{E}(\ell'(\eta)w_0 u_0^\top \mid Q < \tau_0) A_3 + \mathbb{E}\left(Z\ell'({\eta}) \mid Q \geq \tau_0\right) - \mathbb{E}\left( \frac{f_1({\eta})}{f_0({\eta})} Z \ell'({\eta}) \mid Q < \tau_0 \right), \\
    B & = \frac{\textrm{var}(\alpha_0(X,\eta) \mid Q \geq \tau_0)}{\bbP(Q \geq \tau_0)}, \\
    C & =   \sigma_\epsilon^2 \left( \frac{2}{\bbP(Q \geq \tau_0)} + \frac{3}{2 \bbP(Q < \tau_0)}  \mathbb{E}\left( \left(\frac{f_1(\eta)}{f_0(\eta)} \right)^2 \mid Q < \tau_0 \right) \right)\,.
\end{align*}
To complete the proof, we need to show that the Lindeberg's condition for the martingale central limit theorem is satisfied. Following the proof of Proposition \ref{prop:beta}, we have
$\sum_{i=1}^{{\tilde{n}}} \mathbb{E}(|\xi_{{\tilde{n}},i}|^3) \rightarrow 0$ and $\sum_{i={\tilde{n}}+1}^{2{\tilde{n}}} \mathbb{E}(|\xi_{{\tilde{n}},i}|^3) \rightarrow 0$ as ${\tilde{n}} \rightarrow \infty$ under Assumptions \ref{asm:momentofxz}, \ref{asm:errormeanvar}, \ref{asm:eigenvalue} and \ref{asm:fbounded}, whence the Lyapunov's (and consequently Lindeberg's) condition is satisfied. 

Moreover, we have $\sum_{i=2{\tilde{n}}+1}^{3{\tilde{n}}} \mathbb{E}(|\xi_{{\tilde{n}},i}|^3) \rightarrow 0$ provided $\mathbb{E}\left( \left| \alpha_0(X, \eta) - \mathbb{E}(\alpha_0(X, \eta) \mid Q \geq \tau_0)\right|^3 \right)$ is finite, which follows from Assumption \ref{asm:fbounded}. Lastly, we have $\sum_{i=3{\tilde{n}}+1}^{4{\tilde{n}}} \mathbb{E}(|\xi_{{\tilde{n}},i}|^3) \rightarrow 0$ due to Assumption \ref{asm:errormeanvar} and Lemma S.8 of \citet{abadie2016matching} on the uniform boundedness of the moments of $K_{{\hat{\delta}_{\tilde{n}}}, i}$. This finishes the proof.

\newpage

\end{document}